\newtcolorbox{mybox}[1]{%
    tikznode boxed title,
    enhanced,
    arc=0mm,
    interior style={white},
    attach boxed title to top center= {yshift=-\tcboxedtitleheight/2},
    fonttitle=\bfseries,
    colbacktitle=white,coltitle=black,
    boxed title style={size=normal,colframe=white,boxrule=0pt},
    title={#1}}
\begin{document}

\newcommand\relatedversion{}
\renewcommand\relatedversion{\thanks{The full version of the paper can be accessed at \protect\url{https://arxiv.org/abs/1902.09310}}} % Replace URL with link to full paper or comment out this line

\title{\Large A Framework for Computing Greedy Clique Cover}
\author{Ahammed Ullah\thanks{Purdue University}
%\and John Doe\thanks{Purdue University}}
}
%\date{}

\maketitle

% Copyright Statement
% When submitting your final paper to a SIAM proceedings, it is requested that you include
% the appropriate copyright in the footer of the paper.  The copyright added should be
% consistent with the copyright selected on the copyright form submitted with the paper.
% Please note that "20XX" should be changed to the year of the meeting.

% Default Copyright Statement
%\fancyfoot[R]{\scriptsize{Copyright \textcopyright\ 20XX by SIAM\\
%Unauthorized reproduction of this article is prohibited}}

% Depending on which copyright you agree to when you sign the copyright form, the copyright
% can be changed to one of the following after commenting out the default copyright statement
% above.

%\fancyfoot[R]{\scriptsize{Copyright \textcopyright\ 20XX\\
%Copyright for this paper is retained by authors}}

%\fancyfoot[R]{\scriptsize{Copyright \textcopyright\ 20XX\\
%Copyright retained by principal author's organization}}

%\pagenumbering{arabic}
%\setcounter{page}{1}%Leave this line commented out.

\begin{abstract} \small\baselineskip=9pt 
Structural parameters of graph (such as degeneracy and arboricity) had rarely been considered when designing algorithms for \emph{(edge) clique cover} problems. Taking degeneracy of graph into account, we present a greedy framework and two fixed-parameter tractable algorithms for \emph{clique cover} problems. We introduce a set theoretic concept and demonstrate its use in the computations of different objectives of \emph{clique cover}. Furthermore, we show efficacy of our algorithms in practice.
%\clearpage
\end{abstract}

\section{Introduction}
\label{sec:Intro}

A set of cliques $\C$ is an \emph{(edge) clique cover} of an undirected graph if every edge of the graph is contained in at least one of the cliques in $\C$. Computing a \emph{clique cover} with smallest number of cliques is an NP-hard problem \cite{orlin1977contentment}, and the problem is constant-factor inapproximable unless $P = NP$ \cite{lund1994hardness}. \emph{Clique cover} problems abstract an abundance of real-world problems, arising in resource scheduling \cite{roberts1985applications}, causal structure learning \cite{markham2020measurement}, stringology \cite{helling2018constructing}, computational biology [\cite{blanchette2012clique}, \cite{cooley2021parameterized}], computational geometry \cite{agarwal1994can}, compiler optimization \cite{rajagopalan2000handling}, and applied statistics [\cite{ennis2012assignment}, \cite{gramm2007algorithms}, \cite{piepho2004algorithm}].

In parameterized complexity, a parameterized problem $L \subseteq \sum^{*} \times \mathbb{N}$ is called fixed-parameter tractable (FPT), if any $(x,p) \in L$ is decidable in time $f(p).|(x,p)|^{O(1)}$, where $f(p)$ is computable and is independent of $x$, but can grow arbitrarily with $p$. An algorithm is called fixed-parameter tractable algorithm, if it decides $(x,p) \in L$ in time bounded by $f(p).|(x,p)|^{O(1)}$. A parameterized decision problem of \emph{clique cover} is as follows.

\begin{mybox}{ECC}
\textbf{Input:} An undirected graph $G=(V,E)$, and a nonnegative integer $k$.

\textbf{Output:} If exists, a set of at most $k$ cliques of $G$ such that every edge of $G$ is contained in at least one of the cliques in the set; otherwise 'no'.
\end{mybox}

\emph{ECC} is fixed-parameter tractable with respect to parameter $k$ \cite{gramm2006data}. Using polynomial time data reduction rules, \cite{gramm2006data} have shown a kernel with $2^k$ vertices, and a search tree algorithm with time bound double exponential in $k$. Unless the polynomial hierarchy collapses, no polynomial kernel exists for \emph{ECC}. Moreover, double exponential dependence on $k$ is necessary, assuming the exponential time hypothesis.

\textcolor{red}{The FPT results are substantially extended and included in a different article, see} \url{http://arxiv.org/abs/2208.12438}\textcolor{red}{. This article will be updated to contain the greedy results.}

In general, $k$ (clique cover size) is large [\cite{alon1986covering}, \cite{erdos1966representation}, \cite{frieze1995covering}, \cite{gyarfas1990simple}, \cite{pullman1983clique}, \cite{roberts1985applications}], specially for sparse graphs. The results on lower bound of $k$, and the results of non-existence of polynomial kernel for \emph{ECC} with respect to clique cover size imply limited scope of designing tractable FPT algorithms with $k$ as only parameter. Therefore, it is worthy exploring different structural parameters of graph, such as degeneracy (a sparsity measure of graph), as additional parameters along with $k$.

\emph{Degeneracy} of a graph $G=(V,E)$ is the smallest value $d$ such that every nonempty subgraph of $G$ contains a vertex that has at most $d$ adjacent vertices in the subgraph. \emph{Degeneracy ordering} of $G$ is an ordering of vertices in $V$ such that each vertex has at most $d$ neighbors that come later in the ordering. Degeneracy ordering is computable in linear time \cite{matula1983smallest}.

\emph{Degeneracy} is within a constant factors of other popular sparsity measures of graph such as arboricity \cite{chiba1985arboricity} and thickness \cite{dean1991thickness}, and is also at most equal to the treewidth \cite{robertson1984graph} of graph; however, a graph may have bounded degeneracy, but unbounded treewidth, for example, grid graphs. Degeneracy is also known as the $d$-core number \cite{seidman1983network}, width \cite{freuder1982sufficient}, linkage \cite{kirousis1996linkage}, and is equivalent to the coloring number \cite{erdHos1966chromatic}.

Throughout the paper we use following notations. For a graph $G= (V,E)$, $n = |V|$ is the number of vertices, $m = |E|$ is the number of edges, $N(x)$ denote the set of neighbours of a vertex $x \in V$, $\Delta$ is maximum degree of the vertices, and $d$ is the degeneracy of $G$. Let $u_1, u_2, ..., u_n$ be a degeneracy ordering of $V$. $N_d(u_i)$ denotes neighbours of $u_i$ that follow $u_i$ in the degeneracy order of $V$, i.e., $N_d(u_i) = N(u_i) \cap \{u_{i+1}, ..., u_n\}$. The definition of fixed-parameter tractability encompasses more general cases such as when $p$ is a vector of nonnegative integers. Let \emph{ECC-d} be the variant of the parameterized decision problem \emph{ECC} that considers degeneracy of $G$ explicitly as parameter along with $k$. Suppressing factors polynomial in input size, we express time bound of an FPT algorithm with $O^{*}(f(p))$.

\subsection{Our Contributions.}

We introduce a set theoretic concept (we call it \emph{candidate clique sets}) closely related to family of sets representing vertices of an intersection graph \cite{mckee1999topics} and use it to design greedy algorithms and FPT algorithms.

We define a greedily computable \emph{clique cover} and show that it admits desirable characterizations. Combining \emph{candidate clique sets} with different objectives of \emph{clique cover}, we derive characterizations useful in designing and analyzing algorithms.

Using our derived characterizations, we describe a basic greedy framework that can compute a \emph{minimal} (inclusion-wise) \emph{clique cover} in $O(m^2)$ time.

We make use of \emph{candidate clique sets} and related characterizations to improve the time bound of our greedy framework; for graphs of constant maximum degree, this results in a family of linear time algorithms. We make the improved greedy framework amenable to utilize degeneracy of input graph; this also results in a family of linear time algorithms, for graphs of constant degeneracy.

We present two FPT algorithms for \emph{ECC-d} with respect to parameter $(d,k)$. One of our FPT algorithm is based on \emph{candidate clique sets}, and the other FPT algorithm is based on maximal clique enumeration (listing all maximal cliques in graph \cite{eppstein2013listing}). For graphs of constant degeneracy, time bounds of our FPT algorithms have single exponential dependence on $k$.

For graphs satisfying $2^k > d$, our maximal clique enumeration based FPT algorithm improves upon the time bound of the search tree algorithm of \cite{gramm2006data} by a factor of $3^{k(2^k-d)}$.

Our greedy framework and FPT algorithm based on \emph{candidate clique sets} demonstrate a general design paradigm for different objectives of \emph{clique covers}. For example, we easily obtain a search tree algorithm for the \emph{assignment-minimum clique cover} (\emph{clique cover} with a minimum total number of vertices across all cliques) [\cite{ennis2012assignment}, \cite{gramm2007algorithms}, \cite{piepho2004algorithm}] with depth, number of branches, and space use all bounded by polynomials in input size.
Prior search tree algorithm designed for this objective of \emph{clique cover} requires bounds exponential in input size for depth, number of branches, and space use \cite{ennis2012assignment}.

Our algorithms show excellent performance in practice. We show comparison results of different metrics of our algorithms against the state of the art, on random graphs, as well as on real-world graphs.
\section{Related Works}
\label{sec:related}

Computing \emph{clique cover} with a minimum number of cliques is NP-hard \cite{orlin1977contentment}, even for planar graphs \cite{chang2001tree}, or graphs with maximum degree six \cite{hoover1992complexity}, whereas the problem is solvable in polynomial time for some restricted classes of graphs such as graphs with maximum degree five \cite{hoover1992complexity}, chordal graphs \cite{ma1989clique}, line graphs \cite{orlin1977contentment}, certain generalizations of line graphs \cite{prisner1995clique}, and circular-arc graphs \cite{hsu1991linear}. The minimum cardinality \emph{clique cover} problem is not approximable within a factor of $n^\epsilon$, for some $\epsilon > 0$ unless $P = NP$ \cite{lund1994hardness}; nothing better than a polynomial approximation ratio of $O(n^2 \frac{( \log \log n)^2}{(\log n)^3} )$ is known \cite{ausiello2012complexity}. Unlike \emph{vertex clique cover} (set of disjoint cliques covering vertices of a graph; equivalent to coloring the complement graph), \emph{edge clique cover} is fixed-parameter tractable with respect to natural parameterization \cite{gramm2006data}. \cite{blanchette2012clique} presented an FPT algorithm for \emph{clique cover} parameterized by treewidth. For planar graphs, \cite{blanchette2012clique} also presented an FPT algorithm using branchwidth as parameter and a polynomial-time approximation schemes. Several other objectives of \emph{clique cover} also have been shown to be FPT including \emph{clique cover} objectives that consider edge weights \cite{cooley2021parameterized, ullah2022computing}.

Kellerman presented a heuristic algorithm (Kellerman heuristic) to solve keyword conflict problem \cite{kellerman1973determination}, equivalent to \emph{clique cover} \cite{kou1978covering}. \cite{kou1978covering} proposed an extension to the Kellerman heuristic. Including the extension, the time bound of the Kellerman heuristic is $O(nm^2)$ \cite{gramm2006data}. Using additional data structures, \cite{gramm2006data} improved the time bound to $O(nm)$. \cite{conte2020large} presented a general framework of heuristic algorithms and evaluated the framework for different objectives of \emph{clique cover}. Variants of algorithms from the framework of \cite{conte2020large} have $O(m\Delta)$ or $O(md)$ time bound. Improvement of Kellerman heuristic \cite{gramm2006data} or heuristics proposed by \cite{conte2020large} outperform other heuristic algorithms [\cite{gramm2007algorithms}, \cite{helling2018constructing}, \cite{piepho2004algorithm}] developed for the problem. \cite{behrisch2006efficiently} designed a polynomial time algorithm that is asymptotically optimal for certain random intersection graphs, but for arbitrary graphs, this algorithm may not compute a \emph{clique cover}.
\section{Key Concepts}
\label{sec:kernel}
In this section we present a greedily computable \emph{clique cover} and its characterizations. We introduce a set theoretic concept (\emph{candidate clique sets}) closely related to family of sets representing vertices of an intersection graph. For the computations of different objectives of \emph{clique cover}, we derive useful characterizations of \emph{candidate clique sets}.

\begin{figure}
    \centering
     \tikzset{main node/.style={circle,fill=black!20,draw,minimum size=0.8cm,inner sep=0pt},
            }
 \begin{tikzpicture}
    \node[main node] (1) {$1$};
    \node[main node] (5) [above right = 0.6cm and 1.1cm of 1]  {$5$};
    \node[main node] (3) [below right = 0.5cm and 1.1cm of 1] {$3$};
    \node[main node] (6) [right = 1.5cm of 3] {$6$};
    \node[main node] (2) [below right = 0.9cm and 0.8cm of 3] {$2$};
    \node[main node] (4) [below left = 0.9cm and 0.8cm of 3] {$4$};

    \path[draw,thick]
    (1) edge node {} (5)
    (1) edge node {} (3)
    (2) edge node {} (3)
    (2) edge node {} (4)
    (3) edge node {} (4)
    (3) edge node {} (5)
    (3) edge node {} (6)
    (4) edge node {} (5)
    (4) edge node {} (6)
    (5) edge node {} (6)
    ;
\end{tikzpicture} 
    \caption{A graph with six vertices and ten edges.}
    \label{fig:example_graph}
\end{figure}
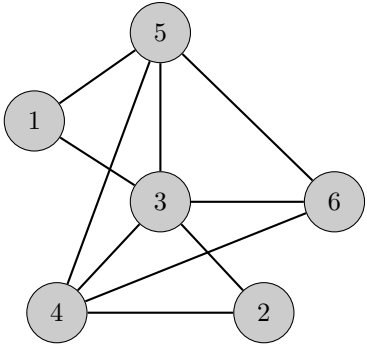

\begin{Definition}[Minimum Clique Cover]
A \emph{minimum clique cover} of a graph $G$ is a \emph{clique cover} with smallest number of cliques among all \emph{clique covers} of $G$.
\end{Definition}

For the example graph of Figure \ref{fig:example_graph}, a \emph{minimum clique cover} is $\C = \{\{1, 3, 5\}, \{2, 3, 4\}, \{3, 4, 5, 6\}\}$. In this case all the cliques of the \emph{minimum clique cover} are maximal cliques. But having a \emph{clique cover} consisting only of maximal cliques is neither necessary nor sufficient for a \emph{minimum clique cover}.

\begin{Definition}[Minimal Clique Cover]
A \emph{minimal clique cover} $\C$ of a graph $G$ is a \emph{clique cover} of $G$ such that no other \emph{clique cover} of $G$ is contained in $\C$, i.e., $\C$ is an inclusion-wise minimal set of cliques that covers all the edges of $G$.
\end{Definition}

A \emph{minimum clique cover} is also a \emph{minimal clique cover}. Note that from a \emph{minimal clique cover} it may be possible to remove vertices (or edges), given that such removals still maintain a \emph{clique cover} of $G$. Any such removal of vertices (or edges) from a \emph{minimal clique cover} would not affect number of cliques in the \emph{clique cover}.

In the example graph of Figure 1, $\C = \{\{1,3,5\}, \{2,3,4\}, \{4,5,6\}, \{3,6,5\}\}$ is a \emph{minimal clique cover} of the graph, and size of the \emph{clique cover} is four. From the clique $\{3,6,5\}$, we can remove vertex 5 (hence edges $\{3,5\}$ and $\{6,5\}$), but the size of the \emph{clique cover} would still be four.

It is easy to construct a \emph{minimal clique cover} by covering every edge with a clique containing only the edge. But, computing a \emph{minimal clique cover} with some desirable characterizations does not seem to be straightforward. Therefore, we define a type of \emph{clique cover} that admits desirable characterizations and can be efficiently turned into a \emph{minimal clique cover}, preserving the characterizations.

\begin{Definition}[Locally Minimal Clique Cover]
\label{local_min}
Let $\C = \{C_1, C_2, ..., C_k\}$ be a \emph{clique cover} of a subset of edges of $G=(V,E)$, and $\{x,y\}$ be an uncovered edge of $G$. Let $\C^{'}$ be a clique cover of the edges covered in $\C$ together with the edge $\{x, y\}$ constructed as follows.

(i) If edge $\{x,y\}$ can be covered by a clique contained in $\C$, then we set $\C^{'}$ to be a clique cover in which $\{x,y\}$ is covered with exactly one of the cliques of $\C$ and rest of the cliques of $\C$ are included.

(ii) Otherwise, we create a new clique $C_{k+1}$ containing only $\{x,y\}$, and we set $\C^{'}$ to be $\C \cup \{C_{k+1}\}$.

We call a \emph{clique cover} $\C$ \emph{locally minimal} if $\C$ is obtained from an empty clique cover using aforesaid construction; i.e., for every expansion of the cliques contained in $\C$, we either have used (i) whenever applicable, or (ii) otherwise. 
\end{Definition}

Note that a \emph{locally minimal clique cover} only ensures local minimality with respect to an uncovered edge of interest; it does not ensure minimality with respect to the edges already included in the clique cover. Nevertheless, it allows characterizations of \emph{clique covers} desirable for designing and analyzing algorithms. 

\begin{restatable}{lemma}{LemmaCharOne}
\label{lemma_lm_char1}
Let $\C = \{C_1, C_2, ..., C_k\}$ be a clique cover of a subset of edges of $G=(V,E)$. If $\C$ is \emph{locally minimal}, then each vertex $x$ appears at most $|N(x)|$ times in $\C$.
\end{restatable}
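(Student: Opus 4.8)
The plan is to prove the bound by an injective charging argument: for a \emph{locally minimal} clique cover $\C$ and a fixed vertex $x$, I would inject the collection of cliques of $\C$ that contain $x$ into the neighbor set $N(x)$, so that their number is at most $|N(x)|$.

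First I would extract two structural facts directly from Definition~\ref{local_min}. (a) Since $\C$ is built from the empty cover by repeatedly selecting an \emph{uncovered} edge $\{x,y\}$ and covering it, each edge of $G$ serves as the ``edge of interest'' at most once: once an edge is covered it is never again uncovered, hence never processed again. (b) At any single processing step the construction touches exactly one clique — it either expands one existing clique of $\C$ via case~(i) or appends one new clique via case~(ii) — so $x$ is inserted into at most one clique during that step. Both are immediate from the construction, but they are the crux of the argument, so I would state them as explicit observations. I would also note that the build-up is monotone (vertices are only added to cliques, never removed), so the notion of ``the first step at which $x$ enters a clique'' is well defined.

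Next, for each clique $C \in \C$ with $x \in C$, I would consider the step at which $x$ was \emph{first} inserted into $C$. At that step some edge $e$ was being covered, and since $x$ was inserted while covering $e$, the edge $e$ must be incident to $x$; write $e = \{x, y_C\}$ with $y_C \in N(x)$. This defines a map $C \mapsto y_C$ from $\{C \in \C : x \in C\}$ into $N(x)$. I would then show this map is injective: if $C \neq C'$ both contain $x$ with $y_C = y_{C'} = y$, then the edge $\{x,y\}$ is being processed at the first-insertion step of $C$ and at that of $C'$; by fact~(a) these are the same step, and by fact~(b) $x$ enters at most one clique at that step, forcing $C = C'$, a contradiction. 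Injectivity yields $|\{C \in \C : x \in C\}| \le |N(x)|$, which is exactly the claim.

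The main obstacle is conceptual rather than computational: it lies in pinning down precisely what the construction of Definition~\ref{local_min} guarantees — that each edge is processed exactly once, that a single step modifies only one clique, and that cliques only grow during the build-up — and in verifying that the ``first insertion of $x$'' always points to an edge genuinely incident to $x$ (so that $y_C \in N(x)$ is well defined). Once these points are nailed down, the injection and the resulting count are routine.
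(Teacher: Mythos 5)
Your proof is correct and takes essentially the same approach as the paper's: each appearance of $x$ in a clique is charged to the uncovered edge incident on $x$ whose processing inserted $x$, and since a covered edge is never processed again, these charges hit distinct neighbors, yielding the bound $|N(x)|$. Your explicit injection $C \mapsto y_C$ simply formalizes the paper's direct count of expansions involving $x$.
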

\begin{proof}
Since $\C$ is \emph{locally minimal}, for an uncovered edge $\{x,y\}$ covered during the expansion of clique cover at hand, we can trace use of the two rules described in Definition \ref{local_min}. Let $C^{'}$ be the clique cover immediately before covering the edge $\{x,y\}$. Before the expansion of $C^{'}$ for $\{x,y\}$, none of the cliques in $C^{'}$ contained both $x$ and $y$: if $x$ and $y$ are both contained in a clique of $C^{'}$, then clearly $\{x,y\}$ is not uncovered. Number of uncovered edges $\{x,y\}$ incident on a vertex $x$ is at most $|N(x)|$. Therefore, number of such expansions of a \emph{clique cover} for a vertex $x$ is at most $|N(x)|$. Hence the vertex $x$ appears at most $|N(x)|$ times in $\C$.
\end{proof}

\begin{Remark}
\label{rem_2}
The converse of characterization in Lemma \ref{lemma_lm_char1} does not hold; i.e., each $x \in V$ can appear at most $|N(x)|$ times in $\C$, but $\C$ may not be \emph{locally minimal}.
\end{Remark}

\begin{restatable}{lemma}{LemmaCharTwo}
\label{lemma_lm_char2}
Let $\C = \{C_1, C_2, ..., C_k\}$ be a clique cover of a subset of edges of $G=(V,E)$. If $\C$ is \emph{locally minimal}, then $C_i \not\subseteq C_j, \forall i,j \in [1,k], i \ne j$ (no clique in $\C$ is contained in another clique).
\end{restatable}
\begin{proof}
We prove the contrapositive, i.e., if for some $i,j \in [1,k], i \ne j$, $C_i \subseteq C_j$, then $\C$ is not locally minimal.

WLOG assume $C_i$ is being created first, and consider the first uncovered edge $\{x,y\}$ for which $C_j$ is being created. Since $C_i \subseteq C_j$, at any stage of the construction, for any $ z \in C_i$, three cases can arise: (1) $x \in C_i$, $\{y,z\} \in E$, (2) $y \in C_i$, $\{x,z\} \in E$, (3) $x \not\in C_i$, $y \not\in C_i$ and $\{x,z\} \in E$, $\{y, z\} \in E$. Clearly, for any of the three cases, rule (i) of Definition \ref{local_min} applies for $\{x,y\}$, but, rule (ii) of Definition \ref{local_min} is being applied. Therefore, $\C$ is not \emph{locally minimal}.
\end{proof}

\begin{Remark}
The converse of characterization in Lemma \ref{lemma_lm_char2} does not hold; i.e., it is possible that no clique in $\C$ is contained in another clique, but $\C$ may not be \emph{locally minimal}.
\end{Remark}

\begin{restatable}{lemma}{LemmaCharThree}
\label{lemma_lm_char3}
Let $\C = \{C_1, C_2, ..., C_k\}$ be a clique cover of a subset of edges of $G=(V,E)$. If $\C$ is \emph{locally minimal}, then $\forall \{C_i, C_j\} \in \C, i \ne j, \exists  x \in C_i, y \in C_j$, $x \neq y$ such that $\{x,y\} \notin E$ (every pair of cliques in $\C$ corresponds to a pair of vertices not connected by an edge in $G$).
\end{restatable}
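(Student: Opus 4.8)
The plan is to argue directly, in the spirit of the proof of Lemma~\ref{lemma_lm_char2}, by tracking the moment at which one of the two cliques is created as a new clique via rule~(ii) of Definition~\ref{local_min}. Fix distinct cliques $C_i, C_j \in \C$. Every clique of a \emph{locally minimal} cover first appears through an application of rule~(ii) (rule~(i) only expands cliques that already exist), so each of $C_i$ and $C_j$ has a well-defined creation step; without loss of generality assume $C_j$ is created after $C_i$.

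Let $\{x,y\}$ be the uncovered edge whose processing triggers the creation of $C_j$, so that $C_j$ starts out as $\{x,y\}$ and, since cliques only gain vertices during the construction, $x,y \in C_j$ in the final cover. Because rule~(ii) was invoked for $\{x,y\}$, no clique present at that step could cover $\{x,y\}$; in particular the current state of $C_i$, call it $C_i'$, satisfies that $C_i' \cup \{x,y\}$ is \emph{not} a clique. As $C_i' \subseteq C_i$, producing the desired witness reduces to extracting a non-adjacent pair from $C_i' \cup \{x,y\}$ with one endpoint in $C_i'$ and the other in $\{x,y\}$.

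To do this I would use that $C_i'$ is itself a clique and $\{x,y\} \in E$: any non-adjacent pair inside $C_i' \cup \{x,y\}$ cannot lie wholly within the clique $C_i'$, nor can it be $\{x,y\}$ itself, so it must consist of one vertex $z \in C_i' \setminus \{x,y\}$ together with one of $x,y$. (That $C_i'$ does not already contain both $x$ and $y$ is clear, since otherwise $\{x,y\}$ would not be uncovered.) Writing $w$ for whichever of $x,y$ is non-adjacent to $z$, we obtain $z \in C_i' \subseteq C_i$ and $w \in \{x,y\} \subseteq C_j$ with $z \neq w$ and $\{z,w\} \notin E$, which is exactly the required pair, so the claim holds for $C_i, C_j$.

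The main obstacle is the bookkeeping in the last step: one must verify that the non-adjacent pair extracted at the creation step of $C_j$ is a genuine cross pair (one endpoint in each clique) with distinct endpoints, and that monotonicity of the construction (cliques never lose vertices) preserves membership of the witnesses $z$, $x$, and $y$ in the final cliques $C_i$ and $C_j$. Everything else is a direct reading of the two rules of Definition~\ref{local_min}.
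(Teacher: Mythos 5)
Your proof is correct and is essentially the paper's argument run in the direct rather than the contrapositive direction: both pinpoint the rule-(ii) creation step of the later clique $C_j$ and use local minimality to conclude that the earlier clique's current state $C_i'$ could not cover the founding edge $\{x,y\}$, which forces a non-adjacent cross pair between $C_i'$ and $\{x,y\}$. Your explicit bookkeeping (the pair cannot lie inside the clique $C_i'$ nor be the edge $\{x,y\}$, and monotonicity $C_i' \subseteq C_i$, $\{x,y\} \subseteq C_j$ carries the witnesses to the final cover) just makes precise what the paper's contrapositive leaves implicit.
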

\begin{proof}
We prove the contrapositive, i.e., if for some $\{C_i,C_j\} \in \C$ with $i \ne j$, this is the case that $\forall x \in C_i$ and $\forall y \in C_j$, $x \ne y$, $\{x,y\} \in E$, then $\C$ is not locally minimal.

WLOG assume $\C_i$ is being created first, and consider the first uncovered edge $\{y, z\}$ for which $\C_j$ is being created. Clearly, $\forall x \in \C_i$, $\{x,z\} \in E$ and $\{x,y\} \in E$, and rule (i) of Definition \ref{local_min} applies for $\{y,z\}$, instead rule (ii) of Definition \ref{local_min} is being applied. Therefore, $\C$ is not \emph{locally minimal}.
\end{proof}

\begin{Remark}
\label{rem_3}
\emph{Clique cover} satisfying Lemma \ref{lemma_lm_char2} does not imply it satisfies Lemma \ref{lemma_lm_char3}. But, \emph{clique cover} satisfies Lemma \ref{lemma_lm_char3} implies no clique is contained in another clique, i.e., Lemma \ref{lemma_lm_char3} is a stronger characterization than Lemma \ref{lemma_lm_char2}.
\end{Remark}

Note that it is the combination of the characterizations in Lemma \ref{lemma_lm_char1}, Lemma \ref{lemma_lm_char2}, and Lemma \ref{lemma_lm_char3} that makes a \emph{locally minimal clique cover} desirable than each standalone characterization. For example, it is easy to satisfy the characterizations in Lemma \ref{lemma_lm_char1} and Lemma \ref{lemma_lm_char2} for a complete graph by covering every edge with a clique containing only the edge. But, characterization in Lemma \ref{lemma_lm_char3} ensures that a single clique covers the complete graph. We can also glean more information on \emph{locally minimal clique cover} using these characterizations further. The following is a characterization of possible redundancy of cliques in a \emph{locally minimal clique cover} in respect of a \emph{minimal clique cover}.

\begin{restatable}{lemma}{LemmaCharFour}
\label{lemma_lm_char4}
Let $\C = \{C_1, C_2, ..., C_k\}$ be a \emph{locally minimal clique cover} of $G=(V,E)$, and for any clique $C_l \in \C$, let $\C^{'} \subseteq \C$ such that $C_l \not\in \C^{'}$. Let $A$ denote the set of edges contained in $C_l$, and let $B$ denote the union of the edges contained in the cliques of $\C^{'}$. If $A \subseteq B$, then $|\C^{'}| \geq 3$.
\end{restatable}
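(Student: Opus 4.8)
The plan is to prove the statement by ruling out the possibilities $|\C'| \in \{0,1,2\}$, which forces $|\C'| \geq 3$. Throughout I would use that every clique of a \emph{locally minimal clique cover} contains at least two vertices: each clique is first created by rule (ii) of Definition \ref{local_min} to cover an edge and thereafter only grows, so in particular $A \neq \emptyset$. The case $|\C'| = 0$ is then immediate, since $B = \emptyset$ cannot contain the nonempty set $A$.

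For the case $|\C'| = 1$, say $\C' = \{C_j\}$, the hypothesis $A \subseteq B$ states that every edge of $C_l$ is an edge of $C_j$. Picking any two distinct vertices $u, v \in C_l$, the edge $\{u,v\}$ lying in $C_j$ forces $u, v \in C_j$; ranging over all such pairs yields $C_l \subseteq C_j$, which contradicts Lemma \ref{lemma_lm_char2}. Hence $|\C'| \neq 1$.

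The main case, and the crux of the argument, is $|\C'| = 2$, say $\C' = \{C_i, C_j\}$. Here I would reformulate the coverage condition restricted to $C_l$. Set $S_i = C_l \cap C_i$ and $S_j = C_l \cap C_j$. Since $C_i$ and $C_j$ are cliques, an edge $\{u,v\}$ with $u,v \in C_l$ is contained in $C_i$ exactly when $u,v \in S_i$, and likewise for $C_j$; thus $A \subseteq B$ says precisely that every pair of distinct vertices of $C_l$ lies within $S_i$ or within $S_j$. By Lemma \ref{lemma_lm_char2}, neither $C_l \subseteq C_i$ nor $C_l \subseteq C_j$ holds, so $S_i$ and $S_j$ are proper subsets of $C_l$; pick $u \in C_l \setminus S_i$ and $w \in C_l \setminus S_j$. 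If $u = w$, then this common vertex lies in neither $S_i$ nor $S_j$, so any edge of $C_l$ incident to it (one exists because $|C_l| \geq 2$) is uncovered. If $u \neq w$, then the edge $\{u,w\}$ of $C_l$ lies in neither $S_i$ (as $u \notin S_i$) nor $S_j$ (as $w \notin S_j$), so again it is uncovered. Either way $A \not\subseteq B$, a contradiction, so $|\C'| \neq 2$.

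Combining the three cases yields $|\C'| \geq 3$. I expect the only delicate point to be the $|\C'| = 2$ case; the key work there is recognizing that $A \subseteq B$ is equivalent to covering every edge of the complete graph on vertex set $C_l$ by the two sub-cliques $S_i, S_j$, after which the proper-subset observation (supplied by Lemma \ref{lemma_lm_char2}) and the explicit choice of an uncovered edge finish the proof. I note that Lemma \ref{lemma_lm_char3} gives an alternative route to vertices $u \in C_l \setminus C_i$ and $w \in C_l \setminus C_j$, but Lemma \ref{lemma_lm_char2} suffices and is more direct.
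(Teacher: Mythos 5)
Your proof is correct, and though it shares the paper's overall strategy for the critical case $|\C^{'}|=2$ --- restrict attention to the intersections of $C_l$ with the two covering cliques and exhibit an edge of $C_l$ that neither contains --- your selection of the witnessing edge is genuinely different and, in fact, cleaner. The paper sets $D = C_l \cap C_i$, $F = C_l \cap C_j$ and whittles down cases until it can pick $x \in D \setminus F$ and $y \in F \setminus D$: this requires a separate disposal of $|C_l| = 2$, a separate disposal of $D = F$, and the claims $D \not\subseteq F$ and $F \not\subseteq D$, which the paper justifies by asserting that containment would force $C_l \subseteq C_i$ or $C_l \subseteq C_j$ --- an implication that, as stated, only goes through under the standing hypothesis $A \subseteq B$ (one-sided containment $D \subsetneq F$ alone does not place $C_l$ inside $C_j$), so the paper's intermediate steps are somewhat loose. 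You instead take $u \in C_l \setminus S_i$ and $w \in C_l \setminus S_j$, whose existence needs only that the two intersections are \emph{proper} subsets of $C_l$, which follows in a single step from Lemma \ref{lemma_lm_char2}; the price is one extra subcase ($u = w$), dispatched by $|C_l| \geq 2$. This uniform argument subsumes the paper's special $|C_l|=2$ case, replaces its symmetric-difference case analysis with a complement-based one, and repairs the loose spot. Your treatment of $|\C^{'}| \in \{0,1\}$ matches the paper's (which leaves $|\C^{'}|=0$ implicit), and your side remark that Lemma \ref{lemma_lm_char3} would also supply the witnesses is accurate but, as you say, unnecessary.
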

\begin{proof}
We prove the contrapositive, i.e., if $|\C^{'}| < 3$, then $A \not\subseteq B$.

If $|\C^{'} | = 1$, then by Lemma \ref{lemma_lm_char2}, $C_l$ cannot be contained in another clique in $\C$. Hence let $\C^{'} = \{C_i, C_j\}$.

If $|C_l| = 2$, then the single edge in $C_l$ cannot be contained in any other clique, otherwise it would violate the \emph{locally minimal} property of $\C$. Therefore, $|C_l| > 2$.

Let $D = C_l \cap C_i$ and $F = C_l \cap C_j$. If $D=F$ then it must be the case that $|C_l| > |D|$ and $|C_l| > |F|$; otherwise $\C_l \subseteq C_i$ and $\C_l \subseteq C_j$, i.e., $\C$ is not \emph{locally minimal}. Therefore, we can assume $D \ne F$.

$D \not\subseteq F$ and $F \not\subseteq D$; otherwise, $C_l \subseteq C_i$ or $C_l \subseteq C_j$, and $\C$ is not \emph{locally minimal}. Since $D \not\subseteq F$ and $F \not\subseteq D$, it must be the case that for some $x \in D$, $x \not\in F$, and also for some $y \in F$, $y \not\in D$. Edge $\{x,y\} \in A$, but neither $C_i$ not $C_j$ contains the edge $\{x,y\}$.
\end{proof}

Lemma \ref{lemma_lm_char4} implies that in order to remove any redundant clique from a \emph{locally minimal clique cover}, we would need to consider at least three cliques at a time. For a clique containing many vertices, we may need to consider a large number of cliques at a time to discern whether a clique is redundant in a \emph{locally minimal clique cover}.

We obtain bounds for total number of (repeatable) vertices and total number of (repeatable) edges in a \emph{locally minimal clique cover} as follows.

\begin{restatable}{lemma}{LemmaLMCCvertices}
\label{lemma_gsum1}
Let $\C = \{C_1, C_2, ..., C_k\}$ be a clique cover of a subset of edges of $G=(V,E)$. If $\C$ is locally minimal, then $\sum_{C_l \in \C} |C_l| = O(m)$.
\end{restatable}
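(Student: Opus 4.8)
The plan is to bound the total size $\sum_{C_l \in \C} |C_l|$ by counting vertex-occurrences across all cliques and relating that count to $m = |E|$. The natural tool is Lemma \ref{lemma_lm_char1}, which tells us that each vertex $x$ appears in at most $|N(x)|$ cliques of a locally minimal clique cover. Since $|C_l|$ is precisely the number of vertices contained in the clique $C_l$, the sum $\sum_{C_l \in \C} |C_l|$ counts, over all cliques, how many vertices each clique contributes; equivalently, by exchanging the order of summation, it equals the total number of (clique, vertex) incidences, i.e. $\sum_{x \in V} (\text{number of cliques containing } x)$.

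First I would rewrite the target quantity as
\[
\sum_{C_l \in \C} |C_l| = \sum_{C_l \in \C} \sum_{x \in C_l} 1 = \sum_{x \in V} \bigl|\{\, C_l \in \C : x \in C_l \,\}\bigr|.
\]
Next I would apply Lemma \ref{lemma_lm_char1} to each inner count: because $\C$ is locally minimal, the number of cliques containing any fixed vertex $x$ is at most $|N(x)|$. Substituting this bound gives
\[
\sum_{C_l \in \C} |C_l| \le \sum_{x \in V} |N(x)| = \sum_{x \in V} \deg(x) = 2m,
\]
where the final equality is the handshake lemma. Hence $\sum_{C_l \in \C} |C_l| \le 2m = O(m)$, which is exactly the claim.

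The main point requiring care is justifying the interchange of summation and confirming that $|C_l|$ really does equal the vertex count of the clique (so that double counting by vertices is legitimate); both are routine since $\C$ is a finite collection of finite sets. I do not anticipate a genuine obstacle here: the entire argument is a one-line consequence of Lemma \ref{lemma_lm_char1} combined with the handshake lemma, and the only substantive input — that each vertex appears at most $|N(x)|$ times — has already been established. The companion edge-count bound (total number of repeatable edges) would follow by an analogous double-counting argument, bounding each clique's edge contribution $\binom{|C_l|}{2}$ and again summing, though that is a separate statement not required here.
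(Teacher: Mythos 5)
Your proof is correct and matches the paper's argument essentially verbatim: both swap the order of summation to count vertex--clique incidences, bound each vertex's multiplicity by $|N(x)|$ via Lemma \ref{lemma_lm_char1}, and conclude with $\sum_{x \in V} |N(x)| = 2m = O(m)$. Your added remarks on justifying the interchange are routine bookkeeping the paper leaves implicit; there is nothing further to reconcile.
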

\begin{proof}
From Lemma \ref{lemma_lm_char1}, any vertex $x$ can appear at most $|N(x)|$ times in the cover $\C$.
Therefore $\sum_{C_l \in \C} |C_l| = \sum_{x \in V} |\{C_l | x \in C_l\}| \leq \sum_{x \in V} |N(x)| = O(m)$.
\end{proof}

\begin{restatable}{lemma}{LemmaLMCCedges}
\label{lemma_gsum2}
Let $\C = \{C_1, C_2, ..., C_k\}$ be a clique cover of a subset of edges of $G=(V,E)$. If $\C$ is locally minimal, then $\sum_{\{x,y\} \in E} |\{C_l | \{x,y\} \in C_l, C_l \in \C\}| = O(m\Delta)$.
\end{restatable}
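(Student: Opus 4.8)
The plan is to first rewrite the target sum as a sum over cliques of the number of edges each clique contains, and then bound that quantity using Lemma~\ref{lemma_lm_char1}, exactly paralleling the vertex argument of Lemma~\ref{lemma_gsum1}. Since each edge $\{x,y\}$ contributes one to the sum for every clique $C_l \in \C$ that contains it, reversing the order of summation gives
\[
\sum_{\{x,y\} \in E} \bigl|\{C_l \mid \{x,y\} \in C_l, C_l \in \C\}\bigr| = \sum_{C_l \in \C} \binom{|C_l|}{2},
\]
because a clique on $|C_l|$ vertices contains exactly $\binom{|C_l|}{2}$ edges. It therefore suffices to show $\sum_{C_l \in \C} |C_l|(|C_l|-1) = O(m\Delta)$.

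To bound this, I would expand $|C_l|(|C_l|-1) = \sum_{x \in C_l}(|C_l|-1)$ and swap the order of summation so the sum is grouped by vertex:
\[
\sum_{C_l \in \C}|C_l|(|C_l|-1) = \sum_{x \in V}\ \sum_{C_l \ni x}(|C_l|-1).
\]
Two facts control the inner sum for each fixed $x$. First, since $C_l$ is a clique containing $x$, every other vertex of $C_l$ is a neighbour of $x$, so $|C_l|-1 \le |N(x)| \le \Delta$; this uses only that $C_l$ is a clique, not local minimality. Second, by Lemma~\ref{lemma_lm_char1}, the number of cliques in $\C$ containing $x$ is at most $|N(x)|$. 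Combining the two bounds, $\sum_{C_l \ni x}(|C_l|-1) \le |N(x)| \cdot |N(x)| \le |N(x)|\,\Delta$.

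Summing over all vertices then yields
\[
\sum_{x \in V}\ \sum_{C_l \ni x}(|C_l|-1) \le \Delta \sum_{x \in V}|N(x)| = 2m\Delta = O(m\Delta),
\]
which establishes the claim. The main point to get right is the double use of the degree: local minimality enters only through Lemma~\ref{lemma_lm_char1}, bounding the number of cliques per vertex, while the factor $\Delta$ that converts $\sum_x |N(x)|^2$ into $\Delta \sum_x |N(x)|$ comes for free from the clique constraint $|C_l| - 1 \le |N(x)| \le \Delta$. I do not anticipate a serious obstacle here; the only care needed is in confirming the identity relating the edge-incidence sum to $\sum_{C_l}\binom{|C_l|}{2}$, after which the estimate is a routine interchange of summation.
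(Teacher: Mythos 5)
Your proof is correct, but it organizes the double count differently from the paper. The paper's proof is a direct per-edge bound: by Lemma~\ref{lemma_lm_char1}, any clique containing both endpoints of an edge $\{x,y\}$ is among the at most $|N(x)|$ cliques containing $x$ and the at most $|N(y)|$ cliques containing $y$, so each edge lies in at most $\min\{|N(x)|,|N(y)|\}\le\Delta$ cliques, and summing over the $m$ edges finishes in one line. You instead swap the summation to cliques first, via the identity with $\sum_{C_l}\binom{|C_l|}{2}$, and then regroup by vertex, using two facts per vertex: the clique constraint $|C_l|-1\le|N(x)|\le\Delta$ and the count bound of Lemma~\ref{lemma_lm_char1}; all steps check out, including the harmless factor of two between $\binom{|C_l|}{2}$ and $|C_l|(|C_l|-1)$, and you correctly isolate local minimality as entering only through Lemma~\ref{lemma_lm_char1}. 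Comparing the two: the paper's grouping is shorter, and its intermediate quantity $\sum_{\{x,y\}\in E}\min\{|N(x)|,|N(y)|\}$ is genuinely tighter than the $\sum_{x\in V}|N(x)|^2$ your chain passes through --- on a star the former is $\Theta(m)$ while the latter is $\Theta(m\Delta)$ --- and is in fact $O(md)$ by the standard Chiba--Nishizeki bound (with arboricity at most the degeneracy $d$), so the paper's route leaves room for a degeneracy-sensitive strengthening that your estimate forfeits at the step $\sum_{C_l\ni x}(|C_l|-1)\le|N(x)|^2$. What your route buys in exchange is the explicit clique-side reformulation $\sum_{C_l}\binom{|C_l|}{2}$ of the total edge multiplicity of the cover, a structurally informative identity the paper never states.
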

\begin{proof}
Consider any edge $\{x,y\} \in E$. Since we know from Lemma \ref{lemma_lm_char1} that $x \in V$ can appear at most $|N(x)|$ times in $\C$, both $x$ and $y$ can appear at most $\min\{|N(x)|, |N(y)|\}$ times in $\C$.
Summing over all the edges we get $\sum_{\{x, y\} \in E} \min{\{|N(x)|, |N(y)|\}} \leq \sum_{\{x, y\} \in E} \Delta = O(m\Delta)$. It follows that

$$\sum_{\{x,y\} \in E} |\{C_l | \{x,y\} \in C_l, C_l \in \C\}| $$

$\leq \sum_{\{x, y\} \in E} \min{\{|N(x)|, |N(y)|\}} = O(m\Delta)$
\end{proof}

\begin{Definition}[Intersection Graph Basis]
Let $X = \{1, 2,..., k\}$, and consider a family of $n$ sets $\mathcal{F} = \{F_1, F_2, ..., F_n\}$, with $F_i \subseteq X$. The \emph{intersection graph} of $\mathcal{F}$ has vertices $F_i,  i \in [1, n]$, and an edge joins $F_i$ and $F_j$ when $F_i \cap F_j \ne \phi$, with $i \ne j$. Every graph $G = (V, E)$ arises as the intersection graph of some family of sets; i.e., there is an assignment of a set $F_i$ to each vertex $i$ of $V$ such that for $\{i, j\} \in V$, and $i \ne j$, the following holds: $\{i,j\} \in E \Leftrightarrow F_i \cap F_j \ne \phi$. Computing \emph{(edge) clique cover} of $G$ is equivalent to computing a set $X$ such that $G$ is the intersection graph of a family of subsets of $X$ \cite{erdos1966representation}. Such a set $X$ with minimum cardinality is called an intersection graph basis of $G$ \cite{garey1979computers}. Every element $l \in X$ corresponds to a clique such that $\{F_i |l \in F_i\}$ is the vertex set of the clique.
\end{Definition}

Next, we introduce concepts related to family of sets representing vertices of an \emph{intersection graph}.

\begin{Definition}[Candidate Clique]
\label{cand_cliq}
Let $\C = \{C_1, C_2, ..., C_k\}$ be a \emph{clique cover} of a subset of edges of $G=(V,E)$. $C_l \in \C$ is a \emph{candidate clique} of $x \in V$, if either (i) $x \in C_l$, or (ii) $x \not\in C_l$ and $C_l \subseteq N(x)$.
\end{Definition}

Consider the example graph in Figure \ref{fig:example_graph}, and a clique cover $\C=\{\{1,3,5\}, \{4, 5, 6\}\}$ of a subset of edges of the graph. The clique $\{1,3,5\}$ is a \emph{candidate clique} for all the vertices contained in the clique. The clique $\{4,5,6\}$ is a \emph{candidate clique} for vertex 3, since the vertices in the clique are contained among the neighbors of vertex 3. 

\begin{Definition}[Candidate Clique Set]
Let $\C = \{C_1, C_2, ..., C_k\}$ be a \emph{clique cover} of a subset of edges of $G=(V,E)$. A \emph{candidate clique set} $S_x$ for a vertex $x$ is the set of (indices of) \emph{candidate cliques} of $x$ in $\C$, i.e., $S_x = \{l | x \in C_l \textit{ or } (x \not\in C_l \textit{ and } C_l \subseteq N(x)), 1 \leq l \leq k\}$.
\end{Definition}

\begin{Remark}
For a family of sets $\mathcal{F}$ such that $G=(V,E)$ is the intersection graph of $\mathcal{F}$, it is the case that $\{x,y\} \in V, x \ne y$ following holds: $F_x \cap F_y \ne \phi \Leftrightarrow  \{x, y\} \in E$. For \emph{candidate clique sets}, $S_x \cap S_y \ne \phi$ does not imply $\{x,y\} \in E$.
\end{Remark}

For the example graph of Figure \ref{fig:example_graph}, let $\C = \{C_1, C_2, C_3\}$ be a clique cover of a subset of edges of the graph, where $C_1 = \{1,3,5\}$, $C_2 = \{4,5\}$, and $C_3 = \{2, 4\}$. The \emph{candidate clique set} for vertex 3, $S_3 = \{1, 2, 3\}$, since vertex 3 is contained in $C_1$, and the vertices of both $C_2$ and $C_3$ are contained among the neighbors of vertex 3. Vertex 6 is not contained in any clique of $C$, but $C_2$ is contained among the neighbors of vertex 6, and $S_6 = \{2\}$.

We can characterize the set of cliques that can cover an uncovered edge, precisely through \emph{candidate clique sets} as follows.

\begin{restatable}{lemma}{LemmaCover}
\label{lemma_cover}
Let $\C = \{C_1, C_2, ..., C_k\}$ be a clique cover of a subset of edges of $G=(V,E)$, and let $S_x$ be the \emph{candidate clique set} of a vertex $x$ in $V$. Let $\{x,y\} \in E$ be an uncovered edge of $G$. For $l \in [1,k]$, $l \in S_x \cap S_y$ if and only if $\{x, y\}$ can be covered by a clique contained in $\C$.
\end{restatable}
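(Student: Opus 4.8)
The plan is to prove the biconditional directly from the definition of candidate clique set, treating the two directions separately and keeping the uncovered hypothesis in mind throughout. The observation I would record at the outset is that ``the edge $\{x,y\}$ can be covered by a clique contained in $\C$'' means precisely that $C_l \cup \{x,y\}$ is again a clique of $G$ for the relevant index $l$, so that adjoining $x$ and $y$ to $C_l$ keeps it a clique and newly covers the edge (this is exactly the operation invoked in rule (i) of Definition \ref{local_min}). The whole argument then reduces to matching this condition against the two-part definition of $l \in S_x$ and $l \in S_y$.

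For the forward direction I would fix $l$, assume $l \in S_x \cap S_y$, and show $C_l \cup \{x,y\}$ is a clique. By definition each of $x$ and $y$ either lies in $C_l$ or satisfies $C_l \subseteq N(\cdot)$, giving four combinations. The case $x \in C_l$ and $y \in C_l$ is immediately excluded: it would force $\{x,y\} \subseteq C_l$, so the edge would already be covered, contradicting that it is uncovered. In each of the three remaining cases I would verify that every pair of vertices in $C_l \cup \{x,y\}$ is adjacent: the vertices of $C_l$ are mutually adjacent since $C_l$ is a clique; whenever $x \notin C_l$ the hypothesis $C_l \subseteq N(x)$ makes $x$ adjacent to all of $C_l$, and symmetrically for $y$; and $\{x,y\} \in E$ by assumption. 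Hence $C_l \cup \{x,y\}$ is a clique, so $C_l$ covers $\{x,y\}$.

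For the converse I would again fix $l$, assume $C_l$ covers $\{x,y\}$ (i.e. $C_l \cup \{x,y\}$ is a clique), and deduce $l \in S_x$ and $l \in S_y$. Considering $x$ first: either $x \in C_l$, whence $l \in S_x$ by part (i) of the definition, or $x \notin C_l$, in which case $x$ is adjacent to every other vertex of the clique $C_l \cup \{x,y\}$ and therefore $C_l \subseteq N(x)$, giving $l \in S_x$ by part (ii). The identical argument applied to $y$ yields $l \in S_y$, so $l \in S_x \cap S_y$.

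The proof is essentially a definition chase, so I do not anticipate a serious obstacle; the single point that needs care is the bookkeeping of whether $x$ or $y$ already belongs to $C_l$. In particular, the uncovered hypothesis is exactly what eliminates the degenerate case in which both endpoints already sit in $C_l$, and without flagging its use the forward-direction case analysis would be incomplete. I would therefore cite that hypothesis explicitly rather than leave it implicit.
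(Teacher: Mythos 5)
Your proof is correct and follows essentially the same route as the paper's: both directions are a definition chase through the two clauses of Definition~\ref{cand_cliq}, with your four-combination case analysis (excluding the both-endpoints-in-$C_l$ case via the uncovered hypothesis) matching the paper's three cases (i)--(iii), and the converse argued identically by splitting on whether each endpoint lies in $C_l$. Your explicit per-$l$ reading --- that $l \in S_x \cap S_y$ holds exactly when $C_l \cup \{x,y\}$ is a clique --- is a mild sharpening of the paper's phrasing but the substance of the argument is the same.
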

\begin{proof}
{\bf Only if}: Since the edge $\{x, y\}$ is not covered by the cliques in $\C$, both $x$ and $y$ are not contained in any of the cliques of $ \C$. Then by the definition of \emph{candidate clique set}, one of the following holds for $C_l$:
(i) $x \in C_l$, $y \not\in C_l$, $C_l \subseteq N(y)$.
(ii) $ y \in C_l$, $x \not\in C_l$, $C_l \subseteq N(x)$.
(iii) $x \not\in C_l$, $y \not\in C_l$, $C_l \subseteq N(x)$, $C_l \subseteq N(y)$.

No matter which of the three cases holds, both vertices $x$ and $y$ can be included in $C_l$, and thus $\{x,y\}$ can be covered by $C_l$.

{\bf If}: Since the edge $\{x,y\}$ can be covered by a clique contained in $\C$, $\exists C_l \in \C$ in which $x$ or $y$ or both are not contained, but can be added to. Since $\{x,y\}$ is an uncovered edge, if $x$ is contained in $C_l$ then it must be the case $ y \not\in C_l$ and $C_l \subseteq N(y)$. It follows that if $x$ is contained in $C_l$ then $l \in S_x$ and $l \in S_y$. Similarly, $l \in S_x$ and $l \in S_y$ for the cases when $y \in C_l$ or both $x$ and $y$ are not contained in $C_l$. Therefore, $l \in S_x \cap S_y$.
\end{proof}

\begin{Remark}
\label{rem_1}
For an uncovered edge $\{x,y\}$, consider the choices of covering the edge with a clique $C_l$ contained in $\C$ (clique cover at hand) as follows.
\newline
(i) $x \in C_l$, $y \not\in C_l$, $C_l \subseteq N(y)$.
\newline
(ii) $ y \in C_l$, $x \not\in C_l$, $C_l \subseteq N(x)$.
\newline
(iii) $x \not\in C_l$, $y \not\in C_l$, $C_l \subseteq N(x)$, $C_l \subseteq N(y)$.
\newline
Assume, we choose to utilize only choice (i) to cover $\{x,y\}$. Then, we are essentially restricted to order the vertices in $V$ in some order such that vertex $x$ is placed before vertex $y$ in the order. Conversely, ordering of vertices (or edges) may restrict our choices to cover $\{x,y\}$. \emph{Candidate clique} concept abstracts away any restriction of choices that may be imposed by any ordering of vertices (or edges), and presents us with all potential choices at hand.
\end{Remark}

We obtain following result that helps us to bound several time and space related quantities.

\begin{restatable}{lemma}{LemmaCSum}
\label{lemma_csum}
Let $\mathcal{C} = \{\mathcal{C}_1, \mathcal{C}_2, ..., \mathcal{C}_k\}$ be a clique cover of a subset of edges of $G=(V,E)$. If $\C$ is \emph{locally minimal}, then $\sum_{x \in V} |S_x| = O(m + |\C|\Delta)$.
\end{restatable}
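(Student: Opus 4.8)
```latex
The plan is to bound $\sum_{x \in V} |S_x|$ by splitting the candidate cliques of each vertex $x$ into two disjoint groups according to the two clauses of Definition \ref{cand_cliq}: those cliques $C_l$ with $x \in C_l$ (the ``containment'' type), and those with $x \not\in C_l$ but $C_l \subseteq N(x)$ (the ``neighborhood'' type). Since every index $l \in S_x$ satisfies exactly one of these two conditions, we have $\sum_{x \in V} |S_x| = A + B$, where $A$ counts the containment-type incidences and $B$ counts the neighborhood-type incidences. I would then bound $A$ and $B$ separately.

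For the containment term, I would observe that $A = \sum_{x \in V} |\{C_l : x \in C_l\}| = \sum_{C_l \in \C} |C_l|$, which is exactly the quantity bounded in Lemma \ref{lemma_gsum1}. Since $\C$ is \emph{locally minimal}, Lemma \ref{lemma_gsum1} gives $A = O(m)$ directly. This step is immediate and requires no new argument.

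The neighborhood term $B$ is the main obstacle and requires a change of summation order. Here $B = \sum_{x \in V} |\{C_l : x \not\in C_l, \, C_l \subseteq N(x)\}|$. Rather than summing over vertices, I would swap the order and sum over cliques: $B = \sum_{C_l \in \C} |\{x \in V : x \not\in C_l, \, C_l \subseteq N(x)\}|$. For a fixed clique $C_l$, the vertices $x$ contributing to the inner count are precisely those adjacent to every vertex of $C_l$; in particular, fixing any single vertex $w \in C_l$, every such $x$ must satisfy $x \in N(w)$, so the inner count is at most $|N(w)| \le \Delta$. Thus each clique contributes at most $\Delta$ to $B$, giving $B \le |\C| \, \Delta = O(|\C| \Delta)$.

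Combining the two bounds yields $\sum_{x \in V} |S_x| = A + B = O(m) + O(|\C|\Delta) = O(m + |\C|\Delta)$, as claimed. The only subtlety to get right is the disjointness of the two types (so that no incidence is double-counted and the split is exact), and the clean reindexing in the neighborhood term; the degree bound $\Delta$ then falls out by restricting attention to a single representative vertex of each clique.
```
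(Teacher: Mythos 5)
Your proof is correct, and your overall skeleton matches the paper's: both split $\sum_{x \in V} |S_x|$ into the containment term and the neighborhood term via Definition \ref{cand_cliq}, both dispatch the containment term as $\sum_{C_l \in \C}|C_l| = O(m)$ by Lemma \ref{lemma_gsum1}, and both reindex the neighborhood term as a sum over cliques (your reindexing is the paper's bipartite double-counting argument). Where you genuinely diverge is the key step, the per-clique bound of $\Delta$ on $|\{x : x \notin C_l,\ C_l \subseteq N(x)\}|$. The paper proves this dynamically, by tracing the \emph{locally minimal} construction: each $C_l$ is created by rule (ii) of Definition \ref{local_min} from some edge $\{x,y\}$, all later rule-(i) extensions only shrink the relevant set, and hence every such vertex lies in $N(x) \cap N(y)$, of size $O(\Delta)$. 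You instead give a static, two-line argument: fix any $w \in C_l$ (each clique contains an edge, so is nonempty), and note that $C_l \subseteq N(x)$ forces $x \in N(w)$, so the count is at most $|N(w)| \le \Delta$. Your argument is simpler, easier to verify, and strictly more general, since it uses local minimality only through Lemma \ref{lemma_gsum1} for the containment term; the $O(|\C|\Delta)$ bound on the neighborhood term holds for \emph{any} cover by nonempty cliques. What the paper's heavier, construction-tracking argument buys is reusability: it is exactly the template that Lemma \ref{lemma_csum2} refines under degeneracy ordering, where the $\Delta$ improves to $d$ because the bound comes from which sets the modified algorithm actually updates (namely $N_d(x) \cap N(y)$), not from a static property of the final cover; a representative-vertex argument like yours cannot recover that refinement, since $|N(w)|$ may be as large as $\Delta$ even when $|N_d(w)| \le d$.
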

\begin{proof}
Using the two conditions that a \emph{candidate clique} of a vertex could satisfy (Definition \ref{cand_cliq}), we can express the sum as follows.
\begin{equation}
\label{eq:01}
  \sum_{x \in V} |S_x| = \sum_{x\in V} |\{l | x \in C_l\}| + \sum_{x \in V} |\{l | x \not\in C_l, C_l \subseteq N(x)\}|  
\end{equation}

We can bound the first sum on the right side using Lemma \ref{lemma_gsum1} to obtain 
$$\sum_{x\in V} |\{l | x \in C_l\}| = \sum_{C_l \in \C} |C_l| = O(m)$$
We obtain a bound for the second term in the right-hand-side of Equation \ref{eq:01} as follows.

Since $\C$ is \emph{locally minimal}, we must have created every new clique $C_{k+1}$ using rule (ii) of Definition \ref{local_min}. Immediately after covering edge $\{x,y\}$, it must be the case that $k+1 \in S_z, \forall z \in (N(x) \cap N(y)) \cup \{x, y\}$.

On the other hand, whenever applicable we used rule (i) of Definition \ref{local_min} to expand the clique cover by an uncovered edge $\{x,y\}$. From Lemma \ref{lemma_cover}, this corresponds to having $l \in S_x \cap S_y$. But $C_l$ is an existing clique, and all \emph{candidate clique sets} that include $l$ have already been accounted for when we created $C_l$ using rule (ii) of Definition \ref{local_min}.

Therefore, any vertex $z$ such that $z \not\in C_l$, but $C_l \subseteq N(z)$, must be a common neighbor of $x$ and $y$, such that the edge $\{x,y\}$ required the creation of $C_l$ using rule (ii) of Definition \ref{local_min}. And $|\{z| z \not\in C_l, C_l \subseteq N(z)\}| \leq |N(x) \cap N(y)| = O(\Delta)$.

Summing over all cliques we get $$\sum_{C_l \in \C} |\{x |x \not\in C_l, C_l \subseteq N(x)\} | = O(|\C|\Delta)$$ The sum on the left hand side of preceding equality is another way to express the sum for which we needed a bound. To see this consider a bipartite graph with vertices in $V$ on one side, and the cliques in $\C$ on the other sides. In this bipartite graph, an edge connects a vertex to a clique if condition (ii) of Definition \ref{cand_cliq} holds for the clique. Sum of degrees from one side of this bipartite graph is equal to the sum of the degrees from the other side. Hence we have the following equality.
$$\sum_{x \in V} |\{l | x \not\in C_l, C_l \subseteq N(x) \}|$$
$$ = \sum_{C_l \in \C} |\{x | x \not\in C_l, C_l \subseteq N(x)\} | $$

It follows that 

$\sum_{x \in V} |\{l | x \not\in C_l, C_l \subseteq N(x) \}|= O(|\C|\Delta)$
\end{proof}

\begin{property}
\label{prop_1}
Consider a graph $G$ whose connected components are triangle-free. Every edge of $G$ has to be covered with a distinct clique. Therefore, every clique would contribute to $S_x$ values of exactly 2 vertices, the end vertices of the edge contained in the clique. And, $\sum_{x \in V} S_x = \sum_{x \in V} |N(x)| = 2m = \Theta(m)$. A similar bound is true for dense graphs, e.g., for a complete graph $m = n^2$, and $\sum_{x\in V} |S_x| = n = O(m)$.
\end{property}

Using connection to \emph{intersection graph basis}, we obtain following connection between \emph{minimum clique cover} and \emph{candidate clique sets}.

\begin{restatable}{lemma}{LemmaIGB}
\label{lemma_igb}
Let $\C = \{C_1, C_2, ..., C_k\}$ be a clique cover of $G=(V,E)$, and let $S_x$ be the \emph{candidate clique set} of a vertex $x$ in $V$. If $|\cup_{x \in V} S_x|$ is minimum over the family of \emph{candidate clique sets} resulting from the clique covers of $G$, then $\C$ is a \emph{minimum clique cover} of $G$.
\end{restatable}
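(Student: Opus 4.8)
The plan is to show that for \emph{any} clique cover $\C$ of $G$ the quantity $|\cup_{x \in V} S_x|$ equals the number of cliques $|\C| = k$. The lemma then follows immediately, since minimizing the number of cliques is exactly the definition of a \emph{minimum clique cover}.

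First I would record the intersection-graph-basis correspondence. Given $\C = \{C_1, \ldots, C_k\}$, assign to each vertex $x$ the set $F_x = \{l : x \in C_l\} \subseteq \{1, \ldots, k\}$; because $\C$ covers every edge of $G$, this family realizes $G$ as an intersection graph with ground set $X = \{1, \ldots, k\}$, and by the Intersection Graph Basis definition a minimum clique cover corresponds precisely to such a ground set of minimum cardinality $k$.

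The crux is the identity $\cup_{x \in V} S_x = \{1, \ldots, k\}$. The inclusion $\subseteq$ is immediate, as every candidate-clique index lies in $[1,k]$ by definition. For the reverse inclusion, fix any $l \in [1,k]$: the clique $C_l$ is nonempty (it covers at least one edge), so for any $x \in C_l$ the first defining clause of a candidate clique, namely $x \in C_l$, gives $l \in S_x$ and hence $l \in \cup_{x \in V} S_x$. Combining the two inclusions yields $|\cup_{x \in V} S_x| = k = |\C|$.

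Finally I would close the argument: since this identity holds uniformly for every clique cover of $G$, minimizing $|\cup_{x \in V} S_x|$ over the family of candidate clique sets arising from clique covers of $G$ coincides with minimizing $|\C|$ over all clique covers, and a cover attaining this minimum is by definition a \emph{minimum clique cover}. I expect no real obstacle here; the one point deserving a word of care is that the ``non-membership'' candidate cliques (clause (ii), $x \notin C_l$ with $C_l \subseteq N(x)$) contribute no new indices to the union beyond $\{1, \ldots, k\}$, so they neither inflate nor shrink the count — each clique index is already supplied by clause (i) together with the nonemptiness of $C_l$.
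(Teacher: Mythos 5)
Your proof is correct, and it takes a genuinely more elementary route than the paper's. You establish the unconditional identity $\bigl|\bigcup_{x \in V} S_x\bigr| = |\C|$ for \emph{every} clique cover: each index $l \in [1,k]$ enters some $S_x$ through clause (i) at any vertex of $C_l$, and no other index can occur, so minimizing the union size over clique covers is literally minimizing $k$, and the lemma collapses to the definition of \emph{minimum clique cover}. The paper instead strips $S_x$ down to $F_x = \{l \mid x \in C_l\}$, sets $Y = \bigcup_x F_x$ and $Z = \bigcup_x S_x$, argues $|Y| = |Z|$, shows $Y$ is a minimum \emph{intersection graph basis} realized by the family $\{F_x\}$, and then invokes the cited equivalence between minimum intersection graph bases and minimum clique covers. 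Both arguments turn on the same observation --- the clause-(ii) memberships contribute no indices beyond those already supplied by clause (i) --- but you prove it outright and unconditionally, whereas the paper deduces $|Y| = |Z|$ from the minimality hypothesis on $|Z|$, a hypothesis your identity shows is not actually needed for that equality. What the paper's detour buys is the explicit link to intersection graph bases, the conceptual frame the section is built around; what your route buys is self-containment (no appeal to outside results) and the clarifying insight that the lemma is nearly a tautology once the identity is in hand. One small remark: for your reverse inclusion, nonemptiness of $C_l$ is all you need, and is weaker than ``covers at least one edge''; even a singleton clique contributes its index through clause (i), and a degenerate empty clique would contribute vacuously through clause (ii), so the identity is robust to edge cases of the definition.
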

\begin{proof}
Let $F_x = S_x \backslash \{l | x \not\in C_l\}$, $Y = \cup_{x \in V} F_x$, and $Z = \cup_{x\in V} S_x$.

Note that $F_x = \{l | x\in C_l\}$, and by our assumption $|Z|$ is minimum. Therefore, each of the clique indices in $Z$ is contained in $Y$, i.e., $|Y|= |Z|$.

$Y$ is an intersection graph basis of $G$, and $\{F_1, F_2, ..., F_n\}$ is the family of subsets of $Y$ where $F_i$ corresponds to vertex $i \in V$. To see this assume $|Y|$ is not minimum, i.e., $\exists l \in Y$, such that $C_l$ can be removed from $\C$. Which contradicts our assumption that $|Z|$ is minimum.

Since $Y$ is an intersection graph basis of $G$, $\C$ is a \emph{minimum clique cover} of $G$ [\cite{erdos1966representation}, \cite{roberts1985applications}].
\end{proof}
\section{Greedy Framework}
\label{sec:greedy}

We describe a greedy framework that computes \emph{locally minimal clique cover}. We show that \emph{locally minimal clique cover} computed by our greedy framework can be efficiently turned into \emph{minimal clique cover}. We start with a basic greedy framework, and subsequently modify the basic framework to derive an improved greedy framework with better time bound. We conclude the section showing modifications of the improved greedy framework that make the time bound attuned to degeneracy of graph.

\subsection{Basic Greedy Framework.}

\begin{algorithm}
\caption{\textit{Basic Greedy Framework}}
\label{alg:dineB}
\textbf{Input:} A graph $G=(V,E)$\\
\textbf{Output:} A \emph{locally minimal clique cover} $\C$ of $G$
\begin{algorithmic}[1]
\STATE{Select an ordering of edges in $E$, and process edges in that order in steps 3-13}
\STATE{$k \leftarrow 0$}
\WHILE{$\{x,y\} \in E$ uncovered}
\STATE{$S_x \leftarrow \{l | C_l$ is \emph{candidate clique} of $x$, $1\leq l \leq k\}$}
\STATE{$S_y \leftarrow \{l | C_l$ is \emph{candidate clique} of $y$, $1 \leq l \leq k \}$}
\IF{$S_x \cap S_y \ne \phi$}
\STATE{Select $l \in S_x \cap S_y$}
\STATE{$C_l \leftarrow C_l \cup \{x,y\}$}
\ELSE
\STATE{$k \leftarrow k + 1$}
\STATE{$C_k \leftarrow \{x,y\}$}
\ENDIF
\ENDWHILE
\RETURN $\C$ 
\end{algorithmic}
\end{algorithm}

Our basic greedy framework is described in Algorithm \ref{alg:dineB}. Algorithm \ref{alg:dineB} strictly follows the two rules defined for \emph{locally minimal clique cover}. To apply rule (i) of Definition \ref{local_min}, it uses characterization described in Lemma \ref{lemma_cover}. At every iteration, it computes set of \emph{candidate cliques} for two end vertices of an uncovered edge $\{x,y\}$ (step 4 and 5). If there is an index of a clique $C_l$ contained in the \emph{candidate clique sets} of $x$ and $y$, it covers the edge using $C_l$ (step 7 and 8). Otherwise, it creates a new clique containing only $\{x,y\}$ (step 10 and 11). After the execution of steps 1 - 13, we have following result.

\begin{restatable}{lemma}{LemmaGreedyOne}
\label{lemma_greedy1}
For a graph $G=(V,E)$, Algorithm \ref{alg:dineB} computes a \emph{locally minimal clique cove}r $\C$ of $G$.
\end{restatable}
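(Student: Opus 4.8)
The plan is to show that Algorithm~\ref{alg:dineB} satisfies the definition of a \emph{locally minimal clique cover} (Definition~\ref{local_min}) directly, by verifying two things: first, that the algorithm's output $\C$ is genuinely a clique cover of $G$ (every edge is covered, and every $C_l$ is in fact a clique); and second, that $\C$ is obtained from the empty cover by the exact construction prescribed in Definition~\ref{local_min}, using rule (i) whenever it is applicable and rule (ii) otherwise.

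First I would argue termination and coverage. The \textbf{while} loop processes uncovered edges, and each iteration either adds $\{x,y\}$ to an existing clique or creates a new singleton-edge clique; in both cases the edge $\{x,y\}$ becomes covered and no previously covered edge is uncovered. Since $E$ is finite and each iteration strictly reduces the number of uncovered edges, the loop terminates with every edge covered, so $\C$ covers all of $E$.

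Next I would verify that each $C_l$ remains a clique throughout. A new clique is created in step~11 as a single edge $\{x,y\}$, which is trivially a clique. The only other modification is step~8, $C_l \leftarrow C_l \cup \{x,y\}$, executed when $l \in S_x \cap S_y$. Here the key step is to invoke Lemma~\ref{lemma_cover}: because $\{x,y\}$ is uncovered and $l \in S_x \cap S_y$, that lemma guarantees $\{x,y\}$ can be covered by $C_l$, i.e.\ $x$ and $y$ can both be added to $C_l$ while keeping it a clique. Thus every $C_l$ is a clique, and $\C$ is a valid clique cover.

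Finally I would match the algorithm's two branches to the two rules of Definition~\ref{local_min} and argue local minimality. The condition $S_x \cap S_y \ne \phi$ tested in step~6 is, by Lemma~\ref{lemma_cover}, \emph{exactly} the condition that $\{x,y\}$ can be covered by some clique already in $\C$, which is the applicability condition of rule (i); in that case steps~7--8 realize rule~(i) by covering $\{x,y\}$ with exactly one existing clique and retaining the rest. When $S_x \cap S_y = \phi$, again by Lemma~\ref{lemma_cover} no existing clique can cover $\{x,y\}$, so rule~(i) is inapplicable and steps~10--11 create a new clique $C_{k+1}=\{x,y\}$, which is precisely rule~(ii). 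Hence at every edge expansion the algorithm uses rule~(i) whenever applicable and rule~(ii) otherwise, starting from the empty cover, so $\C$ is \emph{locally minimal} by definition. \textbf{The main obstacle} is the correctness hinge at step~8: one must be careful that the set membership test $l \in S_x \cap S_y$ truly certifies that $\{x,y\}$ can be merged into $C_l$ as a clique, which is exactly what Lemma~\ref{lemma_cover} supplies—so the bulk of the work reduces to a clean application of that characterization rather than any fresh combinatorial argument.
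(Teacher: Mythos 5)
Your proof is correct and follows essentially the same route as the paper's: maintaining the invariant that the cover at hand is \emph{locally minimal} after each iteration, by matching steps 7--8 to rule (i) of Definition~\ref{local_min} via Lemma~\ref{lemma_cover} and steps 10--11 to rule (ii). Your additional verifications of termination, edge coverage, and that each $C_l$ remains a clique are details the paper leaves implicit, but they do not change the argument.
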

\begin{proof}
After every iteration of steps 4-12, Algorithm \ref{alg:dineB} maintains the invariant that the \emph{clique cover} computed for the subset of edges covered after the iteration is \emph{locally minimal}. This is straightforward to see as steps 8 and 11 are the only steps where we modified the \emph{clique cover}.

The invariant holds for either of the steps: step 8 directly follows from rule (i) of Definition \ref{local_min}, and Lemma \ref{lemma_cover}, and, step 11 is a straightforward use of rule (ii) of Definition \ref{local_min}. Therefore, after the execution of steps 1-13 we have a \emph{locally minimal clique cover}.
\end{proof}

For the data structures, we assume that the graph is represented by its adjacency lists, and that each list is stored in a hash table. With universal hashing or simple uniform hashing, an element can be accessed, or inserted, or deleted in $O(1)$ average time per operation. We can also convert every list to an ordered list taking $O(m\log\Delta)$ time in total, and then insertion, or deletion, or access of any element would take $O(\log \Delta)$ time. For the \emph{candidate clique sets}, we assume similar data structure as adjacency lists, where each list maintains \emph{candidate clique set} of a vertex.

For step 7 in Algorithm \ref{alg:dineB}, one can do substantial computation per selection without affecting our obtained time bounds; but, for all our analysis, we stick to simple choices such as selecting a smallest clique, or a largest clique, or clique created earliest, or a random clique. For the next analysis we assume an edge ordering as follows. Sort the vertices from lowest to highest degree and then select vertices in that order. For each selected vertex, cover all the edges incident on the vertex. With the choices of edge ordering and clique selection, we obtain following result.

\begin{restatable}{lemma}{LemmaGreedyOneb}
\label{lemma_greedy1b}
For a graph $G=(V,E)$, Algorithm \ref{alg:dineB} computes a \emph{locally minimal clique cover} $\C$ of $G$, taking $O(m^2)$ time and $O(m)$ space.
\end{restatable}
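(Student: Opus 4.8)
The plan is to get correctness essentially for free and then concentrate entirely on the resource bounds. Since Algorithm~\ref{alg:dineB} faithfully implements rules (i) and (ii) of Definition~\ref{local_min} no matter which edge ordering or which clique-selection rule is used in step~7, Lemma~\ref{lemma_greedy1} already guarantees that the particular degree-sorted edge ordering combined with any of the fixed simple selection rules (smallest, largest, earliest, or random clique) still produces a \emph{locally minimal clique cover}. Hence the only genuinely new content is the $O(m^2)$ time and $O(m)$ space claims, which I would establish by separately bounding the number of while-loop iterations and the cost of a single iteration.

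First I would bound the number of iterations. Each pass through steps~4--12 covers the currently chosen uncovered edge $\{x,y\}$, either by augmenting an existing clique (steps~7--8) or by creating a singleton clique (steps~10--11), so the number of uncovered edges strictly decreases each iteration and never increases. Thus there are at most $m$ iterations. I would also note that the preprocessing for the edge ordering (sorting vertices by degree, e.g.\ by counting sort, and scanning their incident edges) costs $O(n+m)$ and is dominated by the main loop; in fact this specific ordering is inessential for the $O(m^2)$ bound and matters only for the later improved analyses.

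The heart of the argument, and the main obstacle, is showing that a single iteration costs $O(m)$. The dominant work is steps~4 and~5, the computation of $S_x$ and $S_y$. Using the hash-table representation of the adjacency lists, I would compute $S_x$ by scanning each current clique $C_l$ exactly once: in $O(|C_l|)$ time one can test both whether $x \in C_l$ and whether every vertex of $C_l$ lies in $N(x)$, which is precisely the candidate-clique test of Definition~\ref{cand_cliq}. Summed over all current cliques this is $O\!\left(\sum_{C_l \in \C}|C_l|\right)$. The key point is that by the step-wise invariant established in the proof of Lemma~\ref{lemma_greedy1}, the partial cover is locally minimal at \emph{every} step, so Lemma~\ref{lemma_gsum1} applies throughout and yields $\sum_{C_l \in \C}|C_l| = O(m)$. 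Therefore computing $S_x$ (and likewise $S_y$) is $O(m)$; forming $S_x \cap S_y$ via hash lookups is $O(\min(|S_x|,|S_y|)) = O(m)$ since $|S_x| \le |\C| \le m$; and steps~7--8 or~10--11 add at most two vertices to a clique in $O(1)$. So each iteration is $O(m)$, and multiplying by the at most $m$ iterations gives the $O(m^2)$ time bound.

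For space, the graph occupies $O(n+m)$ and the cover occupies $\sum_{C_l \in \C}|C_l| = O(m)$, again by Lemma~\ref{lemma_gsum1}. Since the basic framework recomputes $S_x$ and $S_y$ afresh in each iteration rather than storing every candidate clique set, only two such sets, each of size at most $|\C| \le m$, are held at a time, costing $O(m)$. The total is thus $O(m)$. The one subtlety I would take care to spell out is that Lemma~\ref{lemma_gsum1} is a statement about a locally minimal cover, so I must explicitly invoke the invariant of Lemma~\ref{lemma_greedy1} to justify applying it to the partial cover present during an arbitrary iteration.
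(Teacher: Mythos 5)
Your proposal is correct and follows essentially the same route as the paper: correctness delegated to Lemma \ref{lemma_greedy1}, a per-iteration $O(m)$ bound for steps 4--5 obtained from $\sum_{C_l \in \C}|C_l| = O(m)$ via Lemma \ref{lemma_gsum1} (which, being stated for locally minimal covers of subsets of edges, indeed applies to every partial cover, as you rightly flag), at most $m$ iterations, and $O(m)$ space from holding only two candidate clique sets at a time. The one slip is your $O(1)$ claim for steps 7--8: marking the newly covered edges among $x$, $y$, and the vertices of $C_l$ --- bookkeeping that your own ``the number of uncovered edges strictly decreases'' argument tacitly requires --- costs $O(\Delta)$ per iteration as the paper notes, though this is harmlessly absorbed by your $O(m)$ per-iteration bound.
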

\begin{proof}
Lemma \ref{lemma_greedy1} shows the proof of \emph{locally minimal clique cover}. Now, we show the analysis for time and space bounds.

For step 1, our choice of edge ordering takes $O(n\log n)$ time. Isolated vertices would not affect execution of Algorithm $\ref{alg:dineB}$. Therefore, for rest of the steps we can assume $n=O(m)$, and time bound for this step would be dominated by the time bounds of the other steps shown next.

For steps 4 and 5, for each clique $C_l$ we need to check whether $C_l$ is contained in the neighborhood of $x$ and $y$, or whether $x$ and $y$ are contained in $C_l$. For each such $C_l$, it takes $O(min\{|C_l|, |N(x)|, |N(y)|\})$ time. Combining the fact that $\sum_{l=1}^k |C_l| = O(m)$ from Lemma \ref{lemma_gsum1}, we get time bounds of steps 4 and 5 for the entire execution of the algorithm as follows. $O(\sum_{\{x,y\} \in E} \sum_{l=1}^k min\{|C_l|, |N(x)|, |N(y)|\}) =O( \sum_{\{x,y\} \in E} \sum_{l=1}^k |C_l|) = O(\sum_{\{x,y\} \in E} m) = O(m^2)$.

For our choices of clique selection, step 7 takes $O(m)$ time, since, $|\C| = O(m)$ and $|S_x| = |S_y| = O(m)$. And, step 8 takes $O(\Delta)$ time: to make sure that any uncovered edge among $x$, $y$ and vertices of $C_l$ is being identified as covered. Steps 10 and 11 take $O(1)$ time.

It is straightforward to see that the algorithm utilizes at most two sets of \emph{candidate clique} sets at steps 4 and 5, and each of them takes $O(m)$ space. 

Combining preceding bounds, we conclude that Algorithm \ref{alg:dineB} does $O(m^2)$ computations, using $O(m)$ space.
\end{proof}

\subsubsection{Computing \emph{Minimal Clique Cover}.}
\label{sec_pp}

In light of the characterization in Lemma \ref{lemma_lm_char4}, we need efficient ways for finding cliques that may be redundant in a \emph{locally minimal clique cover}. We can obtain a \emph{minimal clique cover} from an arbitrary clique cover in different ways. For example, we can compute all removable assignments (vertices or edges that can be removed) in a clique cover similar to \cite{ennis2012assignment}, and remove assignments repeatedly until no such assignments are removable. Here, we are going to exploit the characterization we obtained for \emph{locally minimal clique cover} in Lemma \ref{lemma_gsum2}. We describe fast realization of a post-processing step proposed by \cite{kou1978covering} to get \emph{minimal clique cover}, and analyze the step for \emph{locally minimal clique cover}.

\begin{algorithm}
\caption{Post-Process}
\label{alg:pp}
\textbf{Input:} A graph $G=(V,E)$, and a \emph{clique cover} $\C$ of $G$\\
\textbf{Output:} A \emph{minimal clique cover} $\C^{'}$ of $G$
\begin{algorithmic}[1]
\STATE{$\forall \{x,y\} \in E$ compute $AppCount(\{x,y\})$ denoting number of times $\{x,y\}$ appears in $\C$}
\STATE{$\C^{'} \leftarrow \phi$}

\FOR{\textbf{each} $C_l \in \C$}
\IF{$AppCount(\{x,y\}) > 1$, $\forall \{x,y\} \in C_l, x \ne y$}
\STATE{$AppCount(\{x,y\}) \leftarrow AppCount(\{x,y\}) - 1$, $\forall \{x,y\} \in C_l, x \ne y$ }
\ELSE
\STATE{$\C^{'} \leftarrow \C^{'} \cup \{C_l\}$}
\ENDIF
\ENDFOR
\RETURN $\C^{'}$ 
\end{algorithmic}
\end{algorithm}

Algorithm \ref{alg:pp} shows our realization of post-processing step proposed by \cite{kou1978covering}. At step 1, it enumerates all edges contained in the cliques of $\C$, and for every edge of graph, starts keeping track of number of appearances of the edge in $\C$. If for a clique $C_l$, all of its edges appear more than once in $\C$ (step 4), then we discard $C_l$ and decrease appearance counts for all the edges contained in $C_l$ (step 5). Otherwise, we add $C_l$ to the \emph{minimal clique cover} (step 7). Following is straightforward to show.

\begin{restatable}{lemma}{LemmaPP}
\label{lemma_pp}
A \emph{minimal clique cover} with all the characterizations of \emph{locally minimal clique cover} can be obtained from a \emph{locally minimal clique cover} in $O(m\Delta)$ time, using $O(m)$ space.
\end{restatable}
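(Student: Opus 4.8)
The plan is to analyze Algorithm \ref{alg:pp} directly, establishing four things in turn: that its output $\C'$ is still a clique cover, that $\C'$ is inclusion-wise minimal, that $\C'$ inherits every characterization of the input locally minimal cover, and finally the claimed $O(m\Delta)$ time and $O(m)$ space bounds. The central device is an invariant describing what the counters mean at each step. At the moment the loop begins processing a clique $C_l$, I would claim that $AppCount(\{x,y\})$ equals the number of already-kept cliques in $\C'$ containing $\{x,y\}$ plus the number of not-yet-processed cliques (including $C_l$) containing $\{x,y\}$. This follows by induction over the loop: the counter is initialized to the full multiplicity in $\C$ at step 1, is left untouched whenever a clique is kept at step 7, and is decremented exactly on the edges of a discarded clique at step 5.

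Coverage is then immediate from the invariant: a clique is discarded only when all of its edges have count at least $2$, so after the decrement at step 5 every one of its edges still has count at least $1$, and edges outside the clique are untouched; hence $AppCount \geq 1$ is preserved throughout, and at termination, when no unprocessed cliques remain, this final value equals the number of kept cliques covering the edge, which is therefore positive. So $\C'$ covers $E$.

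For minimality I would argue that every kept clique owns a private edge. A clique $C_l$ is added to $\C'$ only because some edge $e \in C_l$ has $AppCount(e)=1$ at that step; by the invariant $C_l$ is then the unique clique among the kept-so-far and the unprocessed cliques that contains $e$, so no later clique contains $e$ either and its counter can never change again. Thus in the final cover $C_l$ is the sole clique containing $e$, and removing any clique from $\C'$ would uncover its private edge, so $\C'$ is minimal. For the characterizations of Lemmas \ref{lemma_lm_char1}--\ref{lemma_lm_char4}, I would simply note that $\C' \subseteq \C$ and that each property is monotone under deleting cliques: vertex appearance counts (Lemma \ref{lemma_lm_char1}) only decrease, while the universally quantified statements that no clique contains another (Lemma \ref{lemma_lm_char2}) and that every pair of cliques exhibits a non-adjacent vertex pair (Lemma \ref{lemma_lm_char3}) hold for any subfamily, and Lemma \ref{lemma_lm_char4} then follows exactly as before.

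The resource bounds reduce to the earlier counting lemmas. Both step 1 and the main loop touch each (clique, contained-edge) incidence a constant number of times, and the total number of such incidences is $\sum_{C_l \in \C} |C_l|(|C_l|-1)/2 = \sum_{\{x,y\}\in E} |\{C_l \mid \{x,y\}\in C_l\}|$, which is $O(m\Delta)$ by Lemma \ref{lemma_gsum2}; with $O(1)$ hashed access to $AppCount$ the running time is $O(m\Delta)$. The storage is $O(m)$ for the per-edge counters, plus $O(\sum_{C_l \in \C}|C_l|) = O(m)$ for $\C$ and $\C'$ by Lemma \ref{lemma_gsum1}. I expect the main obstacle to be the minimality argument: pinning the $AppCount$ invariant down precisely enough to conclude that each retained clique keeps a genuinely private edge in the \emph{final} cover, rather than only at the instant it was inspected. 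The resolution is the observation that once an edge's count reaches $1$ on a kept clique no unprocessed clique can contain it, so that count is frozen for the remainder of the run; everything else is bookkeeping against the incidence bound of Lemma \ref{lemma_gsum2}.
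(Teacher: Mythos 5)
Your proposal is correct and takes essentially the same route as the paper: it analyzes Algorithm \ref{alg:pp} directly, justifies keep/discard decisions via the meaning of the $AppCount$ counters, and derives the $O(m\Delta)$ time and $O(m)$ space bounds from Lemma \ref{lemma_gsum2} and Lemma \ref{lemma_gsum1}. Your explicit counter invariant and private-edge argument are simply a more rigorous rendering of the paper's terse claim that a kept clique ``contains at least one edge that is not covered by remaining cliques,'' and your observation that the characterizations of Lemmas \ref{lemma_lm_char1}--\ref{lemma_lm_char4} are inherited by the subfamily $\C' \subseteq \C$ makes explicit what the paper leaves implicit.
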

\begin{proof}
\emph{Minimal Clique Cover}. When step 4 is true, $\C_l$ can be discarded, and we do not include $C_l$ in minimal clique cover $\C^{'}$. In step 5, accounting for the fact that $C_l$ is discarded, we affect that by decreasing count of $C_l$ for the edges contained in $C_l$. On the other hand, if for any $C_l$, step 4 is false, then we know $C_l$ contains at least one edge that is not covered by remaining cliques in $C$, and $C_l$ must be included in the \emph{minimal clique cover}. We make sure this in step 7.

Time and Space. From Lemma \ref{lemma_gsum2}, we know total number of appearances of edges in a locally minimal clique cover is $O(m\Delta)$. Since we only enumerate every edge contained in $\C$ once in step 1, once in step 4, and at most once in step 5, entire post-processing step takes $O(m\Delta)$ time. We only use $O(m)$ counters initialized at step 1, therefore, space bound is $O(m)$.
\end{proof}

\subsection{Improved Greedy Framework.}

Our basic greedy framework sets the stage for computing \emph{locally minimal clique cover}, thereby \emph{minimal clique cover}, in simplest possible terms. Now, we describe modifications of the basic framework that saves us from repeated computations of \emph{candidate clique sets} in step 4 and 5 of Algorithm \ref{alg:dineB}. To this end, we outline two parts incremental computations for \emph{candidate clique sets} as follows.

\begin{enumerate}
    \item Whenever a new clique  $C_k = \{x,y\}$ is created, we add $k$ to the \emph{candidate clique sets} of common neighbors of $x$ and $y$; and also to the \emph{candidate clique sets} of $x$ and $y$.
    \item Whenever an existing clique $C_l$ is extended by an uncovered edge $\{x,y\}$, we remove $l$ from the \emph{candidate clique set} of a vertex $z$, such that both vertices $x$ and $y$ are not in $N(z)$. And we do this for all such vertices, except for the cases $z = x$ or $z = y$.
\end{enumerate}

For the first of part of computation since we just have created $C_k = \{x,y\}$, for every common neighbor of $x$ and $y$, say $z$, $C_k = \{x, y\}$ is contained in $N(z)$. Therefore, $k$ must be added to $S_z$.

For the second of part of computation since we just have included $\{x,y\}$ to $C_l$, every $z \in V$ such that $l$ is contained in $S_z$, must contain both $x$ and $y$ in $N(z)$. Otherwise, we must remove $l$ from $S_z$, and restore validity of all \emph{candidate clique sets}.

For the second part, to find $z$ so that we can remove $l$ from $S_z$, we select a vertex from $C_l$, say $w$, arbitrarily. Because, every such $z$ is a common neighbor of vertices contained in $C_l$, processing only the neighbors of $w$, $N(w)$, would suffice.

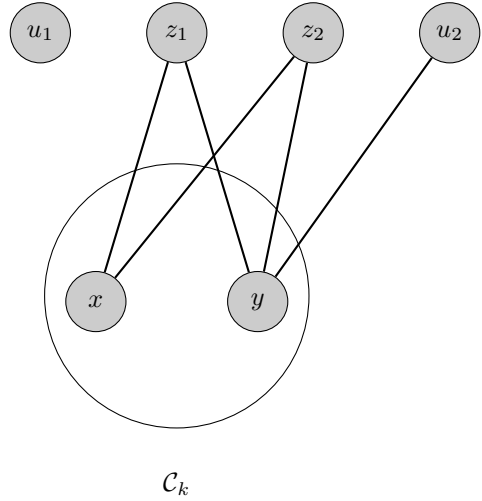
\begin{figure}[!t]
    \centering
     \tikzset{main node/.style={circle,fill=black!20,draw,minimum size=0.8cm,inner sep=0pt},
            }
 \begin{tikzpicture}
    \node[main node] (1) {$z_1$};
    \node[main node] (2) [right = 1.0 cm of 1] {$z_2$};
    
    \node[main node] (3) [below left = 3.0cm and 0.5cm of 1] {$x$};
    \node[main node] (4) [below right = 3.0cm and 0.5cm of 1] {$y$};
    
    \node[main node] (5) [right = 1.0 cm of 2] {$u_2$};
    
    \node[main node] (6) [left = 1.0 cm of 1] {$u_1$};

    \draw (0,-3.5) circle (50pt);
    
    \node[] at (0,-6.0) {$\C_k$};
    
    \path[draw,thick]
    (1) edge node {} (3)
    (1) edge node {} (4)
    (2) edge node {} (3)
    (2) edge node {} (4)
    (5) edge node {} (4)
    ;
    
\end{tikzpicture}
    %\resizebox{5.0cm}{5.0cm}{\input{tikz/dine1}}
    \caption{Except $u_1$ and $u_2$, all other vertices would have $C_k$ as \emph{candidate clique} after the creation of $C_k$}
    \label{fig:dine1}
\end{figure}

\begin{figure}[!t]
    \centering
     \tikzset{main node/.style={circle,fill=black!20,draw,minimum size=0.8cm,inner sep=0pt},
            }

\tikzset{clique node/.style={circle,fill=black!0,draw,minimum size=4.0cm,inner sep=0pt},
            }
 \begin{tikzpicture}
    
    \node[clique node] (8) {};
    
    \node[main node] (1) [above right = 1.5cm and 0.2cm of 8] {$z_1$};
    \node[main node] (2) [right = 1.0 cm of 1] {$z_2$};
    
    \node[main node] (3) [below left = 1.5cm and 1.5cm of 1] {$w_1$};
    \node[main node] (4) [below left= 1.0cm and 0.4cm of 3] {$w_2$};
    \node[main node] (5) [below right = 1.0cm and 0.4cm of 3] {$w_3$};
    
    \node[main node] (7) [below = 1.0 cm of 2] {$y$};
    
    \node[main node] (6) [below = 1.5 cm of 7] {$x$};

    %\draw (0,-3.5) circle (50pt) [name=cl];

    \node[] at (0,-2.5) {$\C_l$};
    
    \path[draw,line width=0.05mm]
    (6) edge node {} (7)
    (1) edge node {} (6)
    (1) edge node {} (7)
    (2) edge node {} (7)
    ;
    
    \path[draw,line width=0.5mm]
    (1) edge node {} (8)
    (2) edge node {} (8)
    (6) edge node {} (8)
    (7) edge node {} (8)
    ;
    
\end{tikzpicture}
    %\resizebox{5.0cm}{5.0cm}{\input{tikz/dine2}}
    \caption{While including $x$ and $y$ to $C_l$, we would remove $l$ from \emph{candidate clique set} of $z_2$, $S(z_2)$, since $x$ is not in $N(z_2)$. Thick edge between a vertex and the clique $C_l$ denotes the vertices in the clique are contained in the neighborhood of the vertex.}
    \label{fig:dine2}
\end{figure}
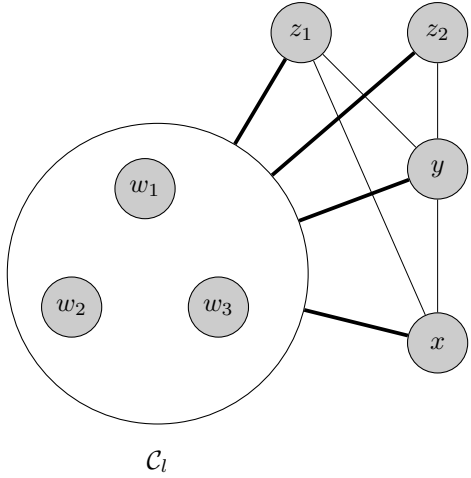

\begin{algorithm}
\caption{\textit{Improved Greedy Framework}}
\label{alg:dine}
\textbf{Input:} A graph $G=(V,E)$\\
\textbf{Output:} A \emph{minimal clique cover} $\C$ of $G$
\begin{algorithmic}[1]
\STATE{Select an ordering of edges in $E$, and process edges in that order in steps 3-13}
\STATE{$k \leftarrow 0$}
\WHILE{$\{x,y\} \in E$ uncovered}
\IF{$S_x \cap S_y \ne \phi$}
\STATE{Select $l \in S_x \cap S_y$}
\STATE{Select a vertex $w \in C_l$ arbitrarily. $\forall z \in N(w) \backslash \{x,y\}$, remove $l$ from $S_z$, if both $x$ and $y$ are not in $N(z)$}
\STATE{$C_l \leftarrow C_l \cup \{x,y\}$}
\ELSE
\STATE{$k \leftarrow k + 1$}
\STATE{$C_k \leftarrow \{x,y\}$}
\STATE{Add $k$ to $S_z$, $\forall z \in (N(x) \cap N(y)) \cup \{x, y\}$}
\ENDIF
\ENDWHILE
\STATE{$\C \leftarrow$ Post-Process($G$, $\{C_1, C_2, ..., C_k\}$)}
\RETURN $\C$ 
\end{algorithmic}
\end{algorithm}

Figure \ref{fig:dine1} shows a schematic for the first part of the computation, and Figure \ref{fig:dine2} shows a schematic for the second part of the computation. Incorporating these two parts incremental computations of \emph{candidate clique sets}, we can discard computation of \emph{candidate clique sets} in steps 4 and 5 of Algorithm \ref{alg:dineB}. Modified algorithm is outlined in Algorithm \ref{alg:dine} with two parts incremental computations added in steps 6 and 11. These modifications of Algorithm \ref{alg:dineB} would compute same \emph{clique cover} as Algorithm \ref{alg:dineB}. Therefore, following is straightforward to show.

\begin{restatable}{lemma}{LemmaGreedyTwo}
\label{lemma_greedy2}
For a graph $G=(V,E)$, Algorithm \ref{alg:dine} computes a \emph{minimal clique cover} $\C$ of $G$.
\end{restatable}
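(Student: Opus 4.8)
The plan is to reduce the statement to facts already established for the basic framework. The two algorithms differ only in how they obtain the candidate clique sets $S_x, S_y$ that are tested in step~4: Algorithm~\ref{alg:dineB} recomputes them from scratch (its steps 4--5), whereas Algorithm~\ref{alg:dine} maintains them incrementally through the updates in steps 6 and 11. If I can show that these incremental updates keep every $S_z$ identical to its from-scratch value, then by Lemma~\ref{lemma_cover} the two algorithms evaluate the same test $S_x \cap S_y \ne \phi$ and perform the same clique modification at every iteration, so Algorithm~\ref{alg:dine} produces exactly the \emph{locally minimal clique cover} of Algorithm~\ref{alg:dineB} (Lemma~\ref{lemma_greedy1}); step~14 then converts it into a \emph{minimal clique cover} by Lemma~\ref{lemma_pp}.

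The main work is a loop invariant: at the start of every iteration of the while loop, for each $z \in V$ the set $S_z$ equals $\{\, l \mid C_l \text{ is a \emph{candidate clique} of } z\,\}$ relative to the current partial cover $\{C_1,\dots,C_k\}$. This holds vacuously at initialization, where the cover is empty and every $S_z$ is empty. I would then check preservation across the two branches. When step~11 creates $C_k = \{x,y\}$, the vertices for which $C_k$ is a candidate clique are exactly $x,y$ (by condition (i) of Definition~\ref{cand_cliq}) and the common neighbors $z \in N(x)\cap N(y)$ (for whom $C_k \subseteq N(z)$, condition (ii)); step~11 adds $k$ to precisely these sets, so the invariant is restored.

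The harder branch, and the step I expect to be the main obstacle, is the clique extension in steps 5--7. Here I must argue that step~6 performs exactly the removals the invariant demands and that no additions are ever needed. Replacing $C_l$ by $C_l \cup \{x,y\}$ can only shrink the set of vertices satisfying condition (ii), since enlarging $C_l$ makes $C_l \subseteq N(z)$ harder to satisfy; thus no vertex newly becomes a candidate, so only deletions from the sets can be required. The only new members of $C_l$ are $x$ and $y$, and because the edge $\{x,y\}$ was covered by $C_l$ we have $l \in S_x \cap S_y$ already (Lemma~\ref{lemma_cover}), so $x$ and $y$ retain $l$ and must be excluded from removal, which step~6 does via $N(w)\setminus\{x,y\}$. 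For every other vertex $z$ currently holding $l$, condition (ii) forced $C_l \subseteq N(z)$ before the extension, so $z$ is a common neighbor of all of $C_l$; in particular $z \in N(w)$ for the chosen $w \in C_l$, so scanning $N(w)$ reaches every candidate for removal. Among these, $l$ must be deleted from $S_z$ exactly when $C_l \cup \{x,y\} \not\subseteq N(z)$, i.e.\ when $x \notin N(z)$ or $y \notin N(z)$, which is precisely the test in step~6. This establishes the invariant, and with it the claimed equivalence to Algorithm~\ref{alg:dineB} and the conclusion.
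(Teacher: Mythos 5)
Your proposal is correct and follows essentially the same route as the paper: it verifies that the incremental updates in steps 6 and 11 of Algorithm \ref{alg:dine} reproduce exactly the \emph{candidate clique sets} that steps 4--5 of Algorithm \ref{alg:dineB} would compute from scratch, then invokes Lemma \ref{lemma_greedy1} for local minimality and Lemma \ref{lemma_pp} for step 14. Your version is somewhat more rigorous than the paper's (the explicit loop invariant, the monotonicity argument that extensions never create new candidates, and the $N(w)$-coverage argument, which the paper relegates to the prose preceding the lemma); the only nitpick is your phrase that condition (ii) forced $C_l \subseteq N(z)$ for \emph{every} vertex holding $l$ --- members of $C_l$ hold $l$ via condition (i), but since they remain members and the removal test never fires on them (being adjacent to $x$ and $y$ in the extended clique), this does not affect correctness.
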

\begin{proof}
All we need to verify that steps 6 and 11 agree with what would have been computed in step 4 and 5 of Algorithm \ref{alg:dineB}.

Step 11 follows from Definition \ref{cand_cliq}, we added index of $C_k$ to exactly the same sets of \emph{candidate clique sets} that would have been computed in step 4 and 5 of Algorithm \ref{alg:dineB}.

For step 6, since we are including $\{x,y\}$ to $C_l$. After this inclusion, according to the definition of \emph{candidate clique}, for any $z\in V$ with $\{z,x\} \not\in E$ or $\{z,y\} \not\in E$, $C_l$ would not have been computed as \emph{candidate clique} in step 4 and 5 of Algorithm \ref{alg:dineB}. And, we make sure to remove $l$ from exactly those $S_z$ at step 6 of Algorithm \ref{alg:dine}.
\end{proof}

With the bound established in Lemma \ref{lemma_csum}, we obtain following characterizations of Algorithm \ref{alg:dine}.

\begin{restatable}{theorem}{ThmCCSG}
\label{thm_ccsg}
For a graph $G=(V,E)$, Algorithm \ref{alg:dine} computes a \emph{minimal clique cover} $\C$ of $G$, taking $O( m\Delta + |\C|\Delta^2)$ time and $O(m + |\C|\Delta)$ space; and for $\sum_{x \in V} |S_x| = O(m)$, it takes $O(m\Delta)$ time and $O(m)$ space.
\end{restatable}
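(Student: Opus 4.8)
The plan is to inherit correctness from Lemma~\ref{lemma_greedy2} (Algorithm~\ref{alg:dine} produces the same \emph{minimal clique cover} as the basic framework) and to concentrate entirely on the time/space accounting of the while loop plus the single \textsc{Post-Process} call. Write $k$ for the number of cliques the loop creates; this is the cardinality of the locally minimal clique cover to which Lemma~\ref{lemma_csum} applies, and I read the $|\C|$ in the statement as this count (post-processing only deletes cliques, so the final cover is no larger). The two quantitative engines I would lean on are $\sum_{C_l}|C_l| = O(m)$ (Lemma~\ref{lemma_gsum1}) and $\sum_{x\in V}|S_x| = O(m + k\Delta)$ (Lemma~\ref{lemma_csum}), with hashing making every membership test, insertion, and deletion take $O(1)$ expected time.

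First I would dispatch the cheap steps. The loop runs at most $m$ times, since each iteration covers the freshly selected uncovered edge and a covered edge never reverts. Creating a clique (steps~9--11) costs $O(\Delta)$ to form $N(x)\cap N(y)$ and insert $k$ into the relevant candidate clique sets, over exactly $k$ creations, giving $O(k\Delta)$. Extending a clique (steps~6--7) costs $O(|N(w)|)=O(\Delta)$ to sweep $N(w)$ and drop stale indices plus $O(|C_l|)=O(\Delta)$ to record newly covered edges, over at most $m$ extensions, giving $O(m\Delta)$. Step~1 and the scan for uncovered edges cost $O(m\log m)$ and $O(m)$, and \textsc{Post-Process} costs $O(m\Delta)$ by Lemma~\ref{lemma_pp}. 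Everything here is absorbed into $O(m\Delta)$.

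The crux is the emptiness test $S_x\cap S_y\ne\phi$ in step~4, which I would implement by scanning the smaller set and testing membership in the larger, at cost $O(\min\{|S_x|,|S_y|\})$ per iteration. The naive bound $\min\{|S_x|,|S_y|\}=O(\Delta^2)$ (a vertex can lie in $O(\Delta)$ cliques and have $O(\Delta^2)$ cliques buried inside its neighborhood) would only yield $O(m\Delta^2)$, which is too weak, so the improvement must come from amortization. The key observation is that a clique index is inserted into a given $S_v$ at most once (extensions only delete indices, never reinstate them), so the size of $S_v$ at any moment is at most the total number $A_v$ of insertions it ever receives, and $\sum_v A_v = \sum_l\bigl(|N(x_l)\cap N(y_l)|+2\bigr) = O(k\Delta)$, consistent with Lemma~\ref{lemma_csum}. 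I then charge each processed edge's cost $\min\{|S_x|,|S_y|\}\le\min\{A_x,A_y\}$ to its endpoint of smaller $A$-value; a vertex $v$ is charged at most once per incident edge, hence at most $|N(v)|\le\Delta$ times, each charge being at most $A_v$, so the total is at most $\Delta\sum_v A_v = O(k\Delta^2)$. Combining with the earlier $O(m\Delta)$ yields the claimed $O(m\Delta + |\C|\Delta^2)$.

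Finally, the space bound follows by adding the adjacency structure and cliques at $O(m)$ (Lemma~\ref{lemma_gsum1}), the \textsc{Post-Process} counters at $O(m)$, and the candidate clique sets at $\sum_v|S_v| = O(m + |\C|\Delta)$ (Lemma~\ref{lemma_csum}), for a total of $O(m + |\C|\Delta)$. The special case is then immediate: $\sum_x|S_x| = O(m)$ forces $|\C|\Delta = O(m)$ through Lemma~\ref{lemma_csum}, so $|\C|\Delta^2 = O(m\Delta)$ and $m + |\C|\Delta = O(m)$, collapsing the bounds to $O(m\Delta)$ time and $O(m)$ space. The main obstacle I anticipate is exactly the step~4 amortization --- justifying that the time-varying set sizes are dominated by the cumulative insertion counts $A_v$, and that the charging factor is $\Delta$ rather than $|S_v|$ --- together with being explicit that the $|\C|$ in the bound is the loop's clique count $k$, which post-processing can only shrink.
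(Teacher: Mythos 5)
Your route is, at its core, the paper's own: correctness delegated to Lemma~\ref{lemma_greedy2}, creation/extension steps and \emph{Post-Process} (Lemma~\ref{lemma_pp}) absorbed into $O(m\Delta)$, and the crux being step~4 at cost $O(\min\{|S_x|,|S_y|\})$ per edge, amortized so that each vertex $v$ pays at most $|N(v)|\le\Delta$ times the size of its candidate clique set, giving $O(\Delta\sum_{x}|S_x|)$ and hence $O(m\Delta+|\C|\Delta^2)$ time and $O(m+|\C|\Delta)$ space via Lemma~\ref{lemma_csum}. You do add one genuine refinement: the paper charges against $|S_x|$ as if these sizes were static, while you dominate every instantaneous size by the cumulative insertion count $A_v$, justified by the observation that step~11 only ever inserts fresh indices and step~6 only deletes, so an index enters a given $S_v$ at most once; your direct count $\sum_v A_v \le \sum_l\bigl(|N(x_l)\cap N(y_l)|+2\bigr)=O(|\C|\Delta)$ is a tighter and more careful stand-in for the role Lemma~\ref{lemma_csum} plays in the paper's argument. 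Your reading of $|\C|$ as the loop's clique count (which post-processing can only shrink) is also the reading the paper implicitly uses.

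There is, however, one step that fails as written: in the special case you claim that $\sum_{x\in V}|S_x|=O(m)$ ``forces $|\C|\Delta=O(m)$ through Lemma~\ref{lemma_csum}.'' Lemma~\ref{lemma_csum} is only an upper bound $\sum_x|S_x|=O(m+|\C|\Delta)$; a small actual sum does not bound $|\C|\Delta$, since the $|\C|\Delta$ term may simply be a loose over-count (e.g., when clique creations have few common neighbours while $\Delta$ is large). The repair is the one the paper itself uses: substitute the hypothesis directly into the charging rather than inverting the lemma --- step~4 totals $O(\Delta\cdot\sum_x|S_x|)=O(m\Delta)$, and the candidate-clique-set space is $O(m)$ outright, with no claim about $|\C|\Delta$ needed. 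One caveat if you keep your $A_v$ machinery for this case: the hypothesis must then be read as bounding the cumulative quantity your charges actually reference (the instantaneous sum alone does not bound $\sum_v A_v$); the paper is silent on this distinction, so stating it explicitly would strengthen rather than weaken the argument. With that one-line fix your proof is complete and otherwise coincides with the paper's.
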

\begin{proof}
Lemma \ref{lemma_greedy2} shows the proof of \emph{minimal clique cover}. Now, we show the analysis for $\sum_{x \in V} |S_x| = O(m)$; the general case follows from replacing the sum with the bound obtained in Lemma \ref{lemma_csum}.

In every iteration, step 4 would take $O(min\{|S_x|, |S_y|\})$ time. All iterations of step 4 would take $O(\sum_{\{x,y\} \in E} min\{|S_x|, |S_y|\}) = O(\sum_{\{x,y\} \in E} |S_x|)$ time. But, any $x\in V$ would contribute at most $|N(x)|$ time in $\sum_{\{x,y\} \in E} |S_x|$, and using the bound $\sum_{x\in V} |S_x| = O(m)$, we get following time bound for all iterations of step 4. $O(\sum_{\{x,y\} \in E} |S_x|) = O(\sum_{x \in V} |S_x| |N(x)|) = O(\Delta \sum_{x \in V} |S_x|) = O(m\Delta)$.

Analysis of step 4 also applies to step 5. And, for our choices of clique selection, step 5 would also take $O(m\Delta)$ time for all iterations.

Step 6 would take $O(1)$ time to select $w$, and $O(\Delta)$ time to remove index of $\C_l$ from all $S_z$ of relevant $z \in N(w)$. Step 7 would also take $O(\Delta)$ time. As shown in Lemma \ref{lemma_pp}, step 14 would take $O(m\Delta)$ time, and $O(m)$ space.

For \emph{candidate clique sets}, we only have used $\sum_{x \in V} |S_x|= O(m)$ space. Combining preceding bounds, we conclude that for $\sum_{x\in V} |S_x| = O(m)$, Algorithm \ref{alg:dine} does $O(m\Delta)$ computations, using $O(m)$ space.
\end{proof}

\subsection{Degeneracy Ordering.}
\label{sec:degen}

We describe modifications of our improved greedy framework that takes degeneracy of input graph into account. For edge ordering in step 1 of Algorithm \ref{alg:dine}, we make use of \emph{degeneracy ordering}, and for graphs of constant degeneracy, show a family of linear time algorithms. To integrate degeneracy in the time bound, we modify the steps outlined in Algorithm \ref{alg:dine} as follows.

\begin{enumerate}
    \item In step 1, compute $u_1, u_2, ..., u_n$ (degeneracy ordering of $V$) and $N_d(u_i), \forall i \in [1,n]$. Cover edges incident from $u_1$ to vertices in $N_d(u_1)$ first, then cover edges incident from $u_2$ to vertices in $N_d(u_2)$, and so on.
    \item In step 6, scan $C_l$, and select a vertex $w$ such that $w$ comes earliest in the degeneracy ordering among the vertices contained in $C_l$. Then instead of $N(w)$ use $N_d(w)$.
    \item In step 11, if $x$ precedes $y$ in the degeneracy ordering of $V$, then use $N_d(x) \cap N(y)$ instead of $N(x) \cap N(y)$. Otherwise, use $N_d(y) \cap N(x)$ instead of $N(x) \cap N(y)$.
    \item Discard step 14.
\end{enumerate}

Note that since every non-empty subgraph must have at least one vertex with degree at most $d$, it must be the case $|C_l| \leq d+1, \forall C_l \in \C$. With the above-stated modifications, the correctness of steps 1-13 of Algorithm \ref{alg:dine} is retained, and we obtain following result.

\begin{restatable}{lemma}{LemmaGreedyThree}
\label{lemma_greedy3}
For a graph $G=(V,E)$, Algorithm \ref{alg:dine} with the above-stated modifications computes a \emph{locally minimal clique cover} $\C$ of $G$.
\end{restatable}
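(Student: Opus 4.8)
The plan is to reduce correctness to the invariant used in Lemma \ref{lemma_greedy2}: after each iteration of the while loop the maintained candidate clique sets agree with Definition \ref{cand_cliq} \emph{for every vertex whose candidate set can still be consulted}, so that steps 4--7 and 9--11 faithfully implement rules (i) and (ii) of Definition \ref{local_min} through Lemma \ref{lemma_cover}. The enabling observation supplied by the degeneracy ordering is a \emph{retirement} property: once the iteration covering the edges from $u_i$ to $N_d(u_i)$ finishes, every edge incident on $u_i$ has been covered. Indeed an edge $\{u_i,u_j\}$ is covered in the iteration of $\min(u_i,u_j)$, so each edge on $u_i$ is covered either while processing $u_i$ (other endpoint follows $u_i$) or in an earlier iteration (other endpoint precedes $u_i$). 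Hence $S_{u_i}$ is never consulted after iteration $i$, and during iteration $i$ only the sets $S_z$ with $z\in\{u_i,\dots,u_n\}$ can be used later. It therefore suffices to maintain candidate clique sets correctly for the not-yet-retired vertices.

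Given this, the modified step 11 is immediate. WLOG let $x$ precede $y$ (modification 3 treats the symmetric case by using $N_d(y)$). When a new clique $C_k=\{x,y\}$ is created, Definition \ref{cand_cliq} would add $k$ to $S_z$ for every $z\in(N(x)\cap N(y))\cup\{x,y\}$. A common neighbor $z$ preceding $x$ is already retired, so its set is irrelevant; the common neighbors still consultable are exactly those following $x$, namely $N_d(x)\cap N(y)$. Thus updating over $(N_d(x)\cap N(y))\cup\{x,y\}$ agrees with Definition \ref{cand_cliq} on all relevant vertices.

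The main obstacle is the modified step 6, and the key is an invariant on the earliest vertex of a clique: the vertex $w\in C_l$ that comes earliest in the degeneracy ordering precedes or equals the vertex of the current iteration. I would prove this by induction over the extensions of $C_l$. At creation on an edge $\{a,b\}$ with $a$ preceding $b$, this happens in the iteration of $a$, whose earliest vertex is $a$. Each later extension adds the two endpoints of an uncovered edge while processing the earlier of them; as that iteration is not before the one creating $C_l$, both added endpoints follow or equal $a$, so the earliest vertex stays $a$ and still precedes or equals the current vertex. Now when step 6 extends $C_l$ by $\{x,y\}$ with current vertex $x$, any $z$ that Definition \ref{cand_cliq} would force us to drop from $S_z$ satisfies $C_l\subseteq N(z)$, hence $z\in N(w)$; the relevant (non-retired) such $z$ obey $z\succeq x\succeq w$ with $z\neq w$, so $z$ strictly follows $w$ and lies in $N_d(w)$. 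Scanning $N_d(w)\setminus\{x,y\}$ therefore drops $l$ from exactly the relevant sets, while the neighbors in $N(w)\setminus N_d(w)$ precede $w\preceq x$ and are already retired.

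Combining the two points, steps 1--13 preserve the candidate clique set invariant for every not-yet-retired vertex, so by the reasoning of Lemma \ref{lemma_greedy2} they realize rules (i) and (ii) of Definition \ref{local_min} at each step and output a \emph{locally minimal clique cover}. Discarding step 14 merely omits the post-processing that would reduce this to a \emph{minimal clique cover}, so the result remains \emph{locally minimal}, as claimed.
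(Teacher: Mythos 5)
Your proof is correct and takes essentially the same route as the paper's: you restrict the candidate-clique-set invariant of Lemma \ref{lemma_greedy2} to vertices not yet processed (the paper's observation that information need only be propagated to vertices that follow $x$ in the degeneracy ordering), and you justify the modified step 6 by showing that the earliest vertex $w$ of $C_l$ in the degeneracy ordering is exactly the vertex added at the creation of $C_l$ in step 11, so that $N_d(w)$ reaches every candidate clique set that may still contain $l$. Your explicit ``retirement'' property and the induction over clique extensions merely spell out steps that the paper asserts directly, so no comparison beyond this is needed.
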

\begin{proof}
Modification of step 1 does not effect validity of \emph{clique cover}. Let for an uncovered edge $\{x,y\}$, $x$ precedes $y$ in the degeneracy order of $V$. With the edge ordering outlined in step 1 of the above-stated modifications, we only need to propagate \emph{candidate clique sets} information to vertices that follow $x$ in the degeneracy ordering of $V$. In step 11, we make sure this by updating \emph{candidate clique sets} of vertices contained in $N_d(x) \cap N(y)$.

In step 6, we make sure we update $S_z$ of all $z$ that are yet to be processed by selecting a vertex $w$ contained in $C_l$ such that among the vertices of $C_l$, $w$ comes earliest in the degeneracy ordering of $V$. Note that this is exactly the vertex $x$ that we added in step 11 when we created the clique with index $l$. Therefore, $N_d(w)$ of step 6 is the same list as of $N_d(x)$ of step 11 when we added $C_l = \{x,y\}$ to the \emph{clique cover}. And, through $N_d(w)$ we get hold of all $z\in V$, whose $S_z$ may contain $C_l$ as \emph{candidate clique}.
\end{proof}

Taking the above-stated modifications into account, we get a tighter bound than the bound of \ref{lemma_csum} as follows.

\begin{restatable}{lemma}{LemmaCSumd}
\label{lemma_csum2}
Let $\mathcal{C} = \{\mathcal{C}_1, \mathcal{C}_2, ..., \mathcal{C}_k\}$ be a \emph{clique cover} of a subset of edges of $G=(V,E)$. If $\C$ is \emph{locally minimal} and computed following the above-stated modifications of Algorithm \ref{alg:dine}, then $\sum_{x \in V} |S_x| = O(m + |\C|d)$.
\end{restatable}
\begin{proof}
Step 1 and 2 of the above-stated modifications do not include any new clique index to any set of $\CCS$. Only step 3 of the above-stated modifications adds clique index to sets of $\CCS$ and we account for the effects as follows. 

For every new clique $C_l$ we created using rule (ii) of Definition \ref{local_min}, there are choices of \emph{candidate clique sets} to which we include clique index $l$. With step 3 of the above-stated modifications, we deviate from the choices of \emph{candidate clique sets} in the proof of Lemma \ref{lemma_csum}. Assume $x$ precedes $y$ in the degeneracy order of $V$. Then with the deviation introduced with step 3 of the above-stated modifications, we include $l$ to the \emph{candidate clique sets} of vertices $z$, $S_z$, such that $z \in (N_d(x) \cap N(y)) \cup \{x, y\}$. Therefore, any vertex $z$ such that $z \not\in C_l$ it must be the case $|\{z | z \not\in C_l, C_l \subseteq N(z)\}| \leq |N_d(x) \cap N(y)| = O(d)$. And we get following change from the proof of Lemma \ref{lemma_csum}.

$\sum_{x \in V} |\{l | x \not\in C_l, C_l \subseteq N(x) \}|= O(|\C|d)$
\end{proof}

We obtain following characterizations of Algorithm \ref{alg:dine} through Lemma \ref{lemma_csum2}.

\begin{restatable}{theorem}{ThmCCSD}
\label{thm_ccsd}
For a graph $G=(V,E)$, Algorithm \ref{alg:dine} with the above-stated modifications computes a \emph{locally minimal clique cover} $\C$ of $G$, taking $O(n+md+|\C|d^2)$ time and $O(m+|\C|d)$ space, and for $\sum_{x \in V} |S_x| = O(m)$, it takes $O(n+md)$ time and $O(m)$ space.
\end{restatable}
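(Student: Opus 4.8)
The plan is to follow the template of the proof of Theorem~\ref{thm_ccsg} almost verbatim, making three substitutions that exploit the degeneracy ordering. Correctness — that $\C$ is a \emph{locally minimal clique cover} — is already supplied by Lemma~\ref{lemma_greedy3}, so the only work is the time and space accounting for steps~1--13 of Algorithm~\ref{alg:dine} under the four stated modifications. Note that step~14 is discarded, which is precisely why we now claim only local minimality rather than the inclusion-wise minimality of Theorem~\ref{thm_ccsg}.

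First I would record the two structural facts that drive the improvement. Since every nonempty subgraph has a vertex of degree at most $d$, every clique obeys $|C_l| \le d+1$; and for an edge $\{x,y\}$ processed with $x$ preceding $y$ in the degeneracy order, both the propagation set $N_d(x)\cap N(y)$ used in step~11 and the neighbour list $N_d(w)$ scanned in step~6 have size $O(d)$. These replace the $O(\Delta)$-sized objects appearing in Theorem~\ref{thm_ccsg}. The identity $N_d(w)=N_d(x)$ needed for correctness is already handled in Lemma~\ref{lemma_greedy3}; for the running time only the size bound $|N_d(w)|\le d$ matters.

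Next I would handle the dominant cost, steps~4 and 5, using the edge ordering of modification~1. Each intersection test costs $O(\min\{|S_x|,|S_y|\})$, which I upper bound by $|S_x|$ for the \emph{earlier} endpoint $x$ and charge to $x$. Because a vertex heads exactly its $N_d$-edges in the degeneracy-ordered sweep, each $x$ is charged at most $|N_d(x)|\le d$ times, so the total is $O\bigl(d\sum_{x\in V}|S_x|\bigr)$. Applying Lemma~\ref{lemma_csum2} gives $O(md+|\C|d^2)$ in general and $O(md)$ when $\sum_x|S_x|=O(m)$ — exactly the $d$-for-$\Delta$ swap relative to Theorem~\ref{thm_ccsg}. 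The remaining steps are lower order: step~6 costs $O(d)$ per clique extension (scan $C_l$ of size $\le d+1$, then process $N_d(w)$ with $O(1)$ membership and removal per neighbour), step~7 marks $O(d)$ newly covered edges, and step~11 runs $|\C|$ times at $O(d)$ each; since there are at most $m$ iterations these sum to $O(md)$. Computing the degeneracy ordering and the lists $N_d(\cdot)$ in step~1 is $O(n+m)$. Adding everything yields $O(n+md+|\C|d^2)$ time, collapsing to $O(n+md)$ in the special case. For space, the candidate clique sets occupy $\sum_x|S_x|=O(m+|\C|d)$ by Lemma~\ref{lemma_csum2} (and $O(m)$ in the special case), which dominates the $O(m)$ adjacency storage and the $O(m)$ clique storage guaranteed by Lemma~\ref{lemma_gsum1}.

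I expect the main obstacle to be the charging argument for steps~4 and 5: one must verify that replacing $\min\{|S_x|,|S_y|\}$ by the earlier endpoint's set size is a legitimate upper bound (it is, since it bounds the min for any choice of endpoint) and that the degeneracy-ordered sweep makes each vertex the earlier endpoint of at most $|N_d(\cdot)|\le d$ edges, which is what converts the factor $\Delta$ into $d$. Once this is in place, the rest is a routine substitution of $d$ for $\Delta$, of $|C_l|\le d+1$ for the crude $|C_l|\le n$ bound, and of Lemma~\ref{lemma_csum2} for Lemma~\ref{lemma_csum} in the Theorem~\ref{thm_ccsg} analysis.
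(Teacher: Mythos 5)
Your proposal is correct and follows essentially the same route as the paper's proof: correctness is delegated to Lemma~\ref{lemma_greedy3}, the dominant cost of steps~4--5 is bounded by charging each intersection test's $\min\{|S_x|,|S_y|\}$ cost to the earlier endpoint (at most $|N_d(x)|\le d$ charges per vertex) and then invoking Lemma~\ref{lemma_csum2}, while steps~6, 7, and 11 are bounded by $O(d)$ via $|C_l|\le d+1$ and the restriction to $N_d$-lists, with $O(n+m)$ preprocessing and the same space accounting. Your explicit verification that the earlier endpoint's $|S_x|$ legitimately upper-bounds the minimum merely spells out a step the paper states without elaboration.
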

\begin{proof}
Lemma \ref{lemma_greedy3} shows the proof of \emph{locally minimal clique cover}. Now, we show the analysis for $\sum_{x \in V} |S_x| = O(m)$; the general case follows from replacing the sum with the bound obtained in Lemma \ref{lemma_csum2}.

Computing degeneracy ordering takes $O(m+n)$ time. Computing $N_d(u_i), \forall i \in [1,n]$ also takes $O(m+n)$ time.

Step 4 would take $O(\sum_{\{x,y\} \in E} min\{|S_x|, |S_y|\})$ $= O(\sum_{\{x,y\} \in E} |S_x|)$ time. Under degeneracy ordering any $x\in V$ would contribute at most $d$ times in $\sum_{\{x,y\} \in E} |S_x|$.

Therefore, step 4 in total would take $O(\sum_{\{x,y\} \in E} |S_x|) = O(\sum_{x\in V} |S_x|d)$ 

$= O(d \sum_{x\in V} |S_x|) = O(md)$ time. Same $O(md)$ time bound applies to step 5, for our choices of clique selection.

Step 6 and 7 would take $O(d)$ time as $|C_l|\leq d+1$, and we are only traversing $N_d(w)$. Step 11 takes $O(|N_d(x) \cap N(y)|) = O(d)$ time.

Combining preceding bounds, we conclude that for $\sum_{x \in V} |S_x| = O(m)$, our modifications of Algorithm \ref{alg:dine} under degeneracy ordering does $O(n+md)$ computations, using $O(m)$ space.
\end{proof}
\section{FPT Algorithms}
\label{sec:fpt}

We present two FPT algorithms for \emph{ECC-d}. Our FPT algorithms are based on two different design paradigms: one algorithm is based on \emph{candidate clique sets} (Section \ref{fpt_1}), and the other algorithm is based on maximal clique enumeration (Section \ref{fpt_2}). To further corroborate versatility of \emph{candidate clique sets} for computing \emph{clique cover} with different objectives, we briefly describe an exact algorithm for the \emph{assignment-minimum clique cover} problem (Section \ref{sec:amcc}).

\subsection{FPT Algorithm Based on \emph{Candidate Clique Sets}.}
\label{fpt_1}
Our basis for computing a \emph{minimum clique cover} of a graph using \emph{candidate clique sets} is laid out in Lemma \ref{lemma_igb}. Here we make sure that the computation takes degeneracy of graph into account. Algorithm \ref{alg:iFPT} solves the parameterized decision problem \emph{ECC-d}: if there is a clique cover of $G$ with no more than $k$ cliques, then Algorithm \ref{alg:iFPT} returns it; otherwise, it returns $\phi$. By iteratively increasing value of $k$, starting from $1$, and calling Algorithm \ref{alg:iFPT} with each value of $k$, we can compute a \emph{minimum clique cover}.

In step 4 of Algorithm \ref{alg:iFPT}, we select an uncovered edge that helps us to bound number of branching required at each node of the search tree. For an uncovered edge selected in step 4, we consider each of the \emph{candidate cliques} that can cover the edge (steps 5-13), and branch into each of such choices (step 8). \emph{Prepare} subroutine (step 6) prepares \emph{candidate clique sets} for inclusion of $\{x,y\}$ in $C_l$, and saves containment status of $x$ and $y$ in $C_l$. If a branch fails at step 8, \emph{Restore} subroutine (step 12) undoes the changes made in steps 6-7. If none of the branches in steps 5-13 succeeds, and parameter $k$ permits, then we try one final branch at steps 14-21. In this branch, we extend the \emph{clique cover} with a new clique containing only $\{x,y\}$ (step 15). If the final branch fails, we restore the \emph{clique cover} (step 20), and return to parent of this node (step 22).

\begin{algorithm}
\caption{\textit{CandidateCliqueFPT}}
\label{alg:iFPT}
\textbf{Input:} A graph $G=(V,E)$, \emph{Clique cover} $\C$ (of a subset of edges in $E$), \emph{Candidate clique sets} $\CCS$, an integer $k$\\
\textbf{Output:} If exists, a \emph{clique cover} $\C$ of $G$ with $|\C| \leq k$; otherwise $\phi$
\begin{algorithmic}[1]
\IF{$\C$ covers $G$}
\RETURN $\C$
\ENDIF
\STATE{Select an uncovered edge $\{x,y\}$ such that $y \in N_d(x)$, and in the degeneracy order of $V$, $x$ follows any other vertex with such a neighbour $y$}
\FOR{\textbf{each} $l \in S_x \cap S_y$}
\STATE{$Prepare(G,\C,\CCS,l,\{x,y\}, W, x_l, y_l)$}
\STATE{$C_l \leftarrow C_l \cup \{x,y\}$}
\STATE{$Q \leftarrow CandidateCliqueFPT(G, \C, \CCS, k)$}
\IF{$Q \ne \phi$}
\RETURN $Q$
\ENDIF
\STATE{$Restore(G,\C,\CCS,l,\{x,y\}, W, x_l, y_l)$}
\ENDFOR
\IF{$k > 0$}
\STATE{$AddNewClique(G,\C,\CCS,\{x,y\})$}
\STATE{$Q \leftarrow CandidateCliqueFPT(G, \C, \CCS, k-1)$}
\IF{$Q \ne \phi$}
\RETURN $Q$
\ENDIF
\STATE{$RemoveNewClique(G,\C,\CCS)$}
\ENDIF
\RETURN $\phi$
\end{algorithmic}
\end{algorithm}

\begin{algorithm}[H]
\caption{\textit{AddNewClique}}
\label{alg:newCl}
\textbf{Input:} A graph $G=(V,E)$, \emph{Clique cover} $\C$ (of a subset of edges in $E$), \emph{Candidate clique sets} $\CCS$, an edge $\{x, y\}$\\
\textbf{Output:} A new clique added to $\C$, and updated \emph{candidate clique sets}
\begin{algorithmic}[1]
\STATE{$p \leftarrow |\C| + 1$}
\STATE{$C_p \leftarrow \{x,y\}$}
\STATE{$R_p \leftarrow (N(x) \cap N(y)) \cup \{x,y\}$}
\FOR{\textbf{each} $z \in R_p$}
\STATE{$S_z \leftarrow S_z \cup \{p\}$}
\ENDFOR
\RETURN $\C, \CCS$
\end{algorithmic}
\end{algorithm}

We could have made copies of $\C$ and $\CCS$ in steps 6 and 15 and could have branched modifying the copies. The algorithm would still be FPT, but we would incur $O(m+k\Delta)$ space and time cost per node in the search tree. Instead, we outline computations in the subroutines, which help us to keep space use down to $O(m+k\Delta)$ in total, and to bound time spent per node to $O(\Delta)$. Next, we expand on the subroutines.

For each clique $C_l$, we use a data structure $R_l$, which keeps track of vertices that have $C_l$ as \emph{candidate cliques}, i.e., $R_l = \{z \in V| l \in S_z\}$. \emph{AddNewClique} subroutine (Algorithm \ref{alg:newCl}) initializes this data structure, for every new clique it adds to the \emph{clique cover}. Also, at each node, \emph{Prepare} and \emph{Restore} subroutines share some data structure, namely $W$, $x_l$ (indicator of whether $C_l$ contains $x$), $y_l$ (indicator of whether $C_l$ contains $y$), to facilitate the modification of $\C$ and $\CCS$ and their restoration.

\begin{algorithm}[H]
\caption{\textit{Prepare}}
\label{alg:prepare}
\textbf{Input:} A graph $G=(V,E)$, \emph{Clique cover} $\C$ (of a subset of edges in $E$), \emph{Candidate clique sets} $\CCS$, clique index $l$, an edge $\{x, y\}$, $W$, $x_l$, $y_l$\\
\textbf{Output:} \emph{Candidate clique sets} prepared for covering $\{x,y\}$ with $C_l$, $W$, $x_l$, $y_l$
\begin{algorithmic}[1]
\STATE{$W \leftarrow \phi$}
\FOR{\textbf{each} $z \in R_l \backslash \{x,y\}$}
\IF{$\{x,z\} \not\in E$ or $\{y,z\} \not\in E$}
\STATE{$W \leftarrow W \cup \{z\}$}
\STATE{$S_z \leftarrow S_z \backslash \{l\}$}
\ENDIF
\ENDFOR
\STATE{$R_l \leftarrow R_l \backslash W$}
\STATE{Compute and set values of indicator variables $x_l$ and $y_l$ denoting whether $x \in C_l$ and $y \in C_l$ respectively}
\RETURN $\CCS, W, x_l, y_l$
\end{algorithmic}
\end{algorithm}

\begin{algorithm}[H]
\caption{\textit{Restore}}
\label{alg:restore}
\textbf{Input:} A graph $G=(V,E)$, \emph{Clique cover} $\C$ (of a subset of edges in $E$), \emph{Candidate clique sets} $\CCS$, $l$, an edge $\{x, y\}$, $W$, $x_l$, $y_l$\\
\textbf{Output:} $C_l$ and corresponding \emph{candidate clique sets} restored
\begin{algorithmic}[1]
\FOR{\textbf{each} $z \in W$}
\STATE{$R_l \leftarrow R_l \cup \{z\}$}
\STATE{$S_z \leftarrow S_z \cup \{l\}$}
\ENDFOR
\STATE{Using indicator variables $x_l$ and $y_l$, restore containment of $x$ and $y$ in $C_l$}
\RETURN $\C, \CCS$
\end{algorithmic}
\end{algorithm}

As outlined in Algorithm \ref{alg:prepare}, \emph{Prepare} subroutine removes $l$ from the candidate clique sets of vertices in $R_l$ that cannot have $C_l$ as \emph{candidate clique} after inclusion of $\{x,y\}$ to $C_l$, and moves these vertices from $R_l$ to $W$. \emph{Prepare} subroutine also sets the values of $x_l$ and $y_l$. \emph{Restore} subroutine (Algorithm \ref{alg:restore}) simply undoes the changes made by step 7 and \emph{Prepare} subroutine, using $W$, $x_l$, and $y_l$. Similarly, \emph{RemoveNewClique} subroutine (Algorithm \ref{alg:removeCl}) undoes the modifications done to \emph{clique cover} and \emph{candidate clique sets} in \emph{AddNewClique} subroutine. Note that all four subroutines have $O(\Delta)$ time bound.

\begin{algorithm}[H]
\caption{\textit{RemoveNewClique}}
\label{alg:removeCl}
\textbf{Input:} A graph $G=(V,E)$, \emph{Clique cover} $\C$ (of a subset of edges in $E$), \emph{Candidate clique sets} $\CCS$\\
\textbf{Output:} $\C$ and $\CCS$ restored by removing the last clique added by \emph{AddNewClique}
\begin{algorithmic}[1]
\STATE{$p \leftarrow |\C|$}
\STATE{$\C \leftarrow \C \backslash C_p$}
\FOR{\textbf{each} $z \in R_p$}
\STATE{$S_z \leftarrow S_z \backslash \{p\}$}
\ENDFOR
\RETURN $\C, \CCS$
\end{algorithmic}
\end{algorithm}

Next, we show correctness, time and space bounds, depth of the search tree, and number of branches at each node of the search tree of Algorithm \ref{alg:iFPT}.

\begin{restatable}{lemma}{LemmaIfptCor}
\label{lemma_ifpt}
Algorithm \ref{alg:iFPT} correctly solves the parameterized problem \emph{ECC-d}.
\end{restatable}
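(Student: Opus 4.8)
The plan is to prove correctness of Algorithm \ref{alg:iFPT} in two parts: soundness (whatever is returned is a valid clique cover with at most $k$ cliques) and completeness (if a clique cover with at most $k$ cliques exists, the algorithm finds one). First I would establish the key invariant maintained at every recursive call: the arguments $(\C, \CCS)$ are mutually consistent, meaning that $\CCS$ holds exactly the \emph{candidate clique sets} induced by the current partial clique cover $\C$, and that the auxiliary structure $R_l = \{z \in V \mid l \in S_z\}$ is correctly maintained for each clique. This invariant is what lets steps 5--13 enumerate precisely the legal ways to extend $\C$, via Lemma \ref{lemma_cover}: the edge $\{x,y\}$ can be covered by an existing clique $C_l$ if and only if $l \in S_x \cap S_y$.

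Next I would argue that each subroutine preserves this invariant. \emph{AddNewClique} directly realizes rule (ii) of Definition \ref{local_min} and correctly seeds $R_p$ and the candidate sets for the common-neighbor vertices, mirroring the update justified in the proof of Lemma \ref{lemma_greedy2}. The pair \emph{Prepare}/\emph{Restore} must be shown to be exact inverses: \emph{Prepare} removes $l$ from $S_z$ for exactly those $z \in R_l$ that lose $C_l$ as a candidate clique once $\{x,y\}$ is absorbed (those $z$ not adjacent to both $x$ and $y$), recording them in $W$ together with the containment flags $x_l, y_l$; \emph{Restore} then reads back $W$, $x_l$, $y_l$ to undo steps 6--7 and return the state to what it was on entry to that loop iteration. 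Similarly \emph{RemoveNewClique} inverts \emph{AddNewClique}. Establishing that these undo operations exactly reconstruct the prior state is the linchpin, because the whole correctness argument relies on the search tree sharing a single mutable $(\C, \CCS)$ rather than copying, so any residual corruption would invalidate sibling branches.

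With the invariant and the inverse property in hand, soundness is immediate: the algorithm returns a nonempty answer only at the base case in step 1, where it has explicitly checked that $\C$ covers $G$, and the budget $k$ is decremented by exactly one (step 16) each time a genuinely new clique is created (step 15), guarded by $k > 0$ (step 14), so the returned cover has at most the budgeted number of cliques. For completeness I would argue by induction on the recursion that the branching in steps 5--21 is exhaustive over all relevant extensions: by Lemma \ref{lemma_cover} the only ways to cover the chosen uncovered edge $\{x,y\}$ are either to reuse some existing $C_l$ with $l \in S_x \cap S_y$ (the branches in steps 5--13) or to open a fresh clique (the branch in steps 14--21). Thus if any clique cover of size at most $k$ completing the current $\C$ exists, at least one child branch leads to it, and by the induction hypothesis that branch succeeds. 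The subtle point that completeness does not lose solutions by restricting attention to \emph{locally minimal} style extensions is underwritten by Lemma \ref{lemma_igb}, which guarantees a minimum clique cover is attainable within the candidate-clique framework.

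The hard part will be rigorously pinning down that \emph{Prepare} and \emph{Restore} commute correctly with the shared-state recursion, i.e., that after a failed branch the global $(\C, \CCS)$ is bit-for-bit identical to its pre-branch state, including the containment status of $x$ and $y$ in $C_l$ recovered from $x_l, y_l$. I expect the routine set-membership bookkeeping (maintaining $R_l$ in lockstep with the $S_z$) to be mechanical but error-prone to state precisely; once the invariant is phrased carefully, each of the four subroutines is checked against it in turn, and the induction on the search-tree depth closes the argument.
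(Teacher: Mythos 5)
Your overall route is the same as the paper's: correctness rests on the branching in steps 5--21 being exhaustive, with Lemma \ref{lemma_cover} certifying that the reuse branches $l \in S_x \cap S_y$ are exactly the feasible ways to absorb $\{x,y\}$ into an existing clique, the $k>0$-guarded new-clique branch handling the rest, and an induction over the search tree closing the argument. (The paper packages this slightly differently, phrasing it via a family of algorithms that pick the uncovered edge in arbitrary order, so that the degeneracy-based selection in step 4 is irrelevant to correctness; your induction achieves the same effect.) You actually go beyond the paper in one respect: the paper's proof says nothing about the shared-mutable-state machinery, whereas you correctly identify that \emph{Prepare}/\emph{Restore} and \emph{AddNewClique}/\emph{RemoveNewClique} must be exact inverses for sibling branches to see an uncorrupted $(\C,\CCS)$; the paper relegates this to the informal discussion of the subroutines preceding the lemma.

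There is, however, one genuinely wrong step: Lemma \ref{lemma_igb} does not underwrite the claim that restricting to locally-minimal-style extensions loses no solutions. That lemma says only that a clique cover minimizing $|\cup_{x\in V} S_x|$ over all clique covers is a minimum clique cover; it says nothing about whether any (or some optimal) cover of size at most $k$ is \emph{reachable} by the algorithm's edge-by-edge process, in which every clique is born as a single edge and grows only by absorbing uncovered edges. The correct justification is a simulation argument that needs no external lemma beyond Lemma \ref{lemma_cover}: fix a hypothetical cover $\{K_1,\dots,K_{k'}\}$ with $k' \leq k$ and an assignment of each edge to a clique covering it, and show by induction that the search tree contains a path along which each algorithm clique $C_i$ remains a subset of $K_i$. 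When the selected uncovered edge $\{x,y\}$ is assigned to $K_i$ and $C_i$ already exists, $C_i \cup \{x,y\} \subseteq K_i$ is a clique, so every vertex of $C_i$ is adjacent to or equal to each of $x$ and $y$; hence $C_i$ is a \emph{candidate clique} of both endpoints, $i \in S_x \cap S_y$ by Lemma \ref{lemma_cover}, and the required branch is among those tried in steps 5--13. If $C_i$ does not yet exist, the branch at steps 14--21 creates it, and since at most $k' \leq k$ creations occur along this path, the $k>0$ guard never blocks it. This argument is what makes the exhaustiveness claim (which the paper itself states rather tersely as ``considers all possible ways'') rigorous; with it substituted for the appeal to Lemma \ref{lemma_igb}, your proof is complete.
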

\begin{proof}
Let $(G,(d,k))$ is an instance of \emph{ECC-d}. Denote Algorithm \ref{alg:iFPT} with $A_x$.

Consider a family of algorithms $A$ that consider uncovered edges $\{x,y\}$ in arbitrary order in step 4 of Algorithm \ref{alg:iFPT}. Clearly, $A_x \in A$.

Now, if $(G,(d,k))$ is a 'yes' instance, then any algorithm $A_y \in A$ will report it. This is because $A_y$ considers all possible ways to cover every uncovered edge: either $\{x,y\}$ has to be covered with one of the existing cliques $C_l$ such that $l \in S_x \cap S_y$ (step 5-13), or $\{x,y\}$ cannot be covered with any of the existing cliques (step 14-21). $A_y$ considers all such possible ways and allows search tree of depth with $k$ cliques. Similarly, if $(G,(d,k))$ is a 'no' instance, then $A_y$ returns $\phi$, denoting a 'no' instance.
\end{proof}

\begin{restatable}{lemma}{LemmaIfptSpace}
\label{lemma_ifpt_space}
In a search tree with at most $k$ cliques, Algorithm \ref{alg:iFPT} takes $O(m+k\Delta)$ space.
\end{restatable}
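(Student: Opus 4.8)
The plan is to account for every piece of state that Algorithm \ref{alg:iFPT} maintains across the entire search tree and show that each contributes at most $O(m+k\Delta)$. The crucial observation is that the algorithm deliberately avoids copying $\C$ and $\CCS$ at each branch; instead it mutates shared data structures in place and undoes the mutations via \emph{Restore} and \emph{RemoveNewClique} on backtracking. Consequently, at any moment there is only one live copy of $\C$, of the candidate clique sets $\CCS$, and of the auxiliary sets $R_l$, regardless of how deep or wide the search tree is. So the space bound is governed not by the size of the tree but by the maximum size these shared structures attain along any root-to-leaf path.

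First I would bound the clique cover $\C$ itself. Along any path we add at most $k$ new cliques (the branch at steps 14--21 decrements $k$, so the depth of new-clique creations is at most $k$), and each clique has size at most $\Delta+1$; hence $\sum_{C_l \in \C} |C_l| = O(k\Delta)$, giving $O(k\Delta)$ space for storing the cliques. Next I would bound the candidate clique sets. Here I would invoke the structural bound already established: by Lemma \ref{lemma_csum}, for a locally minimal cover $\sum_{x\in V} |S_x| = O(m + |\C|\Delta)$, and since $|\C| \le k$ along any path this is $O(m + k\Delta)$. The sets $R_l = \{z \mid l \in S_z\}$ store exactly the transpose of this same incidence relation, so $\sum_l |R_l| = \sum_x |S_x| = O(m+k\Delta)$ as well, contributing no more than the candidate clique sets already do. Finally, the per-node auxiliary variables $W$, $x_l$, $y_l$ occupy $O(\Delta)$ space each, but because the algorithm reuses them within a single invocation and they are not retained across the full tree, they add only $O(\Delta)$ to the peak.

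The step I expect to require the most care is arguing that the candidate clique sets genuinely stay locally-minimal-bounded throughout the search, so that Lemma \ref{lemma_csum} is applicable. The cover $\C$ built by the search is obtained through the very rules (i) and (ii) of Definition \ref{local_min} — covering an uncovered edge either with an existing candidate clique (steps 5--13) or with a fresh singleton-edge clique (steps 14--21) — so the invariant that $\C$ is a locally minimal cover of the edges processed so far is maintained at every recursive call, and Lemma \ref{lemma_csum} applies verbatim. One subtlety is that $\CCS$ is mutated by \emph{Prepare} before a branch and possibly restored afterward; I would note that \emph{Prepare} only \emph{removes} indices and \emph{Restore} reinserts exactly what was removed, so at the point where the bound is measured the sets correspond to the current locally minimal cover and the Lemma \ref{lemma_csum} estimate holds. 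Summing the contributions $O(k\Delta)$ from $\C$, $O(m+k\Delta)$ from $\CCS$ and the $R_l$'s, and $O(\Delta)$ from the auxiliary variables yields a total of $O(m+k\Delta)$, as claimed.
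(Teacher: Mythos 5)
Your high-level decomposition (in-place mutation with undo, bound $\C$, bound $\CCS$ via Lemma \ref{lemma_csum}, note that the $R_l$'s are the transpose of $\CCS$) is exactly the paper's, but your pivotal justification step is wrong: the claim that the cover at every node of the search tree is \emph{locally minimal}, so that Lemma \ref{lemma_csum} "applies verbatim," fails. Definition \ref{local_min} requires rule (i) to be used \emph{whenever applicable}, whereas Algorithm \ref{alg:iFPT} deliberately violates this: after every branch over $l \in S_x \cap S_y$ has failed (steps 5--13), it still explores the new-clique branch (steps 14--21) for the same edge, i.e., it applies rule (ii) even though rule (i) was applicable. The covers explored in such branches are therefore not locally minimal, and Lemma \ref{lemma_csum} cannot be invoked as stated. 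The paper repairs this by proving directly that the \emph{conclusion} of Lemma \ref{lemma_lm_char1} still holds for any cover produced by the algorithm: once an edge $\{x,y\}$ is covered at a node $T_i$, no descendant of $T_i$ selects it again as uncovered, so the number of creation/expansion events touching a vertex $x$ along any root-to-leaf path is at most $|N(x)|$. It then observes that the rest of the proof of Lemma \ref{lemma_csum} uses only this property together with the fact that \emph{AddNewClique} inserts the new index into $S_z$ exactly for $z \in (N(x)\cap N(y)) \cup \{x,y\}$, contributing $O(\Delta)$ per clique and hence $O(m+k\Delta)$ overall. Your proof needs this substitute argument; as written, the key bound $\sum_{x\in V}|S_x| = O(m+k\Delta)$ is unjustified.

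There is a second genuine error in your accounting for $W$: you assert it adds only $O(\Delta)$ to the peak because it is "not retained across the full tree," but each frame's $W$ must survive the recursive call at step 8 so that \emph{Restore} at step 12 can undo the changes. Hence one live $W$ exists per frame on the current recursion stack, whose depth can be $dk$ (Lemma \ref{lemma_ifpt_depth}), and a naive bound is $O(dk\Delta)$, which need not be $O(m+k\Delta)$. The paper closes this with a conservation argument you gesture at only for a different purpose (validity of $\CCS$ after \emph{Restore}): \emph{Prepare} \emph{moves} each affected vertex from $R_l$ into $W$ rather than copying it, so at any instant the combined size of all live $W$ sets plus the current $R_l$ sets is bounded by the maximum total candidate-clique-set footprint, which is $O(m+k\Delta)$. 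With these two repairs your argument becomes essentially the paper's proof; without them, both the $\CCS$ bound and the $W$ bound have real holes.
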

\begin{proof}
Let $T_i$ be a node in the search tree of Algorithm \ref{alg:iFPT} when an uncovered $\{x,y\}$ is covered with a new clique in steps 14-21 of Algorithm \ref{alg:iFPT}. In $T_i$ we must have created the new clique only because all branches of steps 5-13 have failed. Any node $T_j$ such that $T_i$ is an ancestor of $T_j$ in the search tree would not select $\{x,y\}$ as uncovered. Since, number of uncovered edge $\{x,y\}$ incident on any $x \in V$ is at most $|N(x)|$, characterization of Lemma \ref{lemma_lm_char1} holds for any \emph{clique cover} produced by Algorithm \ref{alg:iFPT}.

In the proof of Lemma \ref{lemma_csum}, since we have established that the part of the proof related to Lemma \ref{lemma_lm_char1} holds, it is straightforward to see that rest of the proof holds for any \emph{clique cover} produced by Algorithm \ref{alg:iFPT}. It follows that characterization of Lemma \ref{lemma_csum} holds for \emph{candidate clique sets} $\CCS$ used in Algorithm \ref{alg:iFPT}.

Entire $R_l$ data structures capture the same information as entire \emph{candidate clique sets} $\CCS$, i.e., $\sum |R_l| = O(m +k \Delta)$. At each node, in \emph{Prepare} subroutine, a vertex is either kept in $R_l$ or moved to $W$. Therefore, space usage of $W$ data structures in total is already accounted for, i.e., Algorithm \ref{alg:iFPT} has $O(m+k\Delta)$ space bound.
\end{proof}

\begin{restatable}{lemma}{LemmaIfptDepth}
\label{lemma_ifpt_depth}
In a search tree with at most $k$ cliques, depth of the search tree of Algorithm \ref{alg:iFPT} is at most $dk$.
\end{restatable}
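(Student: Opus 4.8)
The plan is to bound the number of edges covered along any root-to-leaf path, since each recursive invocation of Algorithm \ref{alg:iFPT} covers exactly one selected uncovered edge $\{x,y\}$ and so accounts for exactly one level of depth. Each such invocation covers $\{x,y\}$ in one of two mutually exclusive ways: by extending an existing clique $C_l$ with $l \in S_x \cap S_y$ (steps 5--13), or by creating a brand-new clique on $\{x,y\}$ (steps 14--21). I would count these two kinds of invocations separately and show their total is at most $dk$.

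First I would bound the new-clique invocations. Each such invocation raises the clique count by one, and along any path the search tree carries at most $k$ cliques; hence at most $k$ invocations on a path create a new clique. Next I would bound, for a fixed clique $C_l$, how many times it can be extended along a path. Two facts drive this. (a) Since $G$ has degeneracy $d$, every clique is a subgraph in which some vertex has degree at most $d$, forcing $|C_l| \leq d+1$. (b) Every extension of $C_l$ adds at least one new vertex to $C_l$: the selected edge $\{x,y\}$ is uncovered, so $x$ and $y$ are not both already in $C_l$, and by Lemma \ref{lemma_cover} covering $\{x,y\}$ with $C_l$ falls into one of the three cases in which the missing endpoint(s) of $\{x,y\}$ are inserted into $C_l$. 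Since $C_l$ is created with exactly two vertices and grows monotonically by at least one vertex per extension until it reaches size at most $d+1$, it can be extended at most $(d+1)-2 = d-1$ times.

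Combining the two counts, along any path there are at most $k$ new-clique invocations and, distributing the extensions over the at most $k$ cliques, at most $k(d-1)$ extension invocations. The depth is therefore at most $k + k(d-1) = dk$, as claimed.

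The only real subtlety I expect is verifying fact (b)---that each extension strictly enlarges $C_l$---which is where uncoveredness of the selected edge interacts with the case analysis of Lemma \ref{lemma_cover}; once that monotonic growth is pinned down, the degeneracy cap $|C_l| \leq d+1$ turns the per-clique extension bound into the clean $d-1$, and the $k$-clique budget does the rest. I would also take care that the edge-selection rule in step 4 (which forces $y \in N_d(x)$) is consistent with every covered edge being a forward edge, so that no extra bookkeeping is needed beyond the size argument.
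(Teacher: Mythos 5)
Your proof is correct and follows essentially the same route as the paper's: each search-tree node covers one edge, at most $k$ nodes create new cliques, and since every clique starts with two vertices, grows by at least one vertex per extension, and is capped at $d+1$ vertices by degeneracy, each clique admits at most $d-1$ extensions, giving depth $k + k(d-1) = dk$. Your explicit split into creation nodes and extension nodes is in fact a slightly cleaner bookkeeping of the paper's one-line count (the paper loosely says ``$d$ expansions per clique''), and your care that an extension adds \emph{at least} one vertex (case (iii) of Lemma~\ref{lemma_cover} can add two) is a correct refinement that only strengthens the bound.
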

\begin{proof}
Every new clique starts with two vertices (step 15), and each expansion of a clique (steps 5-13) with an uncovered edge $\{x,y\}$ will add one new vertex to the clique. Since $\forall C_l \in \C, |C_l| \leq d+1$, number of total possible expansions of a clique is $d$. Therefore, in a search tree with at most $k$ cliques, depth of the search tree is at most $dk$.
\end{proof}

\begin{restatable}{lemma}{LemmaIfptBranch}
\label{lemma_ifpt_branch}
Number of branches at any node of the search tree of Algorithm \ref{alg:iFPT} is at most $\lfloor\frac{d^2}{4}\rfloor + 1$.
\end{restatable}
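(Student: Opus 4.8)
The plan is to reduce the branching factor at a node to the quantity $|S_x \cap S_y|$ and then bound that by $\lfloor d^2/4\rfloor$. At the node we select an uncovered edge $\{x,y\}$ and branch once for every $l \in S_x \cap S_y$ (the loop at steps~5--13 makes one recursive call per candidate clique) and at most once more for the ``open a new clique'' case (steps~14--21, taken only when $k>0$). Hence the number of branches is $|S_x \cap S_y| + 1$, and it suffices to prove $|S_x \cap S_y| \le \lfloor d^2/4\rfloor$. By Lemma~\ref{lemma_cover}, $S_x \cap S_y$ indexes exactly the cliques $C_l$ of the current cover that can be extended to cover $\{x,y\}$; for each such $C_l$ the set $C_l \cup \{x,y\}$ is a clique, so the set $A_l := C_l \setminus \{x,y\}$ is a clique of $G$ all of whose vertices lie in $N(x)\cap N(y)$.

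The key structural step is to confine the vertices of the $A_l$ to a set of at most $d$ vertices. I would exploit the selection rule in step~4 of Algorithm~\ref{alg:iFPT}: the chosen $\{x,y\}$ has $y\in N_d(x)$, and $x$ is the \emph{latest} vertex in the degeneracy order that still has an uncovered forward edge. I would then prove, by induction on the order in which edges are processed, the invariant that every edge already covered at this node has both endpoints at or after $x$ in the degeneracy order (covering $\{x,y\}$ only ever merges vertices drawn from this forward region, so it cannot retroactively touch an earlier edge). Consequently no vertex preceding $x$ sits in any clique of the current cover, so every common neighbour occurring in some $A_l$ must follow $x$; that is, $A_l \subseteq M_d := N_d(x)\cap N(y) \subseteq N_d(x)\setminus\{y\}$, a set with $|M_d| \le d-1 \le d$.

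Next I would invoke local minimality. The cover maintained at this node is locally minimal (the same invariant used in the proof of Lemma~\ref{lemma_ifpt_space}), so Lemma~\ref{lemma_lm_char3} applies: any two distinct candidate cliques $C_i,C_j$ contain a nonadjacent pair. Since $x$ and $y$ are adjacent to every vertex of $M_d\cup\{x,y\}$, that nonadjacent pair must lie entirely inside $M_d$. Thus the sets $A_1,\dots,A_t$ (where $t=|S_x\cap S_y|$) are nonempty cliques of $G[M_d]$ that are pairwise \emph{separated} by a nonedge of $G[M_d]$; in particular they are pairwise distinct and incomparable. The final step is an extremal estimate on $q := |M_d| \le d-1$ vertices: passing to the complement $\overline{G[M_d]}$, the $A_l$ become independent sets any two of which are joined by an edge. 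The dense extremal families here are the edge sets of triangle-free graphs (such as a balanced complete bipartite graph), bounded by $\lfloor q^2/4\rfloor$ via Mantel's theorem, while the degenerate families (many singletons, i.e.\ an independent $M_d$) are bounded by $q$; either way $t \le \max(q,\lfloor q^2/4\rfloor) \le \lfloor d^2/4\rfloor$ because $q \le d-1$, giving at most $\lfloor d^2/4\rfloor + 1$ branches.

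The two steps I expect to be delicate are, first, making the localization invariant airtight \emph{inside the search tree}, since one must rule out that extending some $C_l$ by a large clique covers edges whose endpoints precede $x$; and second, the extremal count itself, namely establishing that a family of pairwise-separated cliques on $q$ vertices cannot exceed the Mantel bound together with the singleton (independent-set) bound. The latter is the main obstacle: it is exactly the reduction of the branching count to the maximum number of edges in a triangle-free graph, and the small-$q$ regime (where the all-singleton configuration dominates) must be absorbed by the slack $q \le d-1$ rather than by Mantel alone.
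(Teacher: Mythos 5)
Your outer structure coincides with the paper's: the paper also reduces the branch count to $|S_x \cap S_y| + 1$ and proves exactly your localization invariant from the step-4 selection rule (at a node with selected vertex $x = u_i$, every clique satisfies $C_l \subseteq \{u_i, \dots, u_n\}$, hence every candidate clique lies inside $N_d(x)$); your variant, confining $A_l = C_l \setminus \{x,y\}$ to $N_d(x) \cap N(y)$, is fine and in fact handles the possibility $x \in C_l$ more carefully than the paper does. The proof breaks at the counting step — the one you flagged as the main obstacle. The extremal claim you need, that a family of cliques on $q$ vertices which are pairwise separated by a nonedge has size at most $\max(q, \lfloor q^2/4 \rfloor)$, is false. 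Take $G[M_d]$ to be the complement of a perfect matching on $q$ vertices (the cocktail-party graph). The $2^{q/2}$ transversal cliques obtained by choosing one endpoint from each nonadjacent pair are pairwise separated in exactly the sense of Lemma~\ref{lemma_lm_char3}: two distinct transversals differ on some matched pair $\{u,v\}$, giving $u \in A_i$, $v \in A_j$ with $\{u,v\} \notin E$. So the separation property alone permits exponentially many cliques in $q$; your implicit classification of the extremal families as ``edge sets of triangle-free graphs'' (whence Mantel) is unjustified, because the family members are arbitrary cliques rather than edges, and no Mantel-type bound governs them.

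The premise feeding that step is also unavailable: the covers held at internal nodes of the search tree of Algorithm~\ref{alg:iFPT} are \emph{not} locally minimal. Steps 14--21 deliberately branch into creating a new clique even when $S_x \cap S_y \neq \phi$ (this branch is explored after the candidate branches fail, and is required for completeness of the search), which violates the ``use rule (i) whenever applicable'' condition of Definition~\ref{local_min}; for instance, with $C_1 = \{a,b,c\}$ in the cover and next edge $\{b,d\}$ where $d$ is adjacent to $a,b,c$, the new-clique branch yields the cover $\{\{a,b,c\},\{b,d\}\}$, in which every cross pair is adjacent, contradicting Lemma~\ref{lemma_lm_char3}. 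The proof of Lemma~\ref{lemma_ifpt_space} transfers only the Lemma~\ref{lemma_lm_char1} characterization (each vertex appears at most $|N(x)|$ times), not full local minimality, so Lemma~\ref{lemma_lm_char3} cannot be invoked at a node. The paper avoids both issues by not arguing through separation at all: after the localization it bounds the number of cliques in which the at most $d$ vertices of $N_d(x)$ can appear by $\lfloor d^2/4 \rfloor$, citing the Erd\H{o}s--Goodman--P\'{o}sa bound \cite{erdos1966representation}. To repair your argument you would need either that citation or a different charge — e.g., each $l \in S_x \cap S_y$ indexes a clique created at step 15 for a then-uncovered seed edge whose endpoints lie in $C_l \subseteq \{x,y\} \cup M_d$, and distinct cliques have distinct seed edges, giving $O(d^2)$ branches — which suffices for Theorem~\ref{thm_ifpt} but does not recover the stated constant $\lfloor d^2/4 \rfloor$.
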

\begin{proof}
Let $N_p(u_i)$ denote neighbours of $u_i$ that precede $u_i$ in the degeneracy order of $V$, i.e., $N_p(u_i) = N(u_i) \cap \{u_1, u_2, ..., u_{i-1}\} = N(u_i) \backslash N_d(u_i) $.

Invariant: At any node in the search tree of Algorithm \ref{alg:iFPT}, if $x = u_i$, then $\forall C_l \in \C, C_l \subseteq \{u_i, ..., u_n\}$.

Preceding invariant holds trivially by our choice of uncovered edge $\{x,y\}$ for each node. Since for any $x = u_i$, we only consider $y \in N_d(x)$, any $z \in N_p(x)$ will not be included in any $C_l$ until we process a node where $x = z$.

From the invariant it follows that for any $C_l \in \C$ such that $l \in S_x$, either $x \in C_l$ or ($x \not\in C_l$ and $C_l \subseteq N_d(x)$). But, for uncovered $\{x,y\}$, this must be the case that $x \not\in C_l$. Therefore, we only need to consider the case $C_l \subseteq N_d(x)$. The vertices of $N_d(x)$ can appear in most $\lfloor\frac{|N_d(x)|^2}{4}\rfloor \leq \lfloor\frac{d^2}{4}\rfloor$ cliques \cite{erdos1966representation}. It follows that $|S_x \cap S_y| \leq \lfloor\frac{d^2}{4}\rfloor$.

Therefore, number of branches in step 5-13 is bounded by $\lfloor\frac{d^2}{4}\rfloor$. Considering the branch for new clique at step 14-21, number of total branches at each node of the search tree is at most $\lfloor\frac{d^2}{4}\rfloor + 1$.
\end{proof}

\begin{restatable}{theorem}{ThmIFPT}
\label{thm_ifpt}
For \emph{ECC-d}, Algorithm \ref{alg:iFPT} is an FPT algorithm with time bound $O^{*}(d^{O(dk)})$. 
\end{restatable}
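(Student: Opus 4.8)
The plan is to assemble the theorem from the four lemmas already proved about Algorithm \ref{alg:iFPT}, treating it as a standard bounded-search-tree analysis. Correctness is immediate from Lemma \ref{lemma_ifpt}, so the only real work is to bound the running time and to confirm that the space usage is polynomial (so that the procedure genuinely qualifies as FPT). First I would bound the total number of nodes in the recursion tree. Lemma \ref{lemma_ifpt_branch} gives a branching factor of at most $B = \lfloor d^2/4\rfloor + 1$ at every node, and Lemma \ref{lemma_ifpt_depth} gives depth at most $dk$. A rooted tree with branching factor $B$ and depth $dk$ has $\sum_{i=0}^{dk} B^i = O(B^{dk})$ nodes, so the search tree of Algorithm \ref{alg:iFPT} contains $O\!\left((\lfloor d^2/4\rfloor + 1)^{dk}\right)$ nodes.

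Next I would account for the non-recursive work done at a single node. The subroutines \emph{Prepare}, \emph{Restore}, \emph{AddNewClique}, and \emph{RemoveNewClique} each run in $O(\Delta)$ time, and at most $\lfloor d^2/4\rfloor + 1$ of them are invoked at a node (one per branch, plus the final new-clique branch), so the per-branch bookkeeping contributes $O(d^2\Delta)$; selecting the uncovered edge $\{x,y\}$ in step~4 and the covering test in step~1 cost only a polynomial in the input size. Hence the work charged to each node is $O(d^2\Delta)$ up to polynomial factors, and multiplying this by the node count gives a total running time of $O\!\left((\lfloor d^2/4\rfloor + 1)^{dk}\cdot d^2\Delta\right)$ up to polynomial factors in $n$ and $m$.

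It then remains to simplify this expression into the claimed form. Using $\lfloor d^2/4\rfloor + 1 \le d^2$, we get $(\lfloor d^2/4\rfloor + 1)^{dk} \le (d^2)^{dk} = d^{2dk} = d^{O(dk)}$, and the stray factor $d^2\Delta$ is just $d^{O(1)}$ times a polynomial, which the $O^{*}$ convention suppresses; thus the running time is $O^{*}(d^{O(dk)})$. Finally, Lemma \ref{lemma_ifpt_space} shows the space is $O(m + k\Delta)$, which is polynomial in the input, confirming that Algorithm \ref{alg:iFPT} is an FPT algorithm for \emph{ECC-d} with respect to the parameter $(d,k)$.

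The genuinely delicate step of this argument is \emph{not} located in the theorem itself: the branching bound of Lemma \ref{lemma_ifpt_branch}, which rests on the degeneracy-ordering invariant together with the \Erdos{} estimate that the vertices of $N_d(x)$ lie in at most $\lfloor d^2/4\rfloor$ cliques, is where the real content sits. Given that lemma (and the depth and space lemmas), the theorem is a routine composition, and the only care required is to verify that the polynomial and $d^{O(1)}$ per-node overheads do not inflate the $d^{O(dk)}$ form --- which is immediate once the dominant factor $(\lfloor d^2/4\rfloor+1)^{dk}$ is isolated.
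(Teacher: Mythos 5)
Your proposal is correct and follows essentially the same route as the paper: correctness via Lemma \ref{lemma_ifpt}, node count $(\lfloor d^2/4\rfloor + 1)^{dk}$ from the branching bound of Lemma \ref{lemma_ifpt_branch} and the depth bound of Lemma \ref{lemma_ifpt_depth}, and the simplification to $O^{*}(d^{O(dk)})$. Your extra bookkeeping (the $O(\Delta)$ per-node subroutine cost and the polynomial space check via Lemma \ref{lemma_ifpt_space}) matches remarks the paper makes immediately after the theorem, so it adds rigor without changing the argument.
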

\begin{proof}
Lemma \ref{lemma_ifpt} shows the correctness proof for Algorithm \ref{alg:iFPT}. From Lemma \ref{lemma_ifpt_branch}, number of branches at any node of the search tree of Algorithm \ref{alg:iFPT} is at most $\lfloor\frac{d^2}{4}\rfloor + 1$, and from Lemma \ref{lemma_ifpt_depth}, in a search tree with at most $k$ cliques, depth of the search tree of Algorithm \ref{alg:iFPT} is at most $dk$.

It follows that in a search tree with at most $k$ cliques, total number of nodes is at most $(\lfloor\frac{d^2}{4}\rfloor + 1)^{dk}$, i.e., time bound of Algorithm \ref{alg:iFPT} is $O^{*}(d^{O(dk)})$.
\end{proof}

Each of the subroutines of Algorithm \ref{alg:iFPT} has $O(\Delta)$ time bound. If \emph{clique cover} size is $k$, considering $O(m+n)$ time to compute degeneracy ordering of $V$ and all $N_d(x)$ lists, time bound to compute a \emph{minimum clique cover} using Algorithm \ref{alg:iFPT} is $O(m+n+\Delta k(\lfloor\frac{d^2}{4}\rfloor + 1)^{dk})$.

Note that we can exploit \emph{candidate clique sets} further. For example, we can select uncovered edge arbitrarily, in step 4 of Algorithm \ref{alg:iFPT}, and still obtain an FPT algorithm with respect to parameters $k$ and $\Delta$. Applicability of \emph{candidate clique sets} may extend far beyond what we have demonstrated so far. This paradigm of computing \emph{clique cover} may give rise to better algorithms for many other objectives of \emph{clique covers}. We show an example next.

\subsubsection{\emph{Assignment-Minimum Clique Cover}.}
\label{sec:amcc}

An objective of \emph{clique cover} that has found application in problems requiring statistical data visualization is called \emph{assignment-minimum clique cover} [\cite{ennis2012assignment}, \cite{gramm2007algorithms}, \cite{piepho2004algorithm}]. The objective of this \emph{clique cover} is to find a \emph{clique cover} $C$ of $G$ such that among the \emph{clique covers} of $G$, $C$ has a minimum total number of vertices contained in the cliques of $C$, i.e., $\sum_{C_l \in \C} |C_l|$ is minimum. Prior search tree algorithm for this objective of \emph{clique cover} requires bounds exponential in input size for depth, number of branches, and space use \cite{ennis2012assignment}. Next, we describe straightforward modifications of Algorithm \ref{alg:iFPT} that solves this \emph{clique cover} problem exactly, but with polynomial bounds on depth, number of branches, and space use.

For the \emph{assignment-minimum clique cover}, we would run modifications of Algorithm \ref{alg:iFPT} with parameter $k$ set to $m$, since no \emph{clique cover} would require more than $m$ cliques. We would discard steps 9-11 and steps 17-19, since we would need to explore all \emph{clique covers} with at most $m$ cliques. Let $\C^{AM}$ denote a \emph{clique cover} with minimum total number of vertices among the \emph{clique covers} explored in the search tree. Initially, $\C^{AM}$ would be empty. In step 2 of Algorithm \ref{alg:iFPT}, while returning, we would compare $\C^{AM}$ with $\C$, and if $\C$ has smaller total number of vertices than $\C^{AM}$, then we would set $\C^{AM}$ to be $\C$. This algorithm has space bound $O(m\Delta)$ (Lemma \ref{lemma_ifpt_space}), depth bound $O(dm)$ (Lemma \ref{lemma_ifpt_depth}), and $O(d^2)$ number of branches (Lemma \ref{lemma_ifpt_branch}).

\subsection{FPT Algorithm Based on Maximal Clique Enumeration.}
\label{fpt_2}

Our maximal clique enumeration based FPT algorithm follows similar structural pattern as of the search tree algorithm of \cite{gramm2006data}, but with crucial differences. 
Using a heuristic, \cite{gramm2006data} select an uncovered edge at each node of the search tree and enumerate all maximal cliques containing the edge. The expectation is that the heuristic would reduce number of nodes in the search tree. Nevertheless, at each node, search tree algorithm of \cite{gramm2006data} searches maximal cliques in a subgraph containing at most $2^k$ vertices. We utilize degeneracy ordering to choose an edge, and restrict the search for maximal cliques in the later neighbours (in the degeneracy ordering) of an end vertex of the edge. At each node, our algorithm searches maximal cliques in a subgraph containing at most $d$ vertices.

\begin{algorithm}
\caption{\textit{MaximalCliqueFPT}}
\label{alg:mFPT}
\textbf{Input:} A graph $G=(V,E)$, \emph{Clique cover} $\C$ (of a subset of edges in $E$), an integer $k$\\
\textbf{Output:} If exists, a \emph{clique cover} $\C$ of $G$ with $|\C| \leq k$; otherwise $\phi$
\begin{algorithmic}[1]
\IF{$\C$ covers $G$}
\RETURN $\C$
\ENDIF
\IF{$k \leq 0$}
\RETURN $\phi$
\ENDIF
\STATE{Select an uncovered edge $\{x,y\}$ such that $y \in N_d(x)$, and in the degeneracy order of $V$, $x$ precedes any other vertex with such a neighbour $y$}
\FOR{\textbf{each} maximal clique $R$ containing $\{x,y\}$ in the subgraph induced by $(N_d(x) \cap N(y)) \cup \{x,y\}$}
\STATE{$Q \leftarrow MaximalCliqueFPT(G, \C \cup \{R\}, k-1)$}
\IF{$Q \ne \phi$}
\RETURN $Q$
\ENDIF
\ENDFOR
\RETURN $\phi$
\end{algorithmic}
\end{algorithm}

Algorithm \ref{alg:mFPT} outlines our maximal clique enumeration based FPT algorithm that solves \emph{ECC-d}. In step 7 of Algorithm \ref{alg:mFPT}, we select an uncovered edge that helps us to bound number of maximal cliques need to be enumerated at each node of the search tree. More specifically, we select an edge $\{x,y\}$ such that $y \in N_d(x)$ and all neighbours of $x$ that precedes $x$ in the degeneracy order of $V$ do need not to be considered in the search of maximal cliques for $\{x,y\}$. Finally, we branch into each of the choices of maximal cliques in steps 8-13. We obtain following results.

\begin{restatable}{lemma}{LemmaMfptCor}
\label{lemma_mfpt}
Algorithm \ref{alg:mFPT} correctly solves the parameterized problem \emph{ECC-d}.
\end{restatable}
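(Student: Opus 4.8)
The plan is to establish the two directions of correctness separately: \textbf{soundness} (whenever Algorithm~\ref{alg:mFPT} returns a set $\C\neq\phi$, that set is a clique cover of $G$ with $|\C|\le k$) and \textbf{completeness} (a cover of size at most $k$ exists if and only if the top-level call $MaximalCliqueFPT(G,\phi,k)$ returns a non-$\phi$ value). Soundness I would prove by induction on the recursion depth: every recursive call decreases the budget by one and appends exactly one clique $R$ to $\C$ in step~8, and each such $R$ is a clique of $G$ because it is a clique of the induced subgraph $G[(N_d(x)\cap N(y))\cup\{x,y\}]$. Since step~4 returns $\phi$ as soon as $k\le 0$, no successful root-to-leaf path adds more than $k$ cliques, and a non-$\phi$ value is returned only through the covering test in step~1; hence the returned $\C$ is a genuine cover of size at most $k$. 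The ``no'' direction of completeness then follows for free: the search tree is finite (depth bounded by the budget, finite branching per node), so the algorithm always terminates, and by soundness it cannot output a valid $\le k$ cover when none exists, so it must return $\phi$.

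The substance of the proof is the ``yes'' direction, which I would prove by induction on $k$ using an exchange argument, phrased for a general node with partial cover $\C$, residual (uncovered) edge set $E'$, and budget $k$. The inductive hypothesis is: if $E'$ can be covered by some set $\mathcal{D}$ of at most $k$ cliques of $G$, then one of the branches in steps~8--13 succeeds. First I would record the selection invariant forced by step~7: writing $x=u_i$ in the degeneracy order, $x$ is the \emph{earliest} vertex possessing an uncovered edge to a later neighbour, so every $u_j$ with $j<i$ has all of its incident edges already covered. Concretely, any edge $\{u_j,u_m\}$ with $j<i$ is a later-neighbour edge of its smaller-index endpoint, and that endpoint precedes $x$, so the edge lies outside $E'$. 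In particular, every vertex of $N_p(x)$ (the neighbours of $x$ preceding it, as in the proof of Lemma~\ref{lemma_ifpt_branch}) has all incident edges covered.

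The crux is then the exchange step showing the degeneracy restriction to $N_d(x)\cap N(y)$ loses nothing. Let $D\in\mathcal{D}$ be a clique covering the uncovered edge $\{x,y\}$, so $x,y\in D$, and set $R_0 = D\setminus N_p(x)$. Every vertex of $R_0$ other than $x$ lies in $N_d(x)$ (the predecessors having been removed), and every vertex other than $x$ and $y$ is adjacent to $y$; thus $R_0\subseteq\{x,y\}\cup(N_d(x)\cap N(y))$ is a clique containing $\{x,y\}$. Extend it to a maximal clique $R$ of $G[(N_d(x)\cap N(y))\cup\{x,y\}]$, which is one of the cliques enumerated in step~8 since $R\supseteq R_0\supseteq\{x,y\}$. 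Because every edge of $D$ touching $N_p(x)$ is already covered, $E'\cap E(D)\subseteq E(R_0)\subseteq E(R)$; hence any edge of $E'\setminus E(R)$ must be covered by some clique of $\mathcal{D}\setminus\{D\}$, so $\mathcal{D}\setminus\{D\}$ covers $E'\setminus E(R)$ with at most $k-1$ cliques. This is exactly the residual instance passed to the recursive call $MaximalCliqueFPT(G,\C\cup\{R\},k-1)$, so by the inductive hypothesis that branch returns a cover, which Algorithm~\ref{alg:mFPT} propagates upward. I expect this exchange step to be the main obstacle: one must verify carefully that dropping the predecessor vertices $N_p(x)$ from $D$ cannot leave any edge of $E'$ newly uncovered, which is precisely where the degeneracy-ordering selection rule of step~7 is indispensable.
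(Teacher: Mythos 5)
Your proof is correct, and it rests on exactly the same key fact as the paper's: the step-7 selection rule forces the invariant that when $x=u_i$ is processed, every edge incident on $\{u_1,\dots,u_{i-1}\}$ (in particular every edge touching $N_p(x)$) is already covered, so dropping predecessors from the clique search loses nothing. Where you differ is the packaging. The paper first introduces a reference family $B$ of algorithms that pick uncovered edges in arbitrary order and branch over all maximal cliques of the subgraph induced by $(N(x)\cap N(y))\cup\{x,y\}$, asserts that every member of $B$ is complete (implicitly leaning on the standard fact that an uncovered edge may always be assumed covered by a maximal clique through it), and then argues only that Algorithm 7's deviation --- shrinking the search space to $(N_d(x)\cap N(y))\cup\{x,y\}$ --- is safe relative to $B$. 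You instead run a single self-contained induction on the budget $k$ with an explicit exchange: given a hypothetical cover $\mathcal{D}$ of the residual edges and $D\in\mathcal{D}$ through $\{x,y\}$, trim $R_0=D\setminus N_p(x)$, verify $R_0\subseteq\{x,y\}\cup(N_d(x)\cap N(y))$, extend $R_0$ to a maximal clique $R$ enumerated in step 8, and use the invariant to check $E'\cap E(D)\subseteq E(R_0)\subseteq E(R)$, so that $\mathcal{D}\setminus\{D\}$ witnesses the child instance with budget $k-1$. Your route buys rigor the paper leaves implicit: the exchange simultaneously proves the sufficiency of branching on maximal cliques (which the paper merely asserts for the family $B$) and the safety of the degeneracy restriction, and your separate treatment of soundness and termination supplies the ``no''-instance direction that the paper dispatches in one sentence. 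The paper's two-stage comparison buys brevity and modularity by reusing known machinery; the two arguments are otherwise built on the identical invariant and the identical observation that only $N_p(x)\cap N(y)$ is excluded from the search.
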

\begin{proof}
Let $N_p(u_i)$ denote neighbours of $u_i$ that precede $u_i$ in the degeneracy order of $V$, i.e., $N_p(u_i) = N(u_i) \cap \{u_1, u_2, ..., u_{i-1}\} = N(u_i) \backslash N_d(u_i) $. Let $(G,(d,k))$ is an instance of \emph{ECC-d}. Denote Algorithm \ref{alg:mFPT} with $A_x$.

Consider a family of algorithms $B$ that consider uncovered edges $\{x,y\}$ in arbitrary order in step 7 of Algorithm \ref{alg:mFPT}, and in steps 8-13 branches into each of the choices of maximal cliques containing $\{x,y\}$ in the subgraph induced by $(N(x) \cap N(y)) \cup \{x,y\}$.

Now, if $(G,(d,k))$ is a 'yes' instance, then any algorithm $B_y \in B$ will report it. This is because $B_y$ considers all possible ways to cover every uncovered edge: $\{x,y\}$ can be covered with one of the maximal cliques containing $\{x,y\}$ in the subgraph induced by $(N(x) \cap N(y)) \cup \{x,y\}$. $B_y$ considers all such possible maximal cliques, and allows search tree of depth with $k$ cliques. Similarly, if $(G,(d,k))$ is a 'no' instance, then $B_y$ returns $\phi$, denoting a 'no' instance.

$A_x$ deviates from the family of algorithms $B$: $A_x$ considers maximal cliques containing $\{x,y\}$ in the subgraph induced by $(N_d(x) \cap N(y)) \cup \{x,y\}$. Next, we show that this deviation is safe.

Invariant: At any node in the search tree of Algorithm \ref{alg:mFPT}, if $x = u_i$, then all edges incident on $\{u_1, u_2..., u_{i-1}\}$ are covered.

Preceding invariant holds trivially, by our choice of uncovered edge $\{x,y\}$, for each node in the search tree.

Note that $y \in N_d(x) \subseteq \{u_{i+1}, ..., u_n\}$ and $N_d(x) \cap N(y) \subseteq \{u_{i+1}, ..., u_n\}$. Now, the vertices that are not considered by $A_x$ in the enumeration of maximal cliques are $(N(x) \cap N(y)) \backslash (N_d(x) \cap N(y)) = N_p(x) \cap N(y) \subseteq \{u_1, ..., u_{i-1}\}$. By the invariant, all edges incident on $\{u_1, ...., u_{i-1}\}$ are covered when we are considering maximal cliques for $\{x, y\}$. Therefore, it is safe to consider only maximal cliques in the subgraph induced by $(N_d(x) \cap N(y)) \cup \{x,y\}$.
\end{proof}

Using the Moon-Moser bound, we get desired time bound for Algorithm \ref{alg:mFPT}.

\begin{restatable}{theorem}{ThmMFPT}
\label{thm_mfpt}
For \emph{ECC-d}, Algorithm \ref{alg:mFPT} is an FPT algorithm with time bound $O^{*}(3^{O(dk)})$. 
\end{restatable}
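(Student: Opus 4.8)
The plan is to establish the $O^{*}(3^{O(dk)})$ time bound for Algorithm \ref{alg:mFPT} by bounding the total number of nodes in the search tree, exactly mirroring the structure of the proof of Theorem \ref{thm_ifpt}. Correctness is already supplied by Lemma \ref{lemma_mfpt}, so the remaining work is purely the running-time analysis, which factors into three ingredients: the branching factor at each node, the depth of the search tree, and the polynomial work done per node.

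First I would bound the depth of the search tree. Each recursive call in step 8 invokes \emph{MaximalCliqueFPT} with budget $k-1$, and the base case at step 4 terminates once $k \leq 0$. Since every branch adds exactly one clique $R$ to $\C$, the recursion can descend at most $k$ levels before the budget is exhausted; hence the depth is at most $k$. (This is cleaner than in Algorithm \ref{alg:iFPT}, where a single clique is grown one edge at a time giving depth $dk$; here each node commits a whole maximal clique at once.)

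Next I would bound the branching factor, which is the crux of the argument. At a node we select an uncovered edge $\{x,y\}$ with $y \in N_d(x)$ and enumerate all maximal cliques containing $\{x,y\}$ in the subgraph induced by $(N_d(x) \cap N(y)) \cup \{x,y\}$. The key observation is that $|N_d(x)| \leq d$ by the definition of degeneracy ordering, so this induced subgraph has at most $d+2$ vertices; in fact every such maximal clique is determined by a subset of $N_d(x) \cap N(y)$, a vertex set of size at most $d$. By the Moon--Moser theorem \cite{moon1965cliques}, a graph on $d$ vertices has at most $3^{d/3}$ maximal cliques, so the number of branches at any node is $O(3^{d/3})$. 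This is exactly where the single-exponential-in-$d$ (rather than doubly-exponential) character of the bound comes from, and it is the step I expect to be the main obstacle, because it requires care in identifying precisely which vertex set the maximal-clique count is taken over: the bound must be applied to the at-most-$d$-vertex set $N_d(x) \cap N(y)$, not to the full common neighbourhood $N(x) \cap N(y)$, and it is the restriction to $N_d(x)$ enabled by Lemma \ref{lemma_mfpt} that makes the degeneracy bound available.

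Finally I would combine these estimates. With branching factor $O(3^{d/3}) = 3^{O(d)}$ and depth at most $k$, the total number of nodes in the search tree is at most $\left(3^{O(d)}\right)^{k} = 3^{O(dk)}$. The per-node work consists of enumerating the maximal cliques in an induced subgraph on at most $d+2$ vertices, checking whether $\C$ covers $G$, and selecting the next uncovered edge; each of these is polynomial in the input size (the enumeration itself costing time proportional to the number of cliques times a polynomial factor, already absorbed into the $3^{O(d)}$ branching count and the suppressed polynomial factors). Multiplying the node count by the polynomial per-node cost and suppressing polynomial factors in the input size via the $O^{*}$ notation yields the claimed time bound $O^{*}(3^{O(dk)})$, completing the proof.
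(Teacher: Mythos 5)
Your proposal is correct and follows essentially the same route as the paper: correctness delegated to Lemma \ref{lemma_mfpt}, branching factor bounded by the Moon--Moser bound $3^{d/3}$ applied to the at-most-$d$-vertex set $N_d(x) \cap N(y)$, depth at most $k$ since each branch adds one clique and decrements the budget, giving at most $3^{dk/3}$ nodes with polynomial per-node work absorbed by $O^{*}$. Your explicit statement of the depth-$k$ bound and the contrast with the depth-$dk$ recursion of Algorithm \ref{alg:iFPT} merely makes fully explicit what the paper leaves implicit in its node-count computation.
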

\begin{proof}
Lemma \ref{lemma_mfpt} shows the correctness proof for Algorithm \ref{alg:mFPT}. Next, we show the fixed-parameter tractability of Algorithm \ref{alg:mFPT}.

Since $|N_d(x) \cap N(y)| \leq d$, by the Moon-Moser bound, number of maximal cliques in a subgraph induced by vertices of $N_d(x) \cap N(y)$ is at most $3^{\frac{d}{3}}$, i.e., number of maximal cliques enumerated at each node of the search tree is at most $3^{\frac{d}{3}}$. Therefore, total number of nodes in the search tree is at most $3^\frac{dk}{3}$, i.e., time bound of Algorithm \ref{alg:mFPT} is $O^{*}(3^{O(dk)})$.
\end{proof}

At each node of the search tree of Algorithm \ref{alg:mFPT}, we need to enumerate maximal cliques of a subgraph containing $|N_d(x) \cap N(y)| = O(d)$ vertices. Assuming maximal clique enumeration algorithm uses space proportional to the size of input graph, in a search tree with at most $k$ cliques, space use of Algorithm \ref{alg:mFPT} is $O(m+d^2k)$.

Using the Moon-Moser bound for the search tree algorithm of \cite{gramm2006data}, we obtain $O^{*}(3^{O(k2^k)})$ time bound. It follows that for $2^k>d$, Algorithm \ref{alg:mFPT} improves upon the time bound of the search tree algorithm of \cite{gramm2006data} by a factor of $3^{k(2^k-d)}$.

We can also compare the algorithms using time bounds of maximal clique enumeration algorithms. Using the time bound of \cite{eppstein2013listing}, we get $O^*{}(d^k3^{O(dk)})$ and $O^{*}(2^{k^2}3^{O(dk)})$ time bounds for Algorithm \ref{alg:mFPT} and search tree algorithm of \cite{gramm2006data} respectively. For $2^k>d$, this results in $(\frac{2^k}{d})^k$ factor improvement of time bound by Algorithm \ref{alg:mFPT} on the time bound of the search tree algorithm of \cite{gramm2006data}.

Note that our FPT algorithm based on maximal clique enumeration (Algorithm \ref{alg:mFPT}) has better dependency on $d$ in the time bound than our FPT algorithm based on \emph{candidate clique sets} (Algorithm \ref{alg:iFPT}): $O^{*}(3^{O(dk)})$ vs. $O^{*}(d^{O(dk)})$. Therefore, we expect Algorithm \ref{alg:mFPT} to perform better than Algorithm \ref{alg:iFPT} on dense graphs.
\section{Experimental Analyses}
\label{sec:exps}

In this section we present a summary of our experimental analyses of our FPT algorithms (Section \ref{exp_fpt}) and greedy framework (Section \ref{exp_greedy}). We compare runtimes and other metrics of our algorithms against the state of the art, on random graphs, as well as on real-world graphs. We ask reader to consult Appendix \ref{sec:app_exp} for further details of the results discussed in this section.

We ran our experiments on an AMD Epyc 7662 processors based system (part of Purdue University Community Cluster) called Bell\footnote{\url{rcac.purdue.edu/compute/bell}}. Nodes in the cluster has 128 cores, each core running at 2.0 GHz, with 16 MB unified L3 Cache. Bell nodes have 256-1024 GB memory, and run CentOS 7 operating system. We implemented our algorithms in the C++ programming language and compiled using gcc compiler (version 9.3.0) with the -O3 optimization flag. Our implementations are freely available upon request (will be available at \url{github.com}).

\subsection{Experimental Analysis of FPT algorithms}
\label{exp_fpt}

\subsubsection{Dataset.}
In this analysis we evaluated all our algorithms using random graphs from the $G(n,p)$ model. Varying $n$ (150-1000) and $p$ (0.035 - 0.11), we generated 16 ensembles of $G(n,p)$, each containing 32 random graphs. Appendix \ref{sec:app_exp_dataset1} contains details of the dataset and related statistics.

\subsubsection{Algorithms Tested.}
We have found that prepossessing with data reduction rules such as presented by \cite{gramm2006data} is crucial for the search tree algorithms. In our study we have not found any graph instance which has application of interleaving data reduction rules in the search tree (as done in \cite{gramm2006data}). Therefore, we used data reduction rules of \cite{gramm2006data} as prepossessing steps for our FPT algorithms. We implemented and tested four algorithms: two of our FPT algorithms outlined in Algorithm \ref{alg:iFPT} (\emph{CFPT}) and Algorithm \ref{alg:mFPT} (\emph{MFPT}), the FPT algorithm of \cite{gramm2006data} (\emph{GFPT}), and the FPT algorithm of \cite{gramm2006data} without interleaving the data reduction rules (\emph{GFPT-NI}). We evaluated \emph{MFPT} with three different maximal clique enumeration algorithms: \cite{koch2001enumerating} (\emph{Koch}), \cite{tomita2006worst} (\emph{TTT}), \cite{eppstein2013listing} (\emph{ELS}); and based on our evaluation, we used \emph{ELS} based implementation for \emph{MFPT}, \emph{GFPT}, and \emph{GFPT-NI} (Appendix \ref{sec:app_exp_mce}).

\subsubsection{Experimental Results.}
All tested algorithms were allowed to run up to 4 hours on each of the 512 graphs. Let \emph{completion rate} denote percentage of number of graphs (in each of the graph ensembles) for which \emph{clique covers} were computed within 4 hours. \emph{Completion rates} are significantly small from \emph{GFPT} (18.75\%-46.88\%) and \emph{GFPT-NI} (31.25\%-68.75\%) (Appendix \ref{sec:app_exp_fptCompletion}). Both \emph{CFPT} (\emph{completion rates} 93.75\% -100\%) and \emph{MFPT} (\emph{completion rates} 96.88\% -100\%) computed \emph{clique cover} of all 32 graphs of most of the graph ensembles.

\begin{table}
\caption{Average runtimes of \emph{GFPT}, \emph{CFPT}, and \emph{MFPT}. Each of the row statistics is on the instances on which \emph{GFPT} terminated within 4 hours (Appendix \ref{sec:app_exp_fptCompletion}).}
\centering
\begin{tabular}{lrrrr}
\hline
\multirow{2}{*}{Graph} & \multicolumn{3}{c}{Average Runtime   (seconds)} \\
\cline{2-4}
                       &\multicolumn{1}{c}{\emph{GFPT}}            & \multicolumn{1}{c}{\emph{CFPT}}          & \multicolumn{1}{c}{\emph{MFPT}}          \\
\hline
er\_150\_0100          & 3123.58         & 8.62          & 7.90          \\
er\_150\_0102          & 3545.24         & 8.98          & 22.11         \\
er\_150\_0104          & 2939.30         & 12.56         & 5.21          \\
er\_150\_0106          & 4121.85         & 23.70         & 22.75         \\
er\_150\_0108          & 4071.08         & 2.64          & 9.97          \\
er\_150\_0110          & 4117.19         & 0.61          & 7.14          \\
                       &                 &               &               \\
er\_250\_0070          & 2570.21         & 13.66         & 4.59          \\
er\_260\_0070          & 3190.58         & 2.14          & 3.45          \\
er\_270\_0070          & 3724.74         & 5.06          & 9.04          \\
er\_280\_0070          & 2001.80         & 8.56          & 11.10         \\
er\_290\_0070          & 3327.44         & 13.34         & 26.33         \\
er\_300\_0070          & 3921.95         & 2.21          & 9.98          \\
                       &                 &               &               \\
er\_200\_009           & 4437.79         & 6.24          & 12.18         \\
er\_500\_005           & 4425.95         & 1.99          & 7.39          \\
er\_750\_0040          & 5227.85         & 1.82          & 5.18          \\
er\_1000\_0035         & 3405.49         & 2.80          & 6.20 \\
\hline
\end{tabular}
\label{tab:fptRunTime}
\end{table}

Table \ref{tab:fptRunTime} shows the runtime comparisons of \emph{GFPT}, \emph{CFPT}, and \emph{MFPT}. Both of our algorithms show significantly better performance than \emph{GFPT}: \emph{CFPT} is approximately 100-5000 times faster than \emph{GFPT}, and \emph{MFPT} is approximately 250-900 times faster than \emph{GFPT} (see Appendix \ref{sec:app_exp_fptRuntime} for geometric mean of ratios of runtimes).

\emph{GFPT-NI} showed better performance than \emph{GFPT} on some of the graph ensembles (Table \ref{tab:fptRunTime2}, Appendix \ref{sec:app_exp_fptRuntime}). \emph{GFPT} and \emph{GFPT-NI} have the same search tree for all our test graphs, since in our study we have not found any application of data reduction rules during the execution of the search tree algorithm of \cite{gramm2006data}. The reduction of runtime of \emph{GFPT-NI}, compared to \emph{GFPT}, comes from the time saved by not attempting to apply data reduction rules at each node of the search tree (for a graph with $n$ vertices, data reduction rules have $O(n^4$) time bound \cite{gramm2006data}). But, even with this savings in runtimes, performance of \emph{GFPT-NI} are nowhere close to the performances of our algorithms.

The reason for stunning performances of our algorithms is in the fact that the search trees generated by our algorithms are remarkably small. Table \ref{tab:fptSearchTree} shows comparison results of number of nodes in the search trees of \emph{GFPT} and \emph{MFPT}. Search trees of \emph{GFPT} (or \emph{GFPT-NI}) and \emph{MFPT} are directly comparable: every node in the search tree corresponds to a new clique, added to \emph{clique cover}. Search trees of \emph{MFPT} are approximately 50-150 times smaller than search trees of \emph{GFPT} (or \emph{GFPT-NI}). Note that these massive reductions of number of nodes in the search trees are only on the graphs on which \emph{GFPT} terminated within 4 hours. With increased time cut off ($>$ 4 hours), these reductions would get even larger: both \emph{CFPT} and \emph{MFPT} can solve graph instances in minutes, which \emph{GFPT} could not solve in 24 hours.

Search trees of \emph{CFPT} are not directly comparable with search trees of other algorithms: search tree of \emph{CFPT} has depth bound $dk$ (Lemma \ref{lemma_ifpt_depth}), whereas search trees of other algorithms have depth bound $k$. Nevertheless, \emph{CFPT} shows even better performance than \emph{MFPT} on most of the graph ensembles, despite the fact that \emph{MFPT} has better dependency on $d$ in the time bound than \emph{CFPT}: $O^{*}(3^{O(dk)})$ vs. $O^{*}(d^{O(dk)})$.

\begin{table}
\caption{Average number of nodes (in million) generated in the search trees of \emph{GFPT} (or \emph{GFPT-NI}) and \emph{MFPT}. Last column shows geometric mean of ratios of number of nodes in the search tree of \emph{GFPT} to number of nodes in the search tree of \emph{MFPT}.}
\centering
\begin{tabular}{lrrr}
\hline
\multirow{4}{*}{Graph} & \multicolumn{2}{c}{Average number of} & \multicolumn{1}{c}{Geometric mean} \\
                       & \multicolumn{2}{c}{search tree nodes} & \multicolumn{1}{c}{of ratios of}  \\
                       & \multicolumn{2}{c}{(in million)}      & \multicolumn{1}{c}{number of search}         \\
                       \cline{2-3}
                       & \multicolumn{1}{c}{\emph{GFPT}} & \multicolumn{1}{c}{\emph{MFPT}}             & \multicolumn{1}{c}{tree nodes}        \\
\hline
er\_150\_0100          & 2387.42            & 36.16            & 82.90             \\
er\_150\_0102          & 4542.24            & 77.11            & 82.51             \\
er\_150\_0104          & 4702.10            & 86.70            & 86.52             \\
er\_150\_0106          & 3266.89            & 36.06            & 111.32            \\
er\_150\_0108          & 3730.83            & 36.09            & 137.56            \\
er\_150\_0110          & 6287.48            & 111.71           & 85.79             \\
                       &                    &                  &                   \\
er\_250\_0070          & 2335.21            & 14.83            & 98.83             \\
er\_260\_0070          & 5622.55            & 27.29            & 144.26            \\
er\_270\_0070          & 4440.97            & 36.18            & 111.81            \\
er\_280\_0070          & 781.74             & 18.62            & 44.37             \\
er\_290\_0070          & 3116.29            & 56.33            & 58.73             \\
er\_300\_0070          & 2075.01            & 26.86            & 64.44             \\
                       &                    &                  &                   \\
er\_200\_009           & 4443.24            & 105.26           & 107.83            \\
er\_500\_005           & 4410.74            & 24.63            & 153.69            \\
er\_750\_0040          & 3649.86            & 67.32            & 126.41            \\
er\_1000\_0035         & 3400.76            & 27.74            & 94.64            \\
\hline
\end{tabular}
\label{tab:fptSearchTree}
\end{table}
\subsection{Experimental Analysis of Greedy Framework}
\label{exp_greedy}

\subsubsection{Dataset.}
Our dataset for this analysis consists of 30 real-world graphs with number of edges ranging from approximately 1 million to 1.8 billion (Appendix \ref{sec:app_exp_dataset2}). The dataset contains 21 brain networks from \url{networkrepository.com} \cite{nr}, 6 social networks from the Stanford Large Network Dataset Collection \cite{snapnets}, and 3 gene regulatory networks from the Suite Sparse Matrix Collection (\url{sparse.tamu.edu} \cite{davis2011university}).

\subsubsection{Algorithms Tested.}
Our graphs of this analysis are not solvable by the FPT algorithms in a reasonable amount of time (24 hours). Therefore, we consider fast variants of the state of the art heuristic algorithms. We tested four algorithms: improvement of the Kellerman heuristic \cite{kellerman1973determination} by \cite{gramm2006data} (\emph{g-ALG}), the \emph{ECC-rc} variant of heuristic algorithm by \cite{conte2020large} (\emph{c-ALG}), two of our algorithms, outlined in Algorithm \ref{alg:dine} (\emph{CCSG}) and modifications of \emph{CCSG} for degeneracy ordering outlined in Section \ref{sec:degen} (\emph{CCSD}). For \emph{CCSG} (also for \emph{g-ALG}), we used degree based edge ordering (edges incident on low degree vertices are processed first), and for both \emph{CCSG} and \emph{CCSD} we selected largest clique in step 5 of Algorithm \ref{alg:dine}.

\subsubsection{Metrics of Comparison.}
Our metric to compare quality of \emph{clique cover} is relative reduction of \emph{clique cover size}. For relative reduction, we do not consider \emph{trivial cliques} \footnote{A \emph{trivial clique} is an edge whose end vertices do not share any common vertex in their neighbourhood. Any algorithm must cover every such edge by a clique containing only the edge.} of a graph in the \emph{clique cover}, since \emph{trivial cliques} are common among all \emph{clique covers}. Therefore, if number of cliques (\emph{without trivial cliques}) produced by algorithm 'A', and algorithm 'B' are 'a', and 'b', respectively, then relative reduction of \emph{clique cover} size by 'B' (relative to 'A') is (1 - $\frac{b}{a}) \times 100$.

For the performance comparison we use relative performance: if 'a' is the runtime of algorithm 'A', and 'b' (in same unit as 'a') is the runtime of algorithm 'B', then relative performance of 'B' (relative to 'A') is the ratio of 'a' to 'b'.

\subsubsection{Experimental Results.}
All tested algorithms were allowed to run up to 24 hours on each of the graphs of the dataset. Out of 30 graphs (Appendix \ref{sec:app_exp_dataset2}), \emph{c-ALG} terminated on 29 graphs, and \emph{g-ALG} terminated only on 13 graphs. Our algorithms terminated on all the graphs within 3 hours: longest time taken by \emph{CCSG} is 2.57 hours, and by \emph{CCSD} is 2.26 hours.

\begin{table}
\centering
\caption{Average relative reduction of \emph{clique cover} size by  \textit{CCSD} and \emph{CCSG} (relative to \textit{g-ALG} and \textit{c-ALG}).}
\begin{tabular}{lrr}
\hline
Algorithm & \multicolumn{2}{l}{Average relative   reduction} \\
\cline{2-3}
compared  & CCSG                    & CCSD                   \\
\hline
g-ALG                                 & 27.14                   & 30.15                  \\
c-ALG                                 & 13.15                   & 16.36 \\
\hline
\end{tabular}
\label{tab:comparison}
\end{table}

\begin{table}
\centering
\caption{Geometric mean of relative performances of \textit{CCSD} and \emph{CCSG} (relative to \textit{g-ALG} and \textit{c-ALG}).}
\begin{tabular}{lrr}
\hline
Algorithm & \multicolumn{2}{l}{Mean relative performance} \\
\cline{2-3}
compared  & CCSG                    & CCSD                   \\
\hline
g-ALG                                 & 80.22                         & 93.91                        \\
c-ALG                                 & 2.51                          & 2.91                         \\
\hline
\end{tabular}
\label{tab:comparison_time}
\end{table}

For the graphs on which \emph{g-ALG} and \emph{c-ALG} terminated within 24 hours, Table \ref{tab:comparison} and Table \ref{tab:comparison_time} summarize comparison of \emph{CCSG} and \emph{CCSD} against \emph{g-ALG} and \emph{c-ALG} on the two metrics (details are in Appendix \ref{sec:app_exp_greedyResults}). On both metrics, both of our algorithms outperformed \emph{g-ALG} and \emph{c-ALG}. Notably, our algorithms are approximately 90 times faster on average than \emph{g-ALG}, and \emph{clique cover} size of our algorithms are approximately 30\% smaller on average compared to \emph{g-ALG}.

\begin{figure}[H]
\centering

\begin{tikzpicture}[scale=1.0]
\begin{axis}[
    xlabel={$m$},
    ylabel={$CCS_{max}$},
    ymajorgrids=true,
    grid style=dashed,
    legend pos=north west,
    xmode=log,
    ymode=log,
    ]

\addplot [only marks, color=blue, mark=star] table {data_plots/CCS_bound1/ccsg.dat};

\addplot [color=brown] table {data_plots/CCS_bound1/m1.dat};

\addplot [color=orange] table {data_plots/CCS_bound1/m2.dat};

\addplot [color=red] table {data_plots/CCS_bound1/m3.dat};

\legend{\textit{CCSG}, $m$, $2m$, $3m$}

\end{axis}
\end{tikzpicture}
\caption{$CCS_{max} =max\{\sum_{x\in V}|S_x|\}$ vs. $m$ for all 30 test graphs of Table \ref{tab:dataset}.}
\label{fig:dine_ms}
\end{figure}
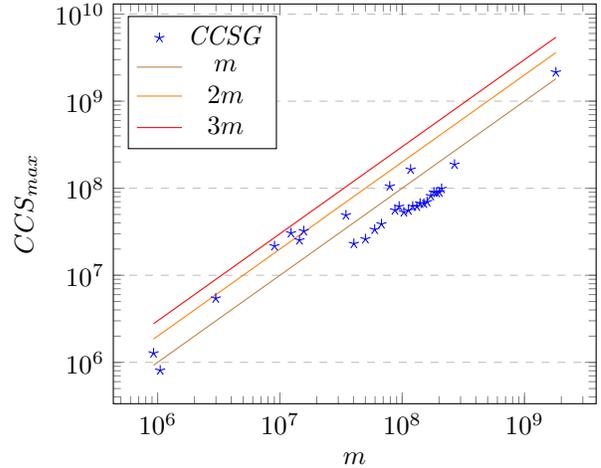

All tested algorithms used same post-processing step (by \cite{kou1978covering}). For the entire dataset, average and maximum relative reduction of \emph{clique cover} size by post-processing in \textit{CCSG} are only 0.05 and 0.1 respectively: this gives strong experimental verification of extremely rare redundancy in \emph{locally minimal clique cover} (Lemma \ref{lemma_lm_char4}). Also, for the graphs of Appendix \ref{sec:app_exp_dataset1}, \emph{clique cover} sizes from our greedy algorithms are only 1\% to 2\% larger on average than the optimal \emph{clique cover} sizes; but our greedy algorithms are orders of magnitude faster than the FPT algorithms (Appendix \ref{sec:app_exp_greedyFPT}).

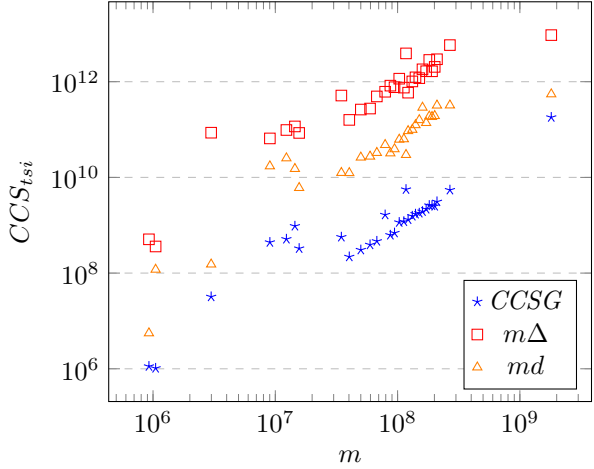
\begin{figure}[!t]
\centering

\begin{tikzpicture}[scale=1.0]
\begin{axis}[
    xlabel={$m$},
    ylabel={$CCS_{tsi}$},
    ymajorgrids=true,
    grid style=dashed,
    legend pos=south east,
    xmode=log,
    ymode=log,
    ]

\addplot [only marks, color=blue, mark=star] table {data_plots/CCS_bound2/ccsg.dat};

\addplot [only marks, color=red, mark=square] table {data_plots/CCS_bound2/mdelta.dat};

\addplot [only marks, color=orange, mark=triangle] table {data_plots/CCS_bound2/md.dat};

\legend{\textit{CCSG}, $m\Delta$, $md$}

\end{axis}
\end{tikzpicture}
\caption{Total size of the intersections at step 4 of Algorithm \ref{alg:dine} ($CCS_{tsi}$) vs. $m$ for all 30 test graphs of Table \ref{tab:dataset}. Both $m\Delta$ and $md$ values show that they are consistent upper bounds of $CCS_{tsi}$.}
\label{fig:dine_ts}
\end{figure}

In Theorem \ref{thm_ccsg} and Theorem \ref{thm_ccsd} we have described results for the cases when the bound $\sum_{x \in V} |S_x| = O(m)$ holds. We have found that the bound holds for all our test graphs. Algorithm \ref{alg:dine} grows and shrinks the value of the $\sum_{x \in V} |S_x|$. In Figure \ref{fig:dine_ms}, we show the maximum value of the sum $CCS_{max} = max\{\sum_{x \in V} |S_x|\}$ in the entire execution of the algorithm. Also, in Figure \ref{fig:dine_ts}, we plotted total size of the intersections $CCS_{tsi}$ at step 4 of Algorithm \ref{alg:dine}, which is bounded by $\sum_{\{x,y\}\in E} min\{|S_x|, |S_y|\}$. $CCS_{tsi}$ is the dominating time of Algorithm \ref{alg:dine}. To avoid cluttering the plots, we omitted similar results of \emph{CCSD}.

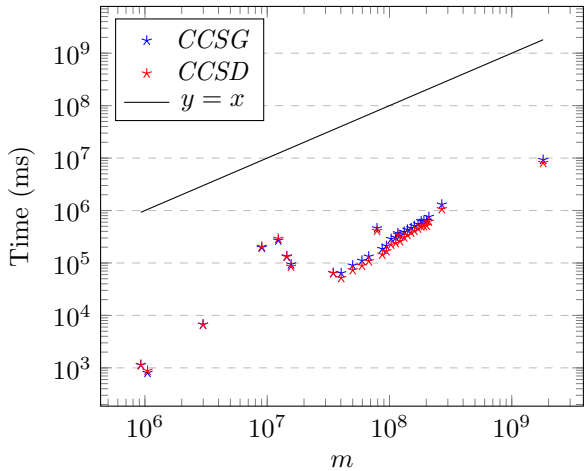
\begin{figure}[!t]
\centering

    \begin{tikzpicture}
    \begin{axis}[
        xlabel={$m$},
        ylabel={Time (ms)},
        ymajorgrids=true,
        grid style=dashed,
        legend pos=north west,
        xmode=log,
        ymode=log,
        ]
    
    \addplot [only marks, color=blue, mark=star] table {data_plots/linearTime/ccsg.dat};
    
    \addplot [only marks, color=red, mark=star] table {data_plots/linearTime/ccsd.dat};
    
    \addplot [color=black] table {data_plots/linearTime/y_x.dat};
    
    \legend{\textit{CCSG}, \textit{CCSD}, $y = x$}
    
    \end{axis}
    \end{tikzpicture}
\caption{Runtimes (milliseconds) of \emph{CCSG} and \emph{CCSD} vs. $m$ for all 30 real-world graphs of the dataset.}
\label{fig:linear_time}
\end{figure}

In Figure \ref{fig:linear_time}, we show runtimes (milliseconds) of \emph{CCSG} and \emph{CCSD} against number of edges of graph. We conclude that for the test graphs of Table \ref{tab:dataset}, both \emph{CCSG} and \emph{CCSD} exhibit linear time bound.
\section{Conclusions}
\label{sec:conclusions}

We have presented a set theoretic concept and its use as a design paradigm for computing different objectives of \emph{clique covers}. We have shown that integration of structural parameters of graph along with natural parameter can lead to design of better algorithms for \emph{clique cover} problems, both in asymptotic time bound and in practice. We have provided detailed comparison results of different metrics of our algorithms against the state of the art, substantiating excellent performance of our algorithms.

Motivated by real-world applications, recently, a number of \emph{clique cover} problems have been studied \cite{cooley2021parameterized} that consider edge weights of graph. We believe (alluding to the polynomial bounds we have shown for the \emph{assignment-minimum clique cover} \cite{ennis2012assignment}) our set theoretic concept and its use in utilizing structural parameters of graph would prompt further studies of these edge weighted \emph{clique cover} problems.

\bibliographystyle{plain}
\bibliography{8_biblio.bib}

\begin{thebibliography}{10}

\bibitem{agarwal1994can}
Pankaj~K Agarwal, Noga Alon, Boris Aronov, and Subhash Suri.
\newblock Can visibility graphs be represented compactly?
\newblock {\em Discrete \& Computational Geometry}, 12(3):347--365, 1994.

\bibitem{alon1986covering}
Noga Alon.
\newblock Covering graphs by the minimum number of equivalence relations.
\newblock {\em Combinatorica}, 6(3):201--206, 1986.

\bibitem{ausiello2012complexity}
Giorgio Ausiello, Pierluigi Crescenzi, Giorgio Gambosi, Viggo Kann, Alberto
  Marchetti-Spaccamela, and Marco Protasi.
\newblock {\em Complexity and Approximation: Combinatorial Optimization
  Problems and their Approximability Properties}.
\newblock Springer Science \& Business Media, 2012.

\bibitem{behrisch2006efficiently}
Michael Behrisch and Anusch Taraz.
\newblock Efficiently covering complex networks with cliques of similar
  vertices.
\newblock {\em Theoretical Computer Science}, 355(1):37--47, 2006.

\bibitem{blanchette2012clique}
Mathieu Blanchette, Ethan Kim, and Adrian Vetta.
\newblock Clique cover on sparse networks.
\newblock In {\em 2012 Proceedings of the Fourteenth Workshop on Algorithm
  Engineering and Experiments (ALENEX)}, pages 93--102. SIAM, 2012.

\bibitem{chang2001tree}
Maw-Shang Chang and Haiko M{\"u}ller.
\newblock On the tree-degree of graphs.
\newblock In {\em International Workshop on Graph-Theoretic Concepts in
  Computer Science}, pages 44--54. Springer, 2001.

\bibitem{chiba1985arboricity}
Norishige Chiba and Takao Nishizeki.
\newblock Arboricity and subgraph listing algorithms.
\newblock {\em SIAM Journal on computing}, 14(1):210--223, 1985.

\bibitem{conte2020large}
Alessio Conte, Roberto Grossi, and Andrea Marino.
\newblock Large-scale clique cover of real-world networks.
\newblock {\em Information and Computation}, 270:104464, 2020.

\bibitem{cooley2021parameterized}
Madison Cooley, Casey~S Greene, Davis Issac, Milton Pividori, and Blair~D
  Sullivan.
\newblock Parameterized algorithms for identifying gene co-expression modules
  via weighted clique decomposition.
\newblock In {\em SIAM Conference on Applied and Computational Discrete
  Algorithms (ACDA21)}, pages 111--122. SIAM, 2021.

\bibitem{davis2011university}
Timothy~A Davis and Yifan Hu.
\newblock The university of florida sparse matrix collection.
\newblock {\em ACM Transactions on Mathematical Software (TOMS)}, 38(1):1--25,
  2011.

\bibitem{dean1991thickness}
Alice~M Dean, Joan~P Hutchinson, and Edward~R Scheinerman.
\newblock On the thickness and arboricity of a graph.
\newblock {\em Journal of Combinatorial Theory, Series B}, 52(1):147--151,
  1991.

\bibitem{ennis2012assignment}
John~M Ennis, Charles~M Fayle, and Daniel~M Ennis.
\newblock Assignment-minimum clique coverings.
\newblock {\em Journal of Experimental Algorithmics (JEA)}, 17:1--1, 2012.

\bibitem{eppstein2013listing}
David Eppstein, Maarten L{\"o}ffler, and Darren Strash.
\newblock Listing all maximal cliques in large sparse real-world graphs.
\newblock {\em Journal of Experimental Algorithmics (JEA)}, 18:3--1, 2013.

\bibitem{erdos1966representation}
Paul Erd{\"o}s, Adolph~W Goodman, and Louis P{\'o}sa.
\newblock The representation of a graph by set intersections.
\newblock {\em Canadian Journal of Mathematics}, 18:106--112, 1966.

\bibitem{erdHos1966chromatic}
Paul Erd{\H{o}}s and Andr{\'a}s Hajnal.
\newblock On chromatic number of graphs and set-systems.
\newblock {\em Acta Mathematica Academiae Scientiarum Hungarica},
  17(1-2):61--99, 1966.

\bibitem{freuder1982sufficient}
Eugene~C Freuder.
\newblock A sufficient condition for backtrack-free search.
\newblock {\em Journal of the ACM (JACM)}, 29(1):24--32, 1982.

\bibitem{frieze1995covering}
Alan Frieze and Bruce Reed.
\newblock Covering the edges of a random graph by cliques.
\newblock {\em Combinatorica}, 15(4):489--497, 1995.

\bibitem{garey1979computers}
Michael~R Garey and David~S Johnson.
\newblock {\em Computers and intractability}, volume 174.
\newblock freeman San Francisco, 1979.

\bibitem{gramm2006data}
Jens Gramm, Jiong Guo, Falk H{\"u}ffner, and Rolf Niedermeier.
\newblock Data reduction, exact, and heuristic algorithms for clique cover.
\newblock In {\em Proceedings of the Meeting on Algorithm Engineering \&
  Expermiments}, pages 86--94. Society for Industrial and Applied Mathematics,
  2006.

\bibitem{gramm2007algorithms}
Jens Gramm, Jiong Guo, Falk H{\"u}ffner, Rolf Niedermeier, Hans-Peter Piepho,
  and Ramona Schmid.
\newblock Algorithms for compact letter displays: Comparison and evaluation.
\newblock {\em Computational Statistics \& Data Analysis}, 52(2):725--736,
  2007.

\bibitem{gyarfas1990simple}
Andr{\'a}s Gy{\'a}rf{\'a}s.
\newblock A simple lower bound on edge coverings by cliques.
\newblock {\em Discrete Mathematics}, 85(1):103--104, 1990.

\bibitem{helling2018constructing}
Joel Helling, PJ~Ryan, WF~Smyth, and Michael Soltys.
\newblock Constructing an indeterminate string from its associated graph.
\newblock {\em Theoretical Computer Science}, 710:88--96, 2018.

\bibitem{hoover1992complexity}
Douglas~N Hoover.
\newblock Complexity of graph covering problems for graphs of low degree.
\newblock {\em Journal of Combinatorial Mathematics and Combinatorial
  Computing}, 11(187-208):7, 1992.

\bibitem{hsu1991linear}
Wen-Lian Hsu and Kuo-Hui Tsai.
\newblock Linear time algorithms on circular-arc graphs.
\newblock {\em Information Processing Letters}, 40(3):123--129, 1991.

\bibitem{kellerman1973determination}
Eduardo Kellerman.
\newblock Determination of keyword conflict.
\newblock {\em IBM Technical Disclosure Bulletin}, 16(2):544--546, 1973.

\bibitem{kirousis1996linkage}
Lefteris~M Kirousis and Dimitris~M Thilikos.
\newblock The linkage of a graph.
\newblock {\em SIAM Journal on Computing}, 25(3):626--647, 1996.

\bibitem{koch2001enumerating}
Ina Koch.
\newblock Enumerating all connected maximal common subgraphs in two graphs.
\newblock {\em Theoretical Computer Science}, 250(1-2):1--30, 2001.

\bibitem{kou1978covering}
Lawrence~T. Kou, Larry~J. Stockmeyer, and Chak-Kuen Wong.
\newblock Covering edges by cliques with regard to keyword conflicts and
  intersection graphs.
\newblock {\em Communications of the ACM}, 21(2):135--139, 1978.

\bibitem{snapnets}
Jure Leskovec and Andrej Krevl.
\newblock {SNAP Datasets}: {Stanford} large network dataset collection.
\newblock \url{http://snap.stanford.edu/data}, June 2014.

\bibitem{lund1994hardness}
Carsten Lund and Mihalis Yannakakis.
\newblock On the hardness of approximating minimization problems.
\newblock {\em Journal of the ACM (JACM)}, 41(5):960--981, 1994.

\bibitem{ma1989clique}
Shaohan Ma, Walter~D Wallis, and Julin Wu.
\newblock Clique covering of chordal graphs.
\newblock {\em Utilitas Mathematica}, 36:151--152, 1989.

\bibitem{markham2020measurement}
Alex Markham and Moritz Grosse-Wentrup.
\newblock Measurement dependence inducing latent causal models.
\newblock In {\em Conference on Uncertainty in Artificial Intelligence}, pages
  590--599. PMLR, 2020.

\bibitem{matula1983smallest}
David~W Matula and Leland~L Beck.
\newblock Smallest-last ordering and clustering and graph coloring algorithms.
\newblock {\em Journal of the ACM (JACM)}, 30(3):417--427, 1983.

\bibitem{mckee1999topics}
Terry~A McKee and Fred~R McMorris.
\newblock {\em Topics in intersection graph theory}.
\newblock SIAM, 1999.

\bibitem{orlin1977contentment}
James Orlin.
\newblock Contentment in graph theory: covering graphs with cliques.
\newblock In {\em Indagationes Mathematicae (Proceedings)}, volume~80, pages
  406--424. Elsevier, 1977.

\bibitem{piepho2004algorithm}
Hans-Peter Piepho.
\newblock An algorithm for a letter-based representation of all-pairwise
  comparisons.
\newblock {\em Journal of Computational and Graphical Statistics},
  13(2):456--466, 2004.

\bibitem{prisner1995clique}
Erich Prisner.
\newblock Clique covering and clique partition in generalizations of line
  graphs.
\newblock {\em Discrete applied mathematics}, 56(1):93--98, 1995.

\bibitem{pullman1983clique}
Norman~J Pullman.
\newblock Clique coverings of graphs—a survey.
\newblock {\em Combinatorial Mathematics X}, pages 72--85, 1983.

\bibitem{rajagopalan2000handling}
Subramanian Rajagopalan, Manish Vachharajani, and Sharad Malik.
\newblock Handling irregular {ILP} within conventional {VLIW} schedulers using
  artificial resource constraints.
\newblock In {\em Proceedings of the 2000 International Conference on
  Compilers, Architecture, and Synthesis for Embedded Systems}, pages 157--164,
  2000.

\bibitem{roberts1985applications}
Fred~S Roberts.
\newblock Applications of edge coverings by cliques.
\newblock {\em Discrete Applied Mathematics}, 10(1):93--109, 1985.

\bibitem{robertson1984graph}
Neil Robertson and Paul~D Seymour.
\newblock Graph minors. iii. planar tree-width.
\newblock {\em Journal of Combinatorial Theory, Series B}, 36(1):49--64, 1984.

\bibitem{nr}
Ryan~A. Rossi and Nesreen~K. Ahmed.
\newblock The network data repository with interactive graph analytics and
  visualization.
\newblock In {\em AAAI}, 2015.

\bibitem{seidman1983network}
Stephen~B Seidman.
\newblock Network structure and minimum degree.
\newblock {\em Social networks}, 5(3):269--287, 1983.

\bibitem{tomita2006worst}
Etsuji Tomita, Akira Tanaka, and Haruhisa Takahashi.
\newblock The worst-case time complexity for generating all maximal cliques and
  computational experiments.
\newblock {\em Theoretical computer science}, 363(1):28--42, 2006.

\bibitem{ullah2022computing}
Ahammed Ullah.
\newblock Computing clique cover with structural parameterization.
\newblock {\em arXiv preprint arXiv:2208.12438}, 2022.

\end{thebibliography}

\clearpage
\appendix
\onecolumn
\section{Additional Details of Experimental Analyses}
\label{sec:app_exp}

\subsection{Dataset for the analysis of Section \ref{exp_fpt}.}
\label{sec:app_exp_dataset1}
In our preliminary experiments, we found that many random graph instances that are solvable by \emph{CFPT} and \emph{MFPT} in minutes are not solvable by \emph{GFPT} or \emph{GFPT-NI} in 24 hours. Therefore, by extensive testing we selected $n$ and $p$ such that $G(n,p)$ instances are likely to be solvable by \emph{GFPT} in 4 hours. These graphs are chosen so that search trees of \emph{GFPT} are not too small for comparison, but not too large for 4 hours time cut off. For each fixed $n$ and $p$ of Table \ref{tab:datasetFPT}, we generated as many random graphs as needed to find 32 random graphs that have number of uncovered edges between 45 and 70, after the application of data reduction rules of \cite{gramm2006data}.

\begin{table}[H]
\caption{Our test graphs from the $G(n,p)$ model. For fixed $n$ and $p$, we collected 32 $G(n,p)$ instances, i.e., each row stands for an ensemble of 32 random graphs.}
\begin{minipage}{\textwidth}
\centering
\begin{tabular}{lrrrrr}
\hline
\multicolumn{1}{c}{Graph}      & \multicolumn{1}{c}{$n$} & \multicolumn{1}{c}{$p$}             & 
\multicolumn{1}{c}{$m$ (average)} &
\multicolumn{1}{c}{$d$ (average)}  & 
\multicolumn{1}{c}{$\Delta$ (average)}
\\
\hline
er\_150\_0100  & 150  & 0.100 & 1126.22  & 10.34 & 25.44 \\
er\_150\_0102  & 150  & 0.102 & 1147.81  & 10.63 & 25.38 \\
er\_150\_0104  & 150  & 0.104 & 1163.41  & 10.84 & 26.38 \\
er\_150\_0106  & 150  & 0.106 & 1168.22  & 10.75 & 26.25 \\
er\_150\_0108  & 150  & 0.108 & 1183.09  & 11.06 & 26.25 \\
er\_150\_0110  & 150  & 0.110 & 1196.81  & 11.06 & 26.19 \\
er\_250\_007   & 250  & 0.070 & 2218.44  & 12.28 & 30.91 \\
er\_260\_007   & 260  & 0.070 & 2393.63  & 12.91 & 31.66 \\
er\_270\_007   & 270  & 0.070 & 2564.31  & 13.22 & 31.78 \\
er\_280\_007   & 280  & 0.070 & 2739.75  & 13.78 & 32.25 \\
er\_290\_007   & 290  & 0.070 & 2921.41  & 14.16 & 32.91 \\
er\_300\_007   & 300  & 0.070 & 3110.97  & 14.81 & 34.13 \\
er\_200\_009   & 200  & 0.090 & 1765.94  & 12.31 & 29.31 \\
er\_500\_005   & 500  & 0.050 & 6233.88  & 17.84 & 40.97 \\
er\_750\_004   & 750  & 0.040 & 11238.00 & 21.56 & 48.31 \\
er\_1000\_0035 & 1000 & 0.035 & 17413.44 & 25.63 & 53.84 \\
\hline
\end{tabular}
\end{minipage}
\label{tab:datasetFPT}
\end{table}

\subsection{\emph{Completion Rates} of FPT algorithms.}
\label{sec:app_exp_fptCompletion}
Table \ref{tab:fptCompletion} shows percentage of number of graphs (in each of the graph ensembles) for which \emph{clique covers} were computed within 4 hours by the FPT algorithms.

\begin{table}[!ht]
\caption{Completion rates of the FPT algorithms in 4 hours on the 32 graphs of each of the graph ensembles of Table \ref{tab:datasetFPT}.}
\begin{minipage}{\textwidth}
\centering
\begin{tabular}{lrrrr}
\hline
\multicolumn{1}{c}{Graph}  & \multicolumn{1}{c}{\emph{GFPT}}             & 
\multicolumn{1}{c}{\emph{GFPT-NI}} &
\multicolumn{1}{c}{\emph{CFPT}}  & 
\multicolumn{1}{c}{\emph{MFPT}}
\\
\hline
er\_150\_0100  & 46.88 & 62.50 & 96.88  & 100.00 \\
er\_150\_0102  & 43.75 & 65.63 & 100.00 & 100.00 \\
er\_150\_0104  & 31.25 & 53.13 & 93.75  & 96.88  \\
er\_150\_0106  & 43.75 & 62.50 & 93.75  & 96.88  \\
er\_150\_0108  & 31.25 & 50.00 & 100.00 & 100.00 \\
er\_150\_0110  & 18.75 & 31.25 & 96.88  & 100.00 \\
er\_250\_007   & 40.63 & 46.88 & 100.00 & 96.88  \\
er\_260\_007   & 34.38 & 56.25 & 100.00 & 96.88  \\
er\_270\_007   & 28.13 & 50.00 & 100.00 & 100.00 \\
er\_280\_007   & 43.75 & 46.88 & 100.00 & 100.00 \\
er\_290\_007   & 31.25 & 46.88 & 100.00 & 100.00 \\
er\_300\_007   & 37.50 & 43.75 & 100.00 & 96.88  \\
er\_200\_009   & 34.38 & 53.13 & 100.00 & 96.88  \\
er\_500\_005   & 31.25 & 50.00 & 100.00 & 100.00 \\
er\_750\_004   & 28.13 & 46.88 & 100.00 & 96.88  \\
er\_1000\_0035 & 37.50 & 68.75 & 100.00 & 96.88 \\
\hline
\end{tabular}
\end{minipage}
\label{tab:fptCompletion}
\end{table}

\subsection{Runtimes of FPT algorithms.}
\label{sec:app_exp_fptRuntime}

Table \ref{tab:fptRunTime1} and Table \ref{tab:fptRunTime2}. Each of the row statistics is on the instances on which \emph{GFPT} or \emph{GFPT-NI} terminated in 4 hours (see Table \ref{tab:fptCompletion} for completion rate).

\begin{table}[H]
\caption{Average runtime (in seconds) of \emph{GFPT}, \emph{CFPT}, and \emph{MFPT}. Fifth and sixth columns show geometric mean of ratios of runtime of \emph{GFPT} to runtime of \emph{CFPT} and \emph{MFPT} respectively.}
\begin{minipage}{\textwidth}
\centering
\begin{tabular}{@{\extracolsep{6pt}}lrrrrr@{}}
\hline
\multirow{2}{*}{Graph} & \multicolumn{3}{l}{Average Runtime   (seconds)} & \multicolumn{2}{l}{Geometric mean of   ratios of runtimes} \\
\cline{2-4} \cline{5-6}
                       & \emph{GFPT}            & \emph{CFPT}          & \emph{MFPT}          & \emph{CFPT}                         & \emph{MFPT}                        \\
\hline
er\_150\_0100          & 3123.58         & 8.62          & 7.90          & 459.80                       & 404.41                      \\
er\_150\_0102          & 3545.24         & 8.98          & 22.11         & 496.08                       & 316.72                      \\
er\_150\_0104          & 2939.30         & 12.56         & 5.21          & 157.78                       & 370.58                      \\
er\_150\_0106          & 4121.85         & 23.70         & 22.75         & 461.25                       & 408.18                      \\
er\_150\_0108          & 4071.08         & 2.64          & 9.97          & 1267.09                      & 567.08                      \\
er\_150\_0110          & 4117.19         & 0.61          & 7.14          & 4814.52                      & 519.39                      \\
 & & & & & \\
er\_250\_0070          & 2570.21         & 13.66         & 4.59          & 379.88                       & 539.54                      \\
er\_260\_0070          & 3190.58         & 2.14          & 3.45          & 1323.85                      & 547.84                      \\
er\_270\_0070          & 3724.74         & 5.06          & 9.04          & 500.22                       & 304.70                      \\
er\_280\_0070          & 2001.80         & 8.56          & 11.10         & 438.50                       & 240.49                      \\
er\_290\_0070          & 3327.44         & 13.34         & 26.33         & 496.22                       & 249.87                      \\
er\_300\_0070          & 3921.95         & 2.21          & 9.98          & 3395.05                      & 408.77                      \\
 & & & & & \\
er\_200\_009           & 4437.79         & 6.24          & 12.18         & 1458.80                      & 406.40                      \\
er\_500\_005           & 4425.95         & 1.99          & 7.39          & 1501.51                      & 645.31                      \\
er\_750\_0040          & 5227.85         & 1.82          & 5.18          & 2376.71                      & 877.33                      \\
er\_1000\_0035         & 3405.49         & 2.80          & 6.20          & 906.00                       & 452.02            \\
\hline
\end{tabular}
\end{minipage}
\label{tab:fptRunTime1}
\end{table}

\begin{table}[H]
\caption{Average runtime (in seconds) of \emph{GFPT-NI}, \emph{CFPT}, and \emph{MFPT}. Fifth and sixth columns show geometric mean of ratios of runtime of \emph{GFPT-NI} to runtime of \emph{CFPT} and \emph{MFPT} respectively.}
\begin{minipage}{\textwidth}
\centering
 %\begin{tabular}{@{\extracolsep{4pt}}rrrrrrrrr@{}}
\begin{tabular}{@{\extracolsep{6pt}}lrrrrr@{}}
\hline
\multirow{2}{*}{Graph} & \multicolumn{3}{l}{Average Runtime   (seconds)} & \multicolumn{2}{l}{Geometric mean of   ratios of runtimes} \\
\cline{2-4} \cline{5-6}
                       & \emph{GFPT-NI}            & \emph{CFPT}          &  \emph{MFPT}          & \emph{CFPT}                         & \emph{MFPT}                        \\
\hline
er\_150\_0100          & 1682.99         & 13.58          & 21.01        & 136.23                       & 97.15                       \\
er\_150\_0102          & 3199.13         & 45.69          & 44.96        & 113.52                       & 92.12                       \\
er\_150\_0104          & 3280.49         & 233.21         & 52.09        & 72.80                        & 105.75                      \\
er\_150\_0106          & 2280.25         & 35.05          & 21.77        & 136.50                       & 129.77                      \\
er\_150\_0108          & 2570.26         & 12.97          & 20.85        & 338.74                       & 166.53                      \\
er\_150\_0110          & 4312.95         & 54.09          & 66.32        & 694.26                       & 96.58                       \\
                       &                 &                &              &                              &                             \\
er\_250\_0070          & 1587.49         & 13.40          & 8.62         & 98.75                        & 113.62                      \\
er\_260\_0070          & 3801.74         & 39.39          & 16.39        & 322.02                       & 170.75                      \\
er\_270\_0070          & 3325.30         & 8.21           & 22.35        & 243.79                       & 128.06                      \\
er\_280\_0070          & 523.17          & 8.52           & 10.85        & 87.87                        & 50.40                       \\
er\_290\_0070          & 2148.48         & 13.09          & 34.77        & 188.97                       & 64.34                       \\
er\_300\_0070          & 1466.10         & 2.14           & 15.53        & 724.81                       & 74.41                       \\
                       &                 &                &              &                              &                             \\
er\_200\_009           & 3086.79         & 16.72          & 18.90        & 468.59                       & 123.09                      \\
er\_500\_005           & 3283.63         & 3.67           & 14.74        & 493.49                       & 172.50                      \\
er\_750\_0040          & 2581.47         & 3.48           & 43.97        & 544.25                       & 123.23                      \\
er\_1000\_0035         & 2324.95         & 4.03           & 18.02        & 298.09                       & 81.38             \\
\hline
\end{tabular}
\end{minipage}
\label{tab:fptRunTime2}
\end{table}
\onecolumn

\subsection{Dataset for the analysis of Section \ref{exp_greedy}.}
\label{sec:app_exp_dataset2}

Table \ref{tab:dataset_brain} provides mappings of brain networks listed in Table \ref{tab:dataset} to their original names in \url{networkrepository.com}.

\begin{table}[H]
\caption{Dataset of real-world graphs. First 21 graphs are brain networks, middle 3 graphs are gene regulatory networks, and last 6 graphs are social networks.}
\begin{minipage}{\textwidth}
\centering
\begin{tabular}{lrrrrr}
\hline
\multicolumn{1}{c}{Graph}      & \multicolumn{1}{c}{$n$ \footnote{Isolated vertices are not counted.}} & \multicolumn{1}{c}{$m$}             &  \multicolumn{1}{c}{$d$} & \multicolumn{1}{>{\centering\arraybackslash}m{23mm}}{$\Delta$} & \multicolumn{1}{>{\centering\arraybackslash}m{23mm}}{Trivial Cliques} \\
\hline
Brain-1         & 177,584       & 15,669,037    & 386        & 5,419                 & 12,051          \\
Brain-2         & 763,149       & 40,258,003    & 308        & 3,951                 & 5,928           \\
Brain-3         & 737,579       & 50,037,313    & 523        & 5,233                 & 5,418           \\
Brain-4         & 835,832       & 59,548,327    & 460        & 4,614                 & 4,508           \\
Brain-5         & 851,113       & 67,658,067    & 483        & 7,272                 & 3,985           \\
Brain-6         & 428,842       & 79,114,771    & 605        & 7,793                 & 4,987           \\
Brain-7         & 935,265       & 87,273,967    & 367        & 9,475                 & 1,611           \\
Brain-8         & 924,284       & 94,370,886    & 414        & 8,135                 & 1,778           \\
Brain-9         & 701,145       & 103,134,404   & 600        & 11,205                & 5,965           \\
Brain-10        & 714,571       & 112,519,748   & 561        & 6,686                 & 5,639           \\
Brain-11        & 791,219       & 121,907,663   & 767        & 4,841                 & 4,754           \\
Brain-12        & 742,862       & 131,926,773   & 754        & 7,685                 & 6,073           \\
Brain-13        & 692,397       & 140,102,158   & 888        & 8,811                 & 5,355           \\
Brain-14        & 727,487       & 150,443,553   & 1,054      & 7,889                 & 5,648           \\
Brain-15        & 748,521       & 159,835,566   & 1,807      & 11,354                & 5,868           \\
Brain-16        & 728,874       & 171,231,873   & 806        & 9,679                 & 4,448           \\
Brain-17        & 774,886       & 181,569,095   & 1,040      & 15,687                & 3,745           \\
Brain-18        & 707,284       & 191,224,983   & 964        & 8,638                 & 2,930           \\
Brain-19        & 776,644       & 201,198,184   & 960        & 10,129                & 3,963           \\
Brain-20        & 753,905       & 209,976,387   & 1,522      & 14,033                & 4,353           \\
Brain-21        & 784,262       & 267,844,669   & 1,200      & 21,743                & 3,672           \\
                &               &               &            &                       &                 \\
human\_gene1    & 21,890        & 12,323,680    & 2,047      & 7,938                 & 945             \\
human\_gene2    & 14,022        & 9,027,024     & 1,902      & 7,228                 & 326             \\
mouse\_gene     & 43,126        & 14,461,095    & 1,045      & 8,031                 & 5,551           \\
                &               &               &            &                       &                 \\
com-DBLP        & 317,080       & 1,049,866     & 113        & 343                   & 73,142          \\
com-Amazon      & 334,863       & 925,872       & 6          & 549                   & 211,212         \\
com-Youtube     & 1,134,890     & 2,987,624     & 51         & 28,754                & 1,590,346       \\
com-LiveJournal & 3,997,962     & 34,681,189    & 360        & 14,815                & 5,809,704       \\
com-Orkut       & 3,072,441     & 117,185,083   & 253        & 33,313                & 15,988,540      \\
com-Friendster  & 65,608,366    & 1,806,067,135 & 304        & 5,214                 & 361,040,101    
 \\
\hline
\end{tabular}
\end{minipage}
\label{tab:dataset}
\end{table}

\clearpage

\subsection{Comparison Results of Greedy Algorithms.}
\label{sec:app_exp_greedyResults}

Table \ref{tab:cover_gALG} and Table \ref{tab:time_gALG} show  show comparisons of \emph{clique cover} sizes and runtimes of our greedy algorithms against \emph{g-ALG}. Table \ref{tab:cover_cALG} and Table \ref{tab:time_cALG} show comparisons of \emph{clique cover} sizes and runtimes of our greedy algorithms against \emph{c-ALG}. 

\begin{table}[H]
\caption{Clique cover size $|\C|$ of \emph{g-ALG} and relative reduction of clique cover size of \emph{CCSG} and \emph{CCSD} (relative to \emph{g-ALG}) for the test graphs shown in Table \ref{tab:dataset}. Out of 30 instances of Table \ref{tab:dataset}, \textit{g-ALG} terminated only on 13 instances within 24 hours.}
\centering
\begin{tabular}{lrrrr}
\hline
\multirow{2}{*}{Graph} & \multirow{2}{*}{m} & \multicolumn{1}{c}{$|\C|$}       & \multicolumn{2}{c}{Relative   Reduction} \\
\cline{4-5}
                       &                    & \emph{g-ALG}     & \emph{CCSG}                & \emph{CCSD}               \\
\hline                       
Brain-1                & 15,669,037         & 1,271,237 & 3.63                & 0.53               \\
Brain-2                & 40,258,003         & 1,356,286 & 33.74               & 38.84              \\
Brain-3                & 50,037,313         & 1,373,963 & 35.70               & 40.10              \\
Brain-4                & 59,548,327         & 1,629,884 & 35.23               & 40.53              \\
Brain-5                & 67,658,067         & 1,704,729 & 36.08               & 41.43              \\
Brain-7                & 87,273,967         & 2,062,624 & 36.07               & 41.82              \\
Brain-8                & 94,370,886         & 2,098,627 & 36.12               & 41.85              \\
Brain-9                & 103,134,404        & 1,801,620 & 44.16               & 48.24              \\
Brain-10               & 112,519,748        & 1,915,655 & 45.08               & 49.08              \\
Brain-11               & 121,907,663        & 2,133,026 & 44.49               & 48.61              \\
com\_DBLP              & 1,049,866          & 238,870   & 0.76                & 0.70               \\
com\_Amazon            & 925,872            & 447,488   & 0.50                & 0.61               \\
com\_Youtube           & 2,987,624          & 2,191,576 & 1.20                & -0.39              \\
\hline
\multicolumn{3}{l}{Average relative   reduction}        & 27.14               & 30.15                 \\
\hline
\end{tabular}
\label{tab:cover_gALG}
\end{table}

\begin{table}[H]
\caption{Runtimes of \emph{g-ALG} and relative performances of \emph{CCSG} and \emph{CCSD} (relative to \textit{g-ALG}) on the test graphs of Table \ref{tab:dataset}. Out of 30 instances of Table \ref{tab:dataset}, \textit{g-ALG} terminated only on 13 instances within 24 hours.}
\centering
\begin{tabular}{lrrrr}
\hline
\multirow{2}{*}{Graph} & \multirow{2}{*}{m} & \multicolumn{1}{c}{Time (seconds)} & \multicolumn{2}{c}{Relative Performance} \\
\cline{4-5}
                       &                    & \multicolumn{1}{c}{\emph{g\_ALG}}         & \emph{CCSG}                & \emph{CCSD}               \\
\hline                       
Brain-1                & 15669037           & 79,834.90      & 848.53              & 936.25             \\
Brain-2                & 40258003           & 2,930.59       & 45.91               & 56.51              \\
Brain-3                & 50,037,313         & 11,065.80      & 123.33              & 151.51             \\
Brain-4                & 59,548,327         & 11,097.10      & 100.28              & 124.69             \\
Brain-5                & 67,658,067         & 12,319.00      & 92.76               & 113.82             \\
Brain-7                & 87,273,967         & 12,987.80      & 69.92               & 89.67              \\
Brain-8                & 94,370,886         & 16,010.10      & 75.78               & 95.30              \\
Brain-9                & 103,134,404        & 55,937.60      & 194.45              & 249.38             \\
Brain-10               & 112,519,748        & 53,891.40      & 175.69              & 224.94             \\
Brain-11               & 121,907,663        & 53,756.10      & 161.83              & 206.78             \\
com\_DBLP              & 1,049,866          & 1.83           & 2.27                & 2.07               \\
com\_Amazon            & 925,872            & 2.02           & 1.79                & 1.76               \\
com\_Youtube           & 2,987,624          & 7,150.57       & 1068.57             & 1073.20            \\
\hline
\multicolumn{3}{l}{Geometric mean of relative performances}                               & 80.22 & 93.91 \\
\hline
\end{tabular}
\label{tab:time_gALG}
\end{table}

\begin{table}[H]
\caption{\emph{Clique cover} size $|\C|$ of \emph{c-ALG} and relative reduction of \emph{clique cover} size of \emph{CCSG} and \emph{CCSD} (relative to \emph{c-ALG}) for the test graphs shown in Table \ref{tab:dataset}. For com\_Friendster, \emph{c-ALG} did not terminate in 24 hours, in the meantime included more than 942.57 million cliques in the \emph{clique cover}}
\centering
\begin{tabular}{lrrrr}
\hline
\multirow{2}{*}{Graph} & \multirow{2}{*}{$m$} & \multicolumn{1}{c}{$|\C|$}                 & \multicolumn{2}{c}{Relative   Reduction} \\
\cline{4-5}
                       &                    & \emph{c-ALG}              & \emph{CCSG}                & \emph{CCSD}               \\
\hline
Brain-1                & 15,669,037         & 1,422,541           & 13.97               & 11.20              \\
Brain-2                & 40,258,003         & 1,019,877           & 11.76               & 18.54              \\
Brain-3                & 50,037,313         & 1,009,000           & 12.32               & 18.32              \\
Brain-4                & 59,548,327         & 1,195,909           & 11.64               & 18.87              \\
Brain-5                & 67,658,067         & 1,230,938           & 11.40               & 18.81              \\
Brain-6                & 79,114,771         & 2,648,702           & 12.95               & 13.92              \\
Brain-7                & 87,273,967         & 1,480,951           & 10.94               & 18.94              \\
Brain-8                & 94,370,886         & 1,504,777           & 10.88               & 18.87              \\
Brain-9                & 103,134,404        & 1,187,423           & 15.13               & 21.34              \\
Brain-10               & 112,519,748        & 1,253,065           & 15.91               & 22.03              \\
Brain-11               & 121,907,663        & 1,393,330           & 14.91               & 21.24              \\
Brain-12               & 131,926,773        & 1,309,727           & 16.51               & 22.09              \\
Brain-13               & 140,102,158        & 1,280,755           & 16.73               & 22.21              \\
Brain-14               & 150,443,553        & 1,328,541           & 16.89               & 22.19              \\
Brain-15               & 159,835,566        & 1,384,057           & 16.74               & 22.31              \\
Brain-16               & 171,231,873        & 1,447,271           & 16.63               & 22.40              \\
Brain-17               & 181,569,095        & 1,555,077           & 15.67               & 22.61              \\
Brain-18               & 191,224,983        & 1,490,147           & 16.64               & 23.04              \\
Brain-19               & 201,198,184        & 1,585,659           & 17.11               & 22.94              \\
Brain-20               & 209,976,387        & 1,553,300           & 16.38               & 22.87              \\
Brain-21               & 267,844,669        & 2,827,079           & 17.47               & 20.80              \\
human\_gene1           & 12,323,680         & 177,210             & 20.38               & 13.65              \\
human\_gene2           & 9,027,024          & 114,916             & 19.48               & 12.79              \\
mouse\_gene            & 14,461,095         & 479,361             & 19.00               & 12.09              \\
com-DBLP               & 1,049,866          & 238,573             & 0.58                & 0.52               \\
com-Amazon             & 925,872            & 448,157             & 0.79                & 0.89               \\
com-Youtube            & 2,987,624          & 2,202,876           & 3.02                & 1.46               \\
com-LiveJournal        & 34,681,189         & 14,137,013          & 4.11                & 3.23               \\
com-Orkut              & 117,185,083        & 45,361,490          & 5.49                & 4.17               \\
\hline
\multicolumn{3}{l}{Average relative   reduction}                  & 13.15               & 16.36              \\
\hline
                       &                    & \multicolumn{3}{l}{Number of cliques (million)}                    \\
\cline{3-5}                 
com-Friendster         & 1,806,067,135      & \textgreater 942.57 & 919.20              & 922.49        \\
\hline
\end{tabular}
\label{tab:cover_cALG}
\end{table}

\begin{table}[H]
\caption{Runtimes of \emph{c-ALG} and relative performances of \emph{CCSG} and \emph{CCSD} (relative to \textit{c-ALG}) on the test graphs of Table \ref{tab:dataset}.}
\label{tab:time_cALG}
\centering
\begin{tabular}{lrrrr}
\hline
\multirow{2}{*}{Graph} & \multirow{2}{*}{m} & \multicolumn{1}{c}{Time (seconds)}  & \multicolumn{2}{c}{Relative Performance} \\
\cline{4-5}
                       &                    & \multicolumn{1}{c}{\emph{c-ALG}}           & \emph{CCSG}                & \emph{CCSD}               \\
\hline                       
Brain-1                & 15,669,037         & 169.53          & 1.80                & 1.99               \\
Brain-2                & 40,258,003         & 170.81          & 2.68                & 3.29               \\
Brain-3                & 50,037,313         & 226.00          & 2.52                & 3.09               \\
Brain-4                & 59,548,327         & 321.15          & 2.90                & 3.61               \\
Brain-5                & 67,658,067         & 338.52          & 2.55                & 3.13               \\
Brain-6                & 79,114,771         & 862.73          & 1.88                & 2.10               \\
Brain-7                & 87,273,967         & 481.30          & 2.59                & 3.32               \\
Brain-8                & 94,370,886         & 557.21          & 2.64                & 3.32               \\
Brain-9                & 103,134,404        & 653.28          & 2.27                & 2.91               \\
Brain-10               & 112,519,748        & 701.67          & 2.29                & 2.93               \\
Brain-11               & 121,907,663        & 795.12          & 2.39                & 3.06               \\
Brain-12               & 131,926,773        & 1065.05         & 2.71                & 3.45               \\
Brain-13               & 140,102,158        & 1059.41         & 2.45                & 3.08               \\
Brain-14               & 150,443,553        & 1261.73         & 2.72                & 3.31               \\
Brain-15               & 159,835,566        & 1478.49         & 2.91                & 3.58               \\
Brain-16               & 171,231,873        & 1517.51         & 2.75                & 3.35               \\
Brain-17               & 181,569,095        & 1511.06         & 2.41                & 3.02               \\
Brain-18               & 191,224,983        & 1767.84         & 2.77                & 3.49               \\
Brain-19               & 201,198,184        & 1968.44         & 3.06                & 3.82               \\
Brain-20               & 209,976,387        & 1974.25         & 2.60                & 3.20               \\
Brain-21               & 267,844,669        & 2996.41         & 2.30                & 2.82               \\
human\_gene1           & 12,323,680         & 494.86          & 1.84                & 1.70               \\
human\_gene2           & 9,027,024          & 352.61          & 1.80                & 1.72               \\
mouse\_gene            & 14,461,095         & 260.90          & 1.95                & 2.00               \\
com\_DBLP              & 1,049,866          & 2.58            & 3.21                & 2.92               \\
com\_Amazon            & 925,872            & 3.50            & 3.11                & 3.05               \\
com\_Youtube           & 2,987,624          & 16.53           & 2.47                & 2.48               \\
com\_LiveJournal       & 34,681,189         & 194.50          & 3.02                & 3.04               \\
com\_Orkut             & 117,185,083        & 1150.58         & 3.11                & 3.33               \\
\hline
\multicolumn{3}{l}{Geometric Mean of relative performances}                            & 2.51                & 2.91               \\
\hline
                       &                    & \multicolumn{3}{c}{Time (hour)}                        \\
\cline{3-5}                       
com\_Friendster        & 1,806,067,135      & \textgreater 24 & 2.57                & 2.26              \\
\hline
\end{tabular}
\end{table}

\clearpage

\subsection{Comparison results of \emph{CCSG} and FPT algorithms.}
\label{sec:app_exp_greedyFPT}
Table \ref{tab:ccsg_mfpt} show comparison results of \emph{CCSG} and \emph{MFPT}, and Table \ref{tab:ccsg_cfpt} show comparison results of \emph{CCSG} and \emph{CFPT}.

\begin{table}[H]
\caption{Average \emph{clique cover} size $|\C|$ and runtime of \emph{CCGS} and \emph{MFPT} on the graph ensembles of Table \ref{tab:datasetFPT}. Second last column shows geometric mean of ratios of \emph{clique cover} size of \emph{CCSG} to \emph{clique cover} size of \emph{MFPT}. Last column shows geometric mean of ratios of runtime of \emph{MFPT} to runtime of \emph{CCSG}.}
\centering
\begin{tabular}{@{\extracolsep{6pt}}lrrrrrr@{}}
%\begin{tabular}{lrrrrrr}
\hline
\multirow{2}{*}{Graph} & \multicolumn{2}{c}{$|\C|$ (average)} & \multicolumn{2}{l}{Runtime (average)} & \multicolumn{1}{c}{Geo. mean of ratios} & \multicolumn{1}{c}{Geo. mean of ratios} \\
\cline{2-3} \cline{4-5}
                        & \emph{CCSG}                    & \emph{MFPT}                    & \emph{CCSG}             & \emph{MFPT}               & \multicolumn{1}{c}{of clique cover size}     & \multicolumn{1}{c}{of runtime} \\
\hline                        
er\_150\_0100           & 622.03                  & 611.41                  & 0.001            & 354.19             & 1.017                    & 30,800.67         \\
er\_150\_0102           & 625.03                  & 613.66                  & 0.001            & 569.32             & 1.018                    & 46,283.01         \\
er\_150\_0104           & 627.03                  & 615.65                  & 0.001            & 319.25             & 1.018                    & 30,700.69         \\
er\_150\_0106           & 625.74                  & 613.55                  & 0.001            & 1,037.81           & 1.020                    & 40,508.13         \\
er\_150\_0108           & 629.94                  & 617.72                  & 0.001            & 706.91             & 1.020                    & 52,906.64         \\
er\_150\_0110           & 640.88                  & 627.53                  & 0.001            & 884.55             & 1.021                    & 187,999.12        \\
                        &                         &                         &                  &                    &                          &                   \\
er\_250\_0070           & 1,312.32                & 1,295.90                & 0.002            & 218.06             & 1.013                    & 10,727.51         \\
er\_260\_0070           & 1,401.39                & 1,383.90                & 0.002            & 763.76             & 1.013                    & 16,957.37         \\
er\_270\_0070           & 1,495.16                & 1,476.22                & 0.002            & 393.62             & 1.013                    & 27,719.52         \\
er\_280\_0070           & 1,580.63                & 1,555.97                & 0.002            & 1,503.47           & 1.016                    & 35,949.35         \\
er\_290\_0070           & 1,677.81                & 1,651.78                & 0.002            & 377.25             & 1.016                    & 32,558.44         \\
er\_300\_0070           & 1,775.19                & 1,748.48                & 0.002            & 707.30             & 1.015                    & 27,660.49         \\
                        &                         &                         &                  &                    &                          &                   \\
er\_200\_009            & 973.03                  & 955.81                  & 0.001            & 126.28             & 1.018                    & 21,467.68         \\
er\_500\_005            & 3,732.63                & 3,685.84                & 0.005            & 1,213.04           & 1.013                    & 13,188.51         \\
er\_750\_0040           & 6,822.45                & 6,738.94                & 0.009            & 786.22             & 1.012                    & 5,692.55          \\
er\_1000\_0035          & 10,551.19               & 10,417.03               & 0.015            & 708.18             & 1.013                    & 2,426.77         \\
\hline
\end{tabular}
\label{tab:ccsg_mfpt}
\end{table}

\begin{table}[H]
\caption{Average \emph{clique cover} size $|\C|$ and runtime of \emph{CCGS} and \emph{CFPT} on the graph ensembles of Table \ref{tab:datasetFPT}. Second last column shows geometric mean of ratios of \emph{clique cover} size of \emph{CCSG} to \emph{clique cover} size of \emph{CFPT}. Last column shows geometric mean of ratios of runtime of \emph{CFPT} to runtime of \emph{CCSG}.}
\centering
\begin{tabular}{@{\extracolsep{6pt}}lrrrrrr@{}}
%\begin{tabular}{lrrrrrr}
\hline
\multirow{2}{*}{Graph} & \multicolumn{2}{c}{$|\C|$ (average)} & \multicolumn{2}{l}{Runtime (average)} & \multicolumn{1}{c}{Geo. mean of ratios} & \multicolumn{1}{c}{Geo. mean of ratios} \\
\cline{2-3} \cline{4-5}
                        & \emph{CCSG}                    & \emph{CFPT}                    & \emph{CCSG}             & \emph{CFPT}               & \multicolumn{1}{c}{of clique cover size}     & \multicolumn{1}{c}{of runtime} \\
\hline                        
er\_150\_0100           & 622.03                  & 611.32                  & 0.001             & 214.37            & 1.018                    & 19,290.66         \\
er\_150\_0102           & 625.03                  & 613.66                  & 0.001             & 114.41            & 1.019                    & 20,203.29         \\
er\_150\_0104           & 626.77                  & 615.47                  & 0.001             & 475.88            & 1.018                    & 31,971.71         \\
er\_150\_0106           & 625.80                  & 613.87                  & 0.001             & 190.10            & 1.019                    & 27,537.44         \\
er\_150\_0108           & 629.94                  & 617.72                  & 0.001             & 427.45            & 1.020                    & 27,949.53         \\
er\_150\_0110           & 640.42                  & 627.10                  & 0.001             & 295.32            & 1.021                    & 40,520.17         \\
                        &                         &                         &                   &                   &                          &                   \\
er\_250\_0070           & 1,312.97                & 1,296.38                & 0.002             & 248.23            & 1.013                    & 7,433.56          \\
er\_260\_0070           & 1,401.78                & 1,384.13                & 0.002             & 407.72            & 1.013                    & 6,065.27          \\
er\_270\_0070           & 1,495.16                & 1,476.22                & 0.002             & 154.87            & 1.013                    & 7,374.01          \\
er\_280\_0070           & 1,580.63                & 1,555.97                & 0.002             & 252.40            & 1.016                    & 8,149.40          \\
er\_290\_0070           & 1,677.81                & 1,651.78                & 0.002             & 68.65             & 1.016                    & 5,138.18          \\
er\_300\_0070           & 1,775.09                & 1,748.44                & 0.002             & 306.79            & 1.015                    & 4,740.27          \\
                        &                         &                         &                   &                   &                          &                   \\
er\_200\_009            & 973.75                  & 956.25                  & 0.001             & 78.18             & 1.018                    & 8,253.58          \\
er\_500\_005            & 3,732.63                & 3,685.84                & 0.005             & 26.11             & 1.013                    & 5,397.34          \\
er\_750\_0040           & 6,823.72                & 6,740.56                & 0.009             & 63.14             & 1.012                    & 921.63            \\
er\_1000\_0035          & 10,547.88               & 10,414.22               & 0.015             & 49.08             & 1.013                    & 379.11           \\
\hline
\end{tabular}
\label{tab:ccsg_cfpt}
\end{table}

\twocolumn

\subsection{\emph{MFPT} based on different maximal clique enumeration algorithms.}
\label{sec:app_exp_mce}

For maximal clique enumeration, \cite{gramm2006data} used a algorithm proposed by \cite{koch2001enumerating}. We evaluated \emph{MFPT} with three different maximal clique enumeration algorithms: \cite{koch2001enumerating} (\emph{Koch}), \cite{tomita2006worst} (\emph{TTT}), \cite{eppstein2013listing} (\emph{ELS}) and found that implementation of \emph{MFPT} based on \emph{ELS} performs better than implementations based on \emph{Koch} and \emph{TTT}. Our implementation of \emph{ELS} is comparable to \emph{ELS-bare} \cite{eppstein2013listing}, and our implementations of \emph{Koch} and \emph{TTT} are comparable to \emph{TTT-lists} \cite{eppstein2013listing}. Table \ref{tab:mceComp} shows the comparison results of different implementations of \emph{MFPT}. Three of our tested algorithms (\emph{MFPT}, \emph{GFPT}, and \emph{GFPT-NI}) are based on maximal clique enumeration, and for all three algorithms, we have used \emph{ELS}.

\begin{table}[H]
\caption{Average runtime (in seconds) of \emph{MFPT} using different maximal clique enumeration algorithms: \emph{Koch} \cite{koch2001enumerating}, \emph{TTT} \cite{tomita2006worst}, and \emph{ELS} \cite{eppstein2013listing}.}
\begin{tabular}{lrrrr}
\hline
\multirow{2}{*}{Graph} & \multicolumn{3}{l}{Average runtime   (second) of MFPT} \\
\cline{2-4}
                       & Koch             & TTT              & ELS              \\
\hline                       
er\_150\_0100          & 399.74           & 394.22           & 354.19           \\
er\_150\_0102          & 649.06           & 629.29           & 569.32           \\
er\_150\_0104          & 347.66           & 345.65           & 319.25           \\
er\_150\_0106          & 663.60           & 661.24           & 612.69           \\
er\_150\_0108          & 783.42           & 781.28           & 706.91           \\
er\_150\_0110          & 966.23           & 963.55           & 884.55           \\
                       &                  &                  &                  \\
er\_250\_0070          & 236.07           & 232.80           & 218.06           \\
er\_260\_0070          & 844.45           & 840.69           & 763.76           \\
er\_270\_0070          & 426.90           & 415.12           & 393.62           \\
er\_280\_0070          & 1659.90          & 1620.85          & 1503.47          \\
er\_290\_0070          & 407.90           & 405.58           & 377.25           \\
er\_300\_0070          & 788.59           & 772.98           & 707.30           \\
                       &                  &                  &                  \\
er\_200\_009           & 140.98           & 138.34           & 126.28           \\
er\_500\_005           & 1338.58          & 1330.68          & 1213.04          \\
er\_750\_0040          & 413.96           & 408.33           & 379.77           \\
er\_1000\_0035         & 764.38           & 741.85           & 708.18          \\
\hline
\end{tabular}
\label{tab:mceComp}
\end{table}

\newpage

\subsection{Brain Networks.}
\label{sec:app_exp_brain}
Table \ref{tab:dataset_brain}.
\begin{table}[H]
\caption{Mappings of names of the brain networks shown in Table \ref{tab:dataset} to their original names in \url{networkrepository.com} \cite{nr}}
\centering
\begin{tabular}{ll}
\hline
Graph    & Brain Network Name                        \\
\hline
Brain-1  & bn-human-BNU\_1\_0025890\_session\_1    \\
Brain-2  & bn-human-Jung2015\_M87125334            \\
Brain-3  & bn-human-Jung2015\_M87104509            \\
Brain-4  & bn-human-Jung2015\_M87119472            \\
Brain-5  & bn-human-Jung2015\_M87104300            \\
Brain-6  & bn-human-Jung2015\_M87118347            \\
Brain-7  & bn-human-Jung2015\_M87102575            \\
Brain-8  & bn-human-Jung2015\_M87122310            \\
Brain-9  & bn-human-BNU\_1\_0025914\_session\_2    \\
Brain-10 & bn-human-BNU\_1\_0025916\_session\_1    \\
Brain-11 & bn-human-Jung2015\_M87118219            \\
Brain-12 & bn-human-BNU\_1\_0025889\_session\_2    \\
Brain-13 & bn-human-BNU\_1\_0025873\_session\_2-bg \\
Brain-14 & bn-human-BNU\_1\_0025868\_session\_1-bg \\
Brain-15 & bn-human-BNU\_1\_0025918\_session\_1    \\
Brain-16 & bn-human-Jung2015\_M87112427            \\
Brain-17 & bn-human-Jung2015\_M87118465            \\
Brain-18 & bn-human-Jung2015\_M87104201            \\
Brain-19 & bn-human-Jung2015\_M87101705            \\
Brain-20 & bn-human-Jung2015\_M87125286            \\
Brain-21 & bn-human-Jung2015\_M87113878           \\
\hline
\end{tabular}
\label{tab:dataset_brain}
\end{table}
\end{document}